\newcommand{\formuladots}{\cdots}
\newcommand{\BIGOH}[1]{\mathrm{O} \left( #1 \right)}
\newcommand{\Bigoh}[1]{\mathrm{O} \bigl( #1 \bigr)}
\newcommand{\bigoh}[1]{\mathrm{O} ( #1 )}
\newcommand{\LITTLEOH}[1]{\mathrm{o} \left( #1 \right)}
\newcommand{\Littleoh}[1]{\mathrm{o} \bigl( #1 \bigr)}
\newcommand{\littleoh}[1]{\mathrm{o} ( #1 )}
\newcommand{\Bigtheta}[1]{\Theta \bigl( #1 \bigr)}
\newcommand{\BIGOMEGA}[1]{\Omega \left( #1 \right)}
\newcommand{\Bigomega}[1]{\Omega \bigl( #1 \bigr)}
\newcommand{\bigomega}[1]{\Omega ( #1 )}
\newcommand{\Rplus}     {\mathbb{R}^{+}}
\newcommand{\N}         {\mathbb{N}}
\newcommand{\Nplus}     {\mathbb{N}^{+}}
\providecommand{\abs}[1]{\lvert#1\rvert}
\providecommand{\Abs}[1]{\bigl\lvert#1\bigr\rvert}
\newcommand{\ceiling}[1]{\lceil #1 \rceil}
\newcommand{\Floor}[1]{\bigl \lfloor #1 \bigr \rfloor}
\newcommand{\FLOOR}[1]{\left \lfloor #1 \right \rfloor}
\newcommand{\MAXOFEXPR}[2][]{\max_{#1} \left\{ #2 \right\}}
\newcommand{\MINOFEXPR}[2][]{\min_{#1} \left\{ #2 \right\}}
\newcommand{\Maxofexpr}[2][]{\max_{#1} \bigl\{ #2 \bigr\}}
\newcommand{\Minofexpr}[2][]{\min_{#1} \bigl\{ #2 \bigr\}}
\newcommand{\maxofexpr}[2][]{\max_{#1} \{ #2 \}}
\newcommand{\minofset}[3][:]{\min \{ #2 #1 #3 \}}
\newcommand{\MAXOFSET}[3][:]{\ifthenelse{\equal{#1}{;}}{\MAXOFEXPR{ #2 \,;\, #3 }}
     {\ifthenelse{\equal{#1}{:}}{\MAXOFEXPR{ #2 \,:\, #3 }}
     {\max \twincommandJN{\left\{}{#2}{\left#1}{\right}{\,#3}{\right\}}}}}
\newcommand{\MINOFSET}[3][:]{\ifthenelse{\equal{#1}{;}}{\MINOFEXPR{ #2 \,;\, #3 }}
     {\ifthenelse{\equal{#1}{:}}{\MINOFEXPR{ #2 \,:\, #3 }}
     {\min \twincommandJN{\left\{}{#2}{\left#1}{\right}{\,#3}{\right\}}}}}
\newcommand{\Maxofset}[3][:]{\ifthenelse{\equal{#1}{;}}{\Maxofexpr{ #2 \,;\, #3 }}
     {\ifthenelse{\equal{#1}{:}}{\Maxofexpr{ #2 \,:\, #3 }}
     {\max \twincommandJN{\bigl\{}{#2}{\bigl#1}{\bigr}{\,#3}{\bigr\}}}}}
\newcommand{\Minofset}[3][:]{\ifthenelse{\equal{#1}{;}}{\Minofexpr{ #2 \,;\, #3 }}
     {\ifthenelse{\equal{#1}{:}}{\Minofexpr{ #2 \,:\, #3 }}
     {\min \twincommandJN{\bigl\{}{#2}{\bigl#1}{\bigr}{\,#3}{\bigr\}}}}}
\DeclareMathOperator{\Expop}{E}
\newcommand{\PROB}[2][]{\Pr_{#1} \left[ #2 \right]}
\newcommand{\Prob}[2][]{\Pr_{#1} \bigl[ #2 \bigr]}
\newcommand{\prob}[2][]{\Pr_{#1} [ #2 ]}
\newcommand{\Expectation}[2][]{\Expop_{#1} \bigl[ #2 \bigr]}
\newcommand{\expectation}[2][]{\Expop_{#1} [ #2 ]}
\newcommand{\twincommandJN}[6]{#1#2#3\vphantom{#2#5}\mspace{-2.05mu}#4.#5#6}
\newcommand{\funcdescr}[3]{\ensuremath{ #1 : #2 \to #3}}
\newcommand{\domainof}[1]{\ensuremath{\mathrm{dom} ( #1 )}}
\newcommand{\boundary}[1]{\ensuremath{\partial #1}}
\newcommand{\set}[1]{\{ #1 \}}
\newcommand{\Set}[1]{\bigl\{ #1 \bigr\}}
\newcommand{\setdescr}[3][\mid]{\set{ #2 #1 #3 }}
\newcommand{\Setdescr}[3][|]{\ifthenelse{\equal{#1}{;}}{\Set{ #2 \,;\, #3 }}
     {\ifthenelse{\equal{#1}{:}}{\Set{ #2 \,:\, #3 }}
     {\twincommandJN{\bigl\{}{#2\,}{\bigl#1}{\bigr}{\,#3}{\bigr\}}}}}
\newcommand{\Setsize}[1]{\bigl\lvert#1\bigr\rvert}
\newcommand{\setsize}[1]{\lvert#1\rvert}
\newcommand{\intersection}{\cap}
\newcommand{\union}{\cup}
\newcommand{\disjointunion}{\overset{.}{\cup}}
\newcommand{\Lor}{\bigvee}
\newcommand{\olnot}[1]{\overline{#1}}
\newcommand{\complclassformat}[1]{\textrm{\upshape{\textsf{#1}}}\xspace}
\newcommand{\cocomplclass}[1]{\textrm{\upshape{\textsf{co#1}}}\xspace}
\newcommand{\NP}{\complclassformat{NP}}
\newcommand{\coNP}{\cocomplclass{NP}}
\newcommand{\Ppoly}{\complclassformat{P/poly}}
\newcommand{\introduceterm}[1]{{\emph{#1}}}
\newcommand{\eqperiod}{\enspace .}
\newcommand{\eqcomma}{\enspace ,}
\newcommand{\wrt}{with respect to\xspace}
\newcommand{\ie}{i.e.,\ }
\newcommand{\st}{such that\xspace}}
\newcommand{\wolog}{without loss of generality\xspace}
\newcommand{\aas}{asymptotically almost surely\xspace}
\newcommand{\refsec}[1]{Section~\ref{#1}}
\newcommand{\reffig}[1]{Figure~\ref{#1}}
\newcommand{\refth}[1]{Theorem~\ref{#1}}
\newcommand{\refthm}[1]{Theorem~\ref{#1}}
\newcommand{\reflem}[1]{Lemma~\ref{#1}}
\newcommand{\reftwolems}[2]{Lemmas~\ref{#1} and~\ref{#2}}
\newcommand{\refcor}[1]{Corollary~\ref{#1}}
\newcommand{\refdef}[1]{Definition~\ref{#1}}
\newcommand{\Reflem}[1]{Lemma~\ref{#1}}
\newcommand{\refeq}[1]{\eqref{#1}}}
\renewcommand{\refeq}[1]{\eqref{#1}}}
\newcommand{\proofstd}{\pi}
\newcommand{\proofpi}{\pi}
\newcommand{\refpi}{\pi}
\newcommand{\derivof}[4][\derives]
        {{\ensuremath{{#2} : {#3} \, {#1}\, {#4}}}}
\newcommand{\refof}[2]{\derivof{#1}{#2}{\bot}}
\newcommand{\emptycl}{\bot}
\newcommand{\formf}{\ensuremath{F}}
\newcommand{\varx}{\ensuremath{x}}
\newcommand{\clb}{\ensuremath{B}}
\newcommand{\clc}{\ensuremath{C}}
\newcommand{\cld}{\ensuremath{D}}
\newcommand{\setsofvarsorlit}[2]{\mathit{#1}({#2})}
\newcommand{\vars}[1]{\setsofvarsorlit{Vars}{#1}}
\newcommand{\restrrho}{\rho}
\newcommand{\restrict}[2]{{{#1}\!\!\upharpoonright_{#2}}}
\newcommand{\genericformsmall}[2]{\mathit{#1}( #2 )}
\newcommand{\lengthstd}{L}
\newcommand{\length}[1]{\genericformsmall{L}{#1}}
\newcommand{\width}[2][]{\genericformsmall{W_{#1}}{#2}}
\newcommand{\formulaformat}[1]{\mathit{#1}}
\newcommand{\graphfphpnot}[1][G]{\formulaformat{FPHP}({#1})}
\newcommand{\pigeonclause}[1]{P^{#1}}
\newcommand{\holeclause}[3]{H^{#1,#2}_{#3}}
\newcommand{\functionclause}[3]{F^{#1}_{#2,#3}}
\newcommand{\ontoclause}[1]{S_{#1}}
\newcommand{\fakeaxiomintro}{fake axiom\xspace}
\newcommand{\quotepaper}[1]{\emph{``#1''}}
\newcommand{\numpigeons}{m}
\newcommand{\numholes}{n}
\newcommand{\heavy}{heavy\xspace}
\newcommand{\sheavy}{super-heavy\xspace}
\newcommand{\superheavy}{\sheavy}
\newcommand{\light}{light\xspace}
\newcommand{\GFPHP}{G\text{-}FPHP}
\newcommand{\PM}{PM(G)}
\renewcommand{\GFPHP}{\graphfphpnot[G]}
\renewcommand{\PM}{\formulaformat{PM}(G)}
\newcommand{\neigh}[2][]{N_{{#1}}(#2)}
\newcommand{\uniqueNeigh}[2][]{\boundary_{{#1}} ({#2})}
\newcommand{\uniqueneigh}[2][]{\boundary_{{#1}} ({#2})}
\renewcommand{\deg}[2][XXXXXXXXXXXX]{\Delta_{#1}(#2)}
\newcommand{\diff}{\xi}
\newcommand{\cdiff}{4}
\newcommand{\ontocdiff}{64}
\newcommand{\dimLi}{\deg[G]{i} - d_i + \delta_i/4}
\newcommand{\ontodimLi}{1/2(\deg[G]{v} - d_v + \delta_v/2)}
\newcommand{\GdegMax}{\Delta}
\newcommand{\gdegmax}{\GdegMax}
\newcommand{\Gdist}{\mathcal{G}}
\newcommand{\constFPHPnew}{16}
\newcommand{\constonto}{128}
\newcommand{\V}[1][]{V_{#1}}
\newcommand{\Vl}{\V[L]}
\newcommand{\Vr}{\V[R]}
\newcommand{\Vp}{\V[P]}
\newcommand{\Vh}{\V[H]}
\newcommand{\Gprime}{G'}
\newcommand{\GprimeConstantDeg}{8}
\newcommand{\GprimeConstantExp}{12}
\newcommand{\Vbad}[2][]{\widetilde V^{#1}(#2)}
\newcommand{\VbadThick}[1]{\overline V(#1)}
\newcommand{\exponentalpha}{\varepsilon}
\newcommand{\Ax}{\ensuremath\mathcal{A}}
\newcommand{\thr}[1]{ \ensuremath d_{{#1}} }
\newcommand{\vecthr}{\vec{d}}
\newcommand{\vecthrdef}[1]{\vecthr = \left(\thr{1}, \ldots, \thr{#1}\right)}
\newcommand{\adv}[1]{\delta_{{#1}}}
\newcommand{\vecadv}{\vec{\delta}}
\newcommand{\vecadvdef}[1]{\vecadv = \left(\adv{1}, \ldots, \adv{#1}\right)}
\newcommand{\Pthick}[1]{\ensuremath P_{\vecthr,\vecadv}(#1)}
\newcommand{\Pfat}[1]{\ensuremath P_{\vecthr}(#1)}
\newcommand{\linspace}[1][]{\ensuremath \Lambda_{{#1}}}
\newcommand{\linmap}[1][]{\ensuremath \lambda_{{#1}}}
\newcommand{\linmapof}[2][]{\linmap[#1](#2)}
\newcommand{\linmapl}[2][]{\ensuremath \lambda^{{#2}}_{{#1}}}
\newcommand{\linmaplof}[3][]{\linmapl[#1]{#2}(#3)}
\newcommand{\Vfat}[1]{\ensuremath V_{\vecthr}(#1)}
\newcommand{\Vthick}[1]{\ensuremath V_{\vecthr,\vecadv}(#1)}
\newcommand{\fakeaxiom}{$\bigl( w_0, \vecthr \bigr)$-axiom\xspace}
\newcommand{\pseudowidth}{pseudo-width\xspace}
\newcommand{\PSEUDOWIDTH}{Pseudo-Width\xspace}
\renewcommand{\width}[1]{\ensuremath w_{\vecthr,\vecadv}( #1 )}
\newcommand{\cgamma}{\ensuremath \gamma}
\newcommand{\cgammap}{\ensuremath \gamma'}
\newcommand{\indexvec}{\ell}
\newcommand{\nvecs}{L}
\newcommand{\ri}[1]{r_{#1}}
\newcommand{\riof}[2]{r_{#2}(#1)}
\newcommand{\vecr}{\vec{r}}
\newcommand{\vecrof}[1]{\vec{r}(#1)}
\newcommand{\vecrdef}[1]{\vecr = \left(\ri{1}, \ldots, \ri{#1}\right)}
\newcommand{\vecrofdef}[2]{\vecrof{#1} = 
\left(\riof{#1}{1}, \ldots, \riof{#1}{#2}\right)}
\newcommand{\limval}{t}
\newcommand{\distfilter}{\mu}
\newcommand{\vecrb}{\pmb{\vecr}}
\newcommand{\rb}{\pmb{r}}
\newcommand{\rbi}[1]{\pmb{r_#1}}
\newcommand{\goodset}[2][\vecrb]{P_{#1}(#2)}
\newcommand{\betterset}[2][\vecrb + 1]{P_{#1}(#2)}
\renewcommand{\betterset}[2][\vecrb]{Q_{#1}(#2)}
\newcommand{\contained}[3]{$(#1, #2, #3)$-contained}
\newcommand{\closure}[2][]{\mathsf{closure}_{{#1}}(#2)}
\newcommand{\expansionconst}{c}
\newcommand{\almostcontained}{\nu}
\newcommand{\setsizemax}{k}
\newcommand{\diffclosure}[1][]{D_{{#1}}}
\renewcommand{\diffclosure}[1][]{\mathcal{D}_{{#1}}}
\newcommand{\matchbool}[1]{\rho_{#1}}
\newcommand{\matchboolof}[2]{\rho_{#1}(#2)}
\newcommand{\vmatch}[1]{V(#1)}
\newcommand{\matcha}{\varphi}
\newcommand{\matchaof}[1]{\matcha(#1)}
\newcommand{\matchb}{\varphi'}
\newcommand{\matchc}{\psi}
\newcommand{\matchd}{\psi'}
\newcommand{\matche}{\eta}
\newcommand{\matchf}{\eta'}
\newcommand{\matchings}[1][]{\ensuremath \mathcal{M}_{{#1}}}
\newcommand{\Null}[1]{\ensuremath Z(#1)}
\newcommand{\spans}[1]{\mathrm{span} ( #1 )}
\newcommand{\dom}[1]{\mathrm{dom} ( #1 )}
\newcommand{\tensor}{\ensuremath \otimes}
\newcommand{\bigtensor}{\ensuremath \bigotimes}
\newcommand{\setalg}{T}
\newcommand{\Xrand}{\boldsymbol{X}}
\newcommand{\sample}[2]{#1\sim#2}
\newcommand{\lengthsize}{length\xspace}
\definecolor{airforceblue}{rgb}{0.36, 0.54, 0.66}
\definecolor{amethyst}{rgb}{0.6, 0.4, 0.8}
\definecolor{asparagus}{rgb}{0.53, 0.66, 0.42}
\definecolor{brass}{rgb}{0.71, 0.65, 0.26}
\definecolor{brown}{rgb}{0.59, 0.29, 0.0}
\definecolor{darkolivegreen}{rgb}{0.33, 0.42, 0.18}
\definecolor{darkorange}{rgb}{1.0, 0.55, 0.0}
\newcommand{\squeezesubsection}[1]{\subsection{#1}}
\numberwithin{equation}{section}
\newcommand\IfRestateTF{%
  \ifx\label\thmt@gobble@label 
    \expandafter\@firstoftwo
  \else
    \expandafter\@secondoftwo
  \fi
}
\newcommand{\RestateRemark}{\IfRestateTF{{\normalfont\bfseries (Restated) }}{}}
\title{Exponential 
  Resolution
  Lower Bounds for Weak Pigeonhole Principle 
  and Perfect Matching Formulas over Sparse Graphs}
\begin{document}
\maketitle





\begin{abstract}
  We show exponential lower bounds on resolution proof length for
  pigeonhole principle (PHP) formulas and perfect matching formulas
  over highly unbalanced, sparse expander graphs, thus answering the
  challenge to establish strong lower bounds in the regime between
  balanced constant-degree expanders as in [Ben-Sasson and
  Wigderson~'01] and highly unbalanced, dense graphs as in [Raz~'04]
  and [Razborov~'03,~'04].  We obtain our results by revisiting
  Razborov's pseudo-width method for PHP formulas over dense graphs
  and extending it to sparse graphs. This further demonstrates the
  power of the pseudo-width method, and we believe it could
  potentially be useful for attacking also other longstanding open
  problems for resolution and other proof systems.
\end{abstract}

\section{Introduction}
\label{sec:intro}

In one sentence, 
proof complexity 
is the study
of efficient certificates of unsatisfiability for formulas in
conjunctive normal form (CNF).  In its most general form, this is the
question of whether \coNP~can be separated from \NP~or not, and as
such appears out of reach for current techniques. 
However, if one instead focuses on concrete proof systems, which can
be thought of as restricted models of nondeterministic computation,
this opens up the view to a rich landscape of results.

One line of research in proof complexity has been to prove
superpolynomial lower bounds for stronger and stronger proof systems,
as a way of approaching 
the distant goal of establishing
$\NP \neq \coNP$. 
A perhaps even more fruitful direction, however, has been to study
different combinatorial principles and investigate what kind of
reasoning is needed to efficiently establish the validity of these
principles.  In this way, one can quantify the ``depth'' of
different mathematical truths, measured in terms of how strong a proof
system is required to prove them.

In this paper, we consider the proof system
\introduceterm{resolution}~\cite{Blake37Thesis}, in which one derives
new disjunctive clauses 
from the
formula until an
explicit contradiction is reached.
This is arguably the most well-studied proof system
in proof complexity, for which numerous exponential lower
bounds on proof size have been shown (starting with
\cite{CS88ManyHard,Haken85Intractability,Urquhart87HardExamples}).
Yet many basic questions about resolution remain stubbornly open.
One such set of questions concerns the
\introduceterm{pigeonhole principle (PHP)}
stating that there is no injective mapping of $m$ pigeons into $n$
holes if $m > n$.  This is one of the simplest, 
and yet most useful, combinatorial principles in mathematics, and it
has been topic of extensive study in proof complexity with deep
connections to TFNP classes.

When studying the pigeonhole principle,
it is convenient to think of it in terms of a bipartite graph
$G = (U \disjointunion  V, E)$  with 
pigeons $U = [m]$ and holes $V = [n]$ for
$m \geq n+1$.
Every pigeon~$i$ can fly to its neighbouring pigeonholes~$N(i)$ as
specified 
by~$G$,
which for now we 
fix to be the 
complete bipartite graph~$K_{m,n}$ with $\neigh{i} = [n]$ for all $i \in [m]$.
Since we wish to study unsatisfiable formulas, 
we encode the claim that there does in fact exist an injective
mapping of pigeons to holes as a CNF formula consisting of
\introduceterm{pigeon axioms}
\begin{subequations}
\begin{align}
  \label{eq:axiom-pigeon}
  \pigeonclause{i}
  &= \Lor_{j \in 
    \neigh{i}
    } x_{ij} 
  &&\text{for } i \in [m]
  \\
\shortintertext{and \introduceterm{hole axioms}}
  \label{eq:axiom-hole}
  \holeclause{i}{i'}{j}
  &= (\olnot{x}_{ij} \vee \olnot{x}_{i'j}) 
  &&\text{for }i \neq i' \in [m], j \in \neigh{i} \intersection \neigh{i'} 
  \\
\intertext{(where the intended meaning of the variables is that
  $x_{i,j}$ is true if pigeon~$i$ flies to hole~$j$).
  To rule out multi-valued mappings one can also add
  \introduceterm{functionality axioms}
}
  \label{eq:axiom-functionality}
  \functionclause{i}{j}{j'}
    &= (\olnot{x}_{ij} \vee \olnot{x}_{ij'}) 
  &&\text{for }i \in [m], j \neq j' \in \neigh{i} \eqcomma\\
  \intertext{and a further restriction is to include
  \introduceterm{surjectivity} or
  \introduceterm{onto axioms}
}
  \label{eq:axiom-onto}
  \ontoclause{j}
  &= \Lor_{i \in 
    \neigh{j}
    } x_{ij} 
  &&\text{for } j \in [n] 
\end{align}
\end{subequations}
requiring that every hole should get a pigeon.
Clearly, the ``basic''
\introduceterm{pigeonhole principle (PHP) formulas}
with clauses
\refeq{eq:axiom-pigeon} and~\refeq{eq:axiom-hole}
are the least constrained.
As one adds clauses~\refeq{eq:axiom-functionality} to obtain
the
\introduceterm{functional pigeonhole principle (FPHP)}
and also clauses~\refeq{eq:axiom-onto} to get
the
\introduceterm{onto functional pigeonhole principle (onto-FPHP)},
the formulas become more overconstrained and thus (potentially) easier
to disprove, meaning that establishing lower bounds becomes harder.
A moment of reflection reveals that onto-FPHP formulas are just saying
that complete bipartite graphs with $m$~left vertices and
$n$~right vertices have perfect matchings, and so these formulas are
also referred to as
\introduceterm{perfect matching formulas}.

Another way of varying the hardness of PHP formulas is  by letting the
number of pigeons~$m$ grow larger as a function of the number of
holes~$n$.  
What this means is that it is not necessary to
count exactly to refute the formulas. Instead, it is sufficient 
to provide
a precise enough estimate 
to show that $m>n$
must hold (where the hardness of this task depends on how much larger
$m$ is than~$n$).
Studying the hardness of such so-called
\introduceterm{weak PHP formulas}
gives a way of measuring how good different proof systems
are at approximate counting.
A second
application of  lower bounds
for weak PHP  formulas is that they  can be used to show that proof
systems cannot produce efficient proofs of the claim that
\mbox{$\NP \nsubseteq \Ppoly$
\cite{Razborov98LowerBound,Razborov04ResolutionLowerBoundsPM}.}

Yet another version of more constrained formulas is obtained by
restricting what choices the pigeons have for flying into holes, by
defining the formulas not over~$K_{m,n}$ but
sparse bipartite
graphs with bounded left degree---such instances are usually called
\introduceterm{graph PHP formulas}.
Again, this makes the formulas easier to disprove in the sense that
pigeons are more constrained, and it also removes the symmetry in the
formulas that plays an essential role in many lower bound proofs.

Our work focuses on the most challenging setting in terms of lower
bounds, when all of these restrictions apply: 
the PHP formulas contain both functionality and onto axioms,
the number of  pigeons~$m$ is very large
compared to the number of holes~$n$,  
and the choices of holes are restricted by a sparse graph. 
But before discussing our contributions, let us review what has been
known about resolution and pigeonhole principle formulas.
We emphasize that what will follow is a brief and selective overview
focusing on resolution only---see Razborov's beautiful survey
paper~\cite{Razborov02ProofComplexityPHP} 
for a discussion of upper and lower bounds on PHP formulas in other
proof systems.

\squeezesubsection{Previous Work}
   
In a breakthrough result, which
served as a strong impetus for further developments in proof complexity,
Haken~\cite{Haken85Intractability} proved a lower bound
$\exp ( \bigomega{n}) $
on resolution proof length for $m=n+1$ pigeons.
Haken's proof was for the basic PHP formulas, but easily extends to
onto-FPHP formulas. 
This result was simplified and improved in a sequence of works
\cite{BP96Simplified,BW01ShortProofs,BT88ResolutionProofs,Urq03}
to a lower bound of the form
$\exp \bigl( n^2 / m \bigr)$,
which, unfortunately, does not yield anything nontrivial for 
$m = \Bigomega{n^2}$ pigeons.

Buss and Pitassi~\cite{BP97Weak} showed that the pigeonhole principle
does in fact get easier for resolution when $m$ becomes sufficiently
large: namely, for 
$m = \exp \bigl( \Bigomega{\sqrt{n \log n}} \bigr)$
PHP formulas can be refuted in length
$\exp \bigl( \Bigoh{ \sqrt{n \log n} } \bigr)$.
This is in contrast to 
what holds for the  
weaker subsystem
\introduceterm{tree-like resolution},
for which 
the
formulas remain equally hard as the number of pigeons
increases, and where the 
complexity was even sharpened in  
\cite{BGL10,BP97Weak,Dant02,DantRiis01}
to an
$\exp ( \bigomega{ n \log n} )$ 
length
lower bound.

Obtaining lower bounds beyond $m = n^2$~pigeons for non-tree-like
resolution turned out to be quite challenging.
Haken's bottleneck counting method fundamentally breaks down when the
number of pigeons is quadratic in the number of holes, and the same
holds for the celebrated length-width lower bound
in~\cite{BW01ShortProofs}. 
Some progress was made for 
restricted forms of
resolution
in~\cite{RazbWigYao02} and~\cite{PitRaz04},
leading up to an
$\exp \bigl( n^\varepsilon \bigr)$
lower bound for so-called \introduceterm{regular resolution}.
In a technical tour de force, 
Raz~\cite{Raz04Resolution} finally proved 
that general, unrestricted resolution requires length
$\exp \bigl( n^\varepsilon \bigr)$
to refute the basic PHP formulas even with arbitrary many pigeons.
Razborov followed up on this in three papers where he first 
simplified and slightly strengthened 
Raz's result in~\cite{Razborov01ImprovedResolutionLowerBoundsWPHP},
then extended it to FPHP formulas
in~\cite{Razborov03ResolutionLowerBoundsWFPHP} 
and lastly established an analogous lower bound for onto-FPHP formulas
in~\cite{Razborov04ResolutionLowerBoundsPM}.

More precisely, what Razborov showed is that for any version of the PHP
formula with $m$~pigeons and $n$~holes, the minimal proof length
required in resolution is
$
\exp \bigl( \Bigomega{n / \log^2 m } \bigr)
$.
It is easy to see that this implies a lower bound
$
\exp \bigl( \Bigomega{\sqrt[3]{n} } \bigr)
$
for any number of pigeons---for 
$m = \exp \bigl( \Bigoh{\sqrt[3]{n}} \bigr)$
we can appeal directly to the bound above, and if a resolution proof
would use 
$\exp\bigl(\Bigomega{\sqrt[3]{n}} \bigr)$~pigeons, 
then just mentioning all these
different pigeons already requires
$\exp\bigl( \Bigomega{ \sqrt[3]{n} } \bigr)$ distinct clauses.
It is also clear that considering complexity in terms of the number of
holes~$n$ is the right measure. Since any formula contains a basic PHP
subformula with $n+1$~pigeons that can be refuted in
length~$\exp(\bigoh{n})$,
we can never hope for exponential lower bounds in terms of formula
size as the number of pigeons~$m$ grows to exponential.

So far we have stated results only for the standard PHP formulas
over~$K_{m,n}$, where any pigeon can fly to any hole. 
However, the way Ben-Sasson and Wigderson~\cite{BW01ShortProofs}
obtained their result was by considering graph PHP formulas over balanced
bipartite expander graphs of constant left degree, from which the
lower bound for~$K_{m,n}$ easily follows by a restriction argument.
It was shown in~\cite{IOSS16} that an analogous bound holds for
onto-FPHP formulas, \ie perfect matching formulas, on bipartite expanders.
In this context is also relevant to mention the exponential lower
bounds in~\cite{Alekhnovich04Mutilated,DR01Planar} on 
\introduceterm{mutilated chessboard formulas}, which can be
viewed as perfect matching formulas on balanced, sparse bipartite
graphs with very bad expansion.
At the other end of the spectrum, Razborov's PHP lower bound
in~\cite{Razborov04ResolutionLowerBoundsPM} for highly unbalanced
bipartite graphs also applies in a more general  setting
than~$K_{m,n}$: namely, for 
any graph where the minimal degree of any left vertex
is~$\delta$, the minimal length of any resolution proof is
$\exp \bigl( \Bigomega{\delta / \log^2 m} \bigr)$.
Thus, for graph PHP formulas we have exponential lower bounds
on the one hand~\cite{BW01ShortProofs}
for $m \ll n^2$ pigeons, where each pigeon is
adjacent to a constant number of holes,
and on the other
hand~\cite{Razborov04ResolutionLowerBoundsPM} for
any number of pigeons
given that each pigeon is adjacent to a polynomial~$n^{\bigomega{1}}$
number of holes,
but nothing has been known in between these extremes.
In~\cite{Razborov04ResolutionLowerBoundsPM}, Razborov asks whether a
\quotepaper{common generalization} of the techniques in
\cite{BW01ShortProofs}
and
\cite{Razborov04ResolutionLowerBoundsPM,Razborov03ResolutionLowerBoundsWFPHP}
can be found
\quotepaper{that would uniformly cover both cases?}
Urquhart~\cite{Urquhart07WidthVsSize} also discusses Razborov's lower
bound technique, but notes that
\quotepaper{the search for a yet more general point of view remains a topic for
further research.}

\squeezesubsection{Our Results}

In this work, we give an answer to the questions raised
in~\cite{Razborov04ResolutionLowerBoundsPM,Urquhart07WidthVsSize} by
presenting a general technique that applies for any number of
pigeons~$m$ all the way from linear to weakly exponential, and that
establishes exponential lower bounds on resolution proof length for
all flavours of graph PHP formulas (including perfect matching
formulas) even over sparse graphs.

Let us state below three examples of the kind of lower bounds we
obtain---the full, formal statements will follow in later sections.
Our first theorem is an average-case lower bound for 
onto-FPHP formulas with slightly superpolynomial number of pigeons.

\begin{theorem}[Informal]
  \label{th:informal-quasipoly-npigeons}  
  Let
  $G$ be a randomly sampled bipartite graph with
  $n$~right vertices, \mbox{$m = n^{\littleoh{\log n}}$} left vertices, and left
  degree
  $\Bigtheta{\log^2 m}$.
  Then refuting
  the onto-FPHP formula (a.k.a.~perfect matching formula) over~$G$
  in resolution
  requires length
  $\exp \bigl( \Bigomega{n^{1-\littleoh{1}}} \bigr)$ 
  asymptotically almost surely.
\end{theorem}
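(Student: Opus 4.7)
The plan is to follow the general length--pseudo-width framework introduced by Razborov in \cite{Razborov01ImprovedResolutionLowerBoundsWPHP,Razborov03ResolutionLowerBoundsWFPHP,Razborov04ResolutionLowerBoundsPM}: prove a length-versus-pseudo-width tradeoff, reducing short resolution refutations of the $G$-onto-FPHP formula to refutations of small pseudo-width, and then establish a pseudo-width lower bound. A clause $C$ will have small \emph{pseudo-width} if only few of its literals $x_{ij}$ correspond to \introduceterm{heavy} pigeons~$i$---those that still have many admissible holes in~$\neigh{i}$ after a suitable partial matching has been fixed. Unlike Ben-Sasson--Wigderson width, pseudo-width is compatible with formulas having very wide pigeon axioms, and it is what one needs to handle exponentially many pigeons.

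The first technical step is to verify that a random bipartite graph~$G$ with $n$~right vertices, $m = n^{\littleoh{\log n}}$ left vertices and left degree $\Bigtheta{\log^3 m}$ asymptotically almost surely has the combinatorial properties the argument requires. Concretely, I would show a robust form of boundary (unique-neighbour) expansion for sets of right vertices up to size $k^{\ast} = n^{1 - \littleoh{1}}$, together with the property that small sets of left vertices have small \emph{closure} in a graph-theoretic sense mirroring Razborov's closure for~$K_{m,n}$. These properties must survive after one removes a random partial matching of size close to~$k^{\ast}$; the choice of left degree $\Bigtheta{\log^3 m}$ is precisely tuned so that expansion persists under such restrictions. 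A Chernoff plus union bound argument (or a Lovász Local Lemma application, if necessary) should yield all these properties simultaneously w.h.p.

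With these graph-theoretic properties in hand, the core is a pseudo-width lower bound along the adversary-strategy lines of \cite{Razborov04ResolutionLowerBoundsPM}: one maintains a partial matching and a threshold vector $\vecthr$, and argues that any clause~$C$ appearing in a refutation and being ``satisfied'' by the adversary's current matching must mention many heavy pigeons, giving it large pseudo-width. The reduction from length to pseudo-width then proceeds by hitting a short proof with a carefully chosen random restriction sampled from the adversary distribution: long clauses are killed with overwhelming probability, and the residual formula, defined over a subgraph of $G$ that still enjoys the expansion guaranteed above, remains onto-FPHP-hard in the pseudo-width measure. Combining these, a refutation of length $\exp\bigl(\littleoh{n^{1-\littleoh{1}}}\bigr)$ is ruled out.

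The main obstacle---and the step where this paper departs from the previous work---is extending Razborov's closure machinery from~$K_{m,n}$ to a sparse graph. In the dense setting, every pigeon has every hole available, so the notions of ``heavy pigeon'' and ``extendable partial matching'' are essentially global and symmetric; in the sparse setting, both notions must be localized to neighbourhoods in~$G$, and one has to ensure that after conditioning on a partial matching, the remaining bipartite graph still supports enough matchings to drive the adversary argument. I expect the delicate point to be defining $(\vecthr,\vecadv)$-expansion and the associated closure so that (i)~they hold w.h.p.\ for the random graph, (ii)~they are preserved by the restrictions used in the length-to-pseudo-width reduction, and (iii)~they suffice to re-prove Razborov's key covering/sunflower-type lemma in a purely graph-local form. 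Once this is in place, plugging in $k^{\ast} = n^{1-\littleoh{1}}$ yields the claimed $\exp\bigl(\Bigomega{n^{1-\littleoh{1}}}\bigr)$ lower bound.
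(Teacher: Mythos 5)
You have correctly identified the overall framework (a length-versus-\pseudowidth tradeoff plus a \pseudowidth lower bound, with a closure operation to localize Razborov's argument to sparse expanders), but two of your three main steps are not the ones that actually work, and a third ingredient specific to this theorem is missing entirely. First, the reduction from length to \pseudowidth is \emph{not} a random restriction that kills long clauses. No variables are ever assigned to the proof: instead, one samples a random threshold vector $\vecthr$ (the filter lemma, proved by a Chernoff-plus-union-bound over the $L$ clauses of the proof) so that every clause either has few \heavy pigeons, or has at least $w_0$ \superheavy pigeons --- and the latter clauses are \emph{strengthened to \fakeaxiomintro{}s} and adjoined to the formula as a set $\Ax$ of new axioms with $\setsize{\Ax}\le L$. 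This forces a second, separate obligation that your proposal never discharges: one must show that each \fakeaxiomintro{} contributes only an exponentially small $(1-\diff)^{w_0}$ fraction of the progress measure, so that $L < (1+\diff)^{w_0}$ of them cannot jointly complete the refutation. Without this accounting the tradeoff does not close, and with a genuine restriction argument it cannot close, because the pigeon axioms themselves are wide. Second, the \pseudowidth lower bound cannot be run as a direct adversary/matching-counting argument: a pigeon \heavy in both premises can cease to be \heavy in the resolvent while its variables still occur in both premises, so the resolvent can rule out matchings neither premise rules out. The paper's fix is the linear-algebraic ``lossy counting'': each pigeon $i$ gets a space $\linspace[i]$ of dimension roughly $\deg[G]{i}-\thr{i}+\adv{i}/4$, matchings map to tensor-product subspaces, and the key lemma is that low-\pseudowidth resolution steps keep $\linmapof{C}$ inside $\spans{\linmapof{C_0},\linmapof{C_1}}$. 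Your sketch omits this mechanism, which is the heart of the argument.

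Finally, the theorem you are proving is about \emph{perfect matching} (onto-FPHP) formulas, and here the paper needs substantially more than ``FPHP over a sparse graph.'' Because of the onto axioms, right vertices carry constraints too, so one first chooses a random partition of $V(G)$ into ``pigeons'' $\Vp$ and ``holes'' $\Vh$ that looks balanced to every clause of the proof and to every \fakeaxiomintro{}. In the sparse setting this partition can fail for a small set of bad left vertices per clause, which must be folded into the closure; and when extending matchings over $\diffclosure$, holes needed for a light pigeon may already be occupied by the fixed matching, which is handled by a ``lazy'' edge-removal procedure (Algorithm~\ref{alg:extendmatchings}) together with a separate, Razborov-style span argument for the right-hand vertices of $\diffclosure$. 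None of this appears in your proposal, and it is precisely where the onto case departs from the FPHP case. (A minor further point: the expansion hypothesis is on \emph{left} vertex sets, not right ones, and the required expansion is extremely strong --- $(1-\littleoh{1})\gdegmax$ unique neighbours --- which is why the degree must be polylogarithmic rather than constant here.)
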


Note that as the number of pigeons grow larger, it is clear that
the left degree also has to grow---otherwise we will get a small
number of pigeons constrained to fly to a small number of holes by a
birthday paradox argument, 
yielding
a small unsatisfiable
subformula that can easily be refuted by brute force.

If the number of pigeons increases further to weakly exponential, then
randomly sampled graphs no longer have good enough expansion for our
technique to work, but there are explicit constructions of unbalanced
expanders for which we can still get lower bounds.

\begin{theorem}[Informal]
  \label{th:informal-exponential-npigeons}
  There are explicitly constructible bipartite graphs~$G$ with
  $n$~right vertices, \mbox{$m = \exp \bigl( \Bigoh{n^{1/16}} \bigr)$}
  left vertices, and left degree $\Bigtheta{\log^4 m}$ such that
  resolution requires length
  $\exp \bigl( \Bigomega{n^{1/8 - \exponentalpha}} \bigr)$ to refute
  the perfect matching formula over~$G$.
\end{theorem}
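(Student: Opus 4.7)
The plan is to extend Razborov's pseudo-width method from the dense complete bipartite setting of~\cite{Razborov04ResolutionLowerBoundsPM} to sparse unbalanced expander graphs. First, I would invoke an explicit construction of bipartite $(k, (1-\varepsilon)d)$-expanders with $n$ right vertices, $m = \exp\bigl(\Bigoh{n^{1/16}}\bigr)$ left vertices, and left degree $d = \Bigtheta{\log^4 m} = \Bigtheta{n^{1/4}}$, using for instance the constructions of Capalbo, Reingold, Vadhan, and Wigderson (or derivatives thereof), which give lossless expansion onto a linear-sized set for all left subsets of size up to $n^{\Omega(1)}$. The expansion parameters must be tuned so that subsets of size up to roughly $n^{1/8-\varepsilon}$ have unique-neighbour expansion at least $(1-\varepsilon)d$; this is precisely where the degree-$\log^4 m$ slack is consumed.

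Next, I would define a pseudo-width measure $\Width{C}$ relative to threshold and advantage vectors $\vecthr,\vecadv$ adapted to the sparse graph. Intuitively, a left vertex (pigeon) $i$ is ``heavy'' for $C$ if a $\vecthr/d$-fraction of its incident variables $x_{ij}$ with $j\in\neigh{i}$ appears in $C$, and $\Width{C}$ counts the number of such heavy pigeons up to a closure operation that absorbs small expansion-deficient sets. The core steps are then: (i) establish that every \fakeaxiom\ (the expansion-restricted axiomatization replacing the original clauses for the pseudo-width analysis) has small pseudo-width; (ii) prove a bottleneck lemma stating that any refutation of the perfect matching formula must contain a clause of large pseudo-width $\Bigomega{n^{1/8-\varepsilon}}$, using the onto and functionality axioms together with the unique-neighbour expansion to produce a ``progress'' invariant that cannot hold at the empty clause; and (iii) apply a random partial matching restriction that with high probability kills every clause of small pseudo-width, so that a short refutation is converted into a pseudo-width-bounded refutation, contradicting~(ii).

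The main obstacle is reconciling two tensions inherent to the sparse regime. First, in Razborov's dense-graph argument, every pigeon sees every hole, so heaviness and closure are essentially symmetric and translation-invariant; in our setting the combinatorics of $\closure{\cdot}$ and of the ``thick'' vertex set $\Vthick{\cdot}$ must be redone in terms of boundary expansion, and one must verify that the closure is bounded in size while still absorbing all expansion defects introduced by previous derivation steps. Second, the random-restriction step must preserve expansion in the residual graph with high probability, which is delicate when the restriction is a partial matching of size $\Theta(n)$: one needs a union-bound argument over all ``bad'' small sets, whose failure probability decays only because $d = \Bigtheta{\log^4 m}$ is large enough to give super-polynomial slack. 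Calibrating these two steps together, so that the pseudo-width threshold in~(ii), the restriction size in~(iii), and the expansion parameters all match up to yield an $\exp\bigl(\Bigomega{n^{1/8-\varepsilon}}\bigr)$ lower bound, is where most of the technical work will lie.
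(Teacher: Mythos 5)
Your steps (i) and (ii) capture the right skeleton---\pseudowidth relative to per-pigeon thresholds, a closure operation to restore boundary expansion, and a progress invariant that the empty clause violates---but step (iii) is where the proposal breaks. Converting a short refutation into a low-\pseudowidth one by hitting it with a random partial matching of size $\Theta(n)$ is exactly the bottleneck-counting/random-restriction mechanism that is known to fail once $m \geq n^2$ (and here $m$ is weakly exponential in $n$): after restricting $\Theta(n)$ holes there is essentially no structure left, and the union bound over clauses cannot be closed because a single clause can mention exponentially many pigeons each in only a few variables. The paper replaces this step by Razborov's \emph{filter lemma}: one samples the threshold vector $\vecthr$ itself from a geometric-type distribution so that, for every clause of the given short refutation, either the clause has at least $w_0$ \superheavy pigeons or it has only $\bigoh{\alpha w_0}$ \heavy ones. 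Clauses of the first kind are then \emph{strengthened} to \fakeaxiomintro{}s (clauses with exactly $w_0$ \superheavy pigeons) and adjoined to the formula; the lower bound is then proved for $\PM \cup \Ax$. Your description of the \fakeaxiomintro{}s as ``the expansion-restricted axiomatization replacing the original clauses'' misses their actual role: what matters is that each one spans only a $(1-\Theta(\diff))^{w_0}$ fraction of a tensor-product ``matching space'' used for lossy counting, so that at most $(1+\Theta(\diff))^{w_0}$ of them (i.e.\ at most the length of the refutation) cannot span the whole space, while the empty clause does.

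Two further gaps. First, the perfect matching formula is genuinely harder than $\GFPHP$ here, because the onto axioms make left and right vertices play symmetric roles; the paper needs a preprocessing step that randomly partitions $V(G)$ into ``pigeons'' and ``holes'' in a way that looks balanced to every clause of the refutation (with a small per-clause error set controlled by expansion), plus a ``lazy'' edge-removal procedure in the span argument to handle right vertices already matched by the current partial matching. None of this appears in your outline, and without it the span lemma fails for resolvents involving right-vertex variables. Second, lossless expanders of the Capalbo--Reingold--Vadhan--Wigderson type are only known for constant imbalance $m/n$; in the regime $m = \exp\bigl(\Bigoh{n^{1/16}}\bigr)$ one needs the highly unbalanced expanders of Guruswami, Umans, and Vadhan, whose parameters (degree $\polylog m$, right side $\gdegmax^2 r^{1+\nu}$) are what force the exponents $1/16$ and $1/8$ in the statement; moreover the expansion loss must be taken as small as $\Bigoh{\log\alpha/\log m} = \littleoh{1}$ rather than a constant $\varepsilon$.
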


Finally, for functional pigeonhole principle formulas we can also
prove an exponential lower bound for \emph{constant} left degree even if the
number of pigeons is a large polynomial.

\begin{theorem}[Informal]
  \label{th:informal-poly-npigeons}
  Let
  $G$ be a randomly sampled bipartite graph with
  $n$~right vertices, \mbox{$m = n^{k}$} left vertices, and left
  degree
  $\Bigtheta{(k / \varepsilon)^2}$.
  Then refuting the functional pigeonhole principle formula
  over~$G$
  in resolution
  requires length
  $\exp \bigl( \Bigomega{n^{1-\varepsilon}} \bigr)$ 
  asymptotically almost surely.
   \end{theorem}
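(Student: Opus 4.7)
The plan is to follow Razborov's pseudo-width framework, adapting each step from the complete bipartite graph $K_{m,n}$ to the sparse random graph $G$ by replacing symmetry-based arguments with expansion-based ones. I would first fix an iterated sequence of threshold parameters $\vec{d} = (d_1, d_2, \ldots, d_L)$ and advantages $\vec{\delta}$, with $L = \Theta(1/\varepsilon)$ levels, calibrated so that the ratios $d_r/d_{r+1}$ dominate the expansion loss of $G$ at each scale. For a clause $C$ and level $r$, call a pigeon $i \in [m]$ \emph{$r$-heavy} if at least $d_r$ of its neighbours $j \in N(i)$ appear negatively in $C$. The pseudo-width of $C$ is then defined via the number of heavy pigeons at each level, after a closure operation that absorbs local expansion failures on the right side.

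The proof then splits into three main steps. The first is a \emph{pseudo-width lower bound}: any clause whose pseudo-width is below roughly $n^{1-\varepsilon}$ is satisfiable by some partial matching in $G$ that respects $N(\cdot)$, so in particular the empty clause must have large pseudo-width and hence every refutation contains a clause of large pseudo-width. To prove satisfiability I would distinguish the heavy pigeons (few in number, by the pseudo-width hypothesis) from the light ones (few forbidden holes each), match the heavy pigeons greedily using the slack encoded in $\vec{\delta}$, and then finish with a Hall-type argument on the residual expander to match enough light pigeons.

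The second step is a \emph{random restriction}: sample a uniform partial matching $\rho$ saturating a random subset of holes of appropriate size. Standard counting shows that $\rho$ kills every clause of small syntactic width with high probability, while the residual formula is an FPHP formula on the induced graph $G\!\upharpoonright_\rho$, which with high probability still has the multi-scale expansion needed to invoke step one at parameters suitable for the smaller instance. This is where the degree $d = \Theta((k/\varepsilon)^2)$ is used: it must be large enough that, after throwing away a polynomial fraction of pigeons and holes, the remaining graph still expands at all $L$ scales. The third step then combines the two: a refutation of length $\exp(o(n^{1-\varepsilon}))$ would, after $\rho$, collapse to a refutation of the residual FPHP in which every surviving clause has small syntactic width and hence small pseudo-width, contradicting the pseudo-width lower bound applied to the restricted formula.

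The main obstacle is the pseudo-width lower bound of step one, since for sparse $G$ we can no longer freely permute pigeons and holes as Razborov does over $K_{m,n}$. Concretely, once we fix a partial matching of the heavy pigeons we must certify that Hall's condition still holds on the light pigeons in the relevant expander subgraph, and this must remain true as we inductively peel off one level of heaviness at a time. The parameter choices $d = \Theta((k/\varepsilon)^2)$ and $m = n^k$ are calibrated so that a random $G$ has the needed expansion at all $L = \Theta(1/\varepsilon)$ levels \aas, which is the quantitative source of the final bound $\exp(\Omega(n^{1-\varepsilon}))$.
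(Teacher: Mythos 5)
There is a genuine gap, and it sits exactly at the point your proposal leans on a random restriction. Your step two --- hit the refutation with a random partial matching so that all wide clauses die, then apply a width (pseudo-width) lower bound to the restricted refutation --- is the Ben-Sasson--Wigderson/Haken strategy, and it is precisely the strategy that provably breaks down once $m \geq n^2$; for $k \geq 2$ your theorem lives entirely in that regime. Quantitatively: a partial matching can touch at most $n$ of the $m = n^k$ pigeons, hence at most an $n/m$ fraction of the variables, so a clause of width $w$ survives the restriction with probability about $(1-O(n/m))^{w}$, and the resulting length/width trade-off degenerates to $\exp(n^2/m)$, which is trivial here. The paper's conversion from short length to small pseudo-width is instead done by a \emph{filter lemma}: the per-pigeon thresholds $d_i$ are sampled from a geometric-type distribution and a union bound over the $L$ clauses of the alleged short proof shows one can fix a single threshold vector so that every clause either has few heavy pigeons (small pseudo-width) or has at least $w_0$ \emph{super-heavy} pigeons; the latter clauses are strengthened into ``fake axioms'' $\Ax$ with $\setsize{\Ax} \leq L$, and the lower bound is then proved for $\GFPHP \cup \Ax$. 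No restriction is applied to the formula at all. (Relatedly, your thresholds are organized into $\Theta(1/\varepsilon)$ global ``levels,'' whereas the paper needs one threshold per pigeon, ranging over $\Theta(\log m/\log\alpha)$ values, precisely so the union bound over clauses works.)

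Your step one also understates what the pseudo-width lower bound must deliver. Showing that every clause of small pseudo-width is satisfiable by a matching (a locality/Hall-type argument) only yields a \emph{width} bound and, on its own, says nothing about length; worse, the natural inductive version of this claim is false, because a pigeon can be heavy in both premises of a resolution step yet light in the resolvent, so the set of matchings ruled out by the resolvent is not controlled by those ruled out by the premises. The paper handles this with Razborov's ``lossy counting'': each pigeon $i$ gets a vector space of dimension roughly $\deg_G(i) - d_i + \delta_i/4$, matchings map to tensor products, a clause maps to the span of the matchings it rules out, and the key lemma shows a low-pseudo-width resolution step never leaves the span of its premises, while the $\leq L$ fake axioms each span only a $(1-\diff)^{w_0}$ fraction of the space. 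Where your expansion instincts are right --- and where the paper's novelty over Razborov lies --- is in the use of the closure operation to guarantee that, after matching the heavy pigeons, the remaining relevant pigeons still have enough free holes; but that device plugs into the span argument, not into a Hall-type satisfiability argument combined with restrictions.
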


\squeezesubsection{Techniques}
\label{sec:intro_techniques}

At a very high level, what we do in terms of techniques is to revisit
the pseudo-width method introduced by Razborov for functional PHP
formulas in~\cite{Razborov03ResolutionLowerBoundsWFPHP}.  We
strengthen this method to work in the setting of sparse graphs by
combining it with the closure operation on expander
graphs in~\cite{ABRW04Pseudorandom,AR03LowerBounds}, which is a way to
restore expansion after a small set of (potentially adversarially
chosen) vertices have been removed.
To extend the results further to perfect matching formulas, we apply a
``preprocessing step'' on the formulas as
in~\cite{Razborov04ResolutionLowerBoundsPM}. 
In what remains of this section, we focus on graph FPHP formulas and
give an informal overview of the lower bound proof in this setting,
which already contains most of the interesting ideas
(although the extension to onto-FPHP also raises significant
additional challenges).

Let
$\GFPHP$
denote the functional 
pigeonhole principle
formula over the graph~$G$ consisting of 
\mbox{clauses
  \refeq{eq:axiom-pigeon}--\refeq{eq:axiom-functionality}.}
A first, quite naive (and incorrect), description of the proof
structure is that we start by defining a 
\introduceterm{\pseudowidth{}} measure on clauses~$\clc$ that counts
pigeons~$i$ that 
appear  in~$\clc$ in many variables~$x_{ij}$
for distinct~$j$.
We then show that
any short resolution refutation of
$\GFPHP$ can be transformed into a refutation where all clauses have
small \pseudowidth. By a separate argument, 
we establish that any refutation of
$\GFPHP$ 
requires large \pseudowidth.
Hence,  no short refutations can exist, which is precisely what we
were aiming to prove.

To fill in the details (and correct) this argument, let us start by
making clear what we mean by \pseudowidth. 
Suppose that the graph~$G$ has left degree~$\gdegmax$.
In what follows, we 
identify a mapping of pigeon~$i$ to a neighbouring hole~$j$ with the partial
assignment~$\restrrho$ such that
$\restrrho (x_{i,j}) = 1$ and
$\restrrho (x_{i,j'}) = 0$ for all $j' \in \neigh{i} \setminus \set{j}$.
We denote by 
$\thr{i}(\clc)$ 
the number of mappings of pigeon $i$ that
satisfy $\clc$.
Note that if
$\clc$ contains at least
one negated literal~$\olnot{x}_{i,j}$,
then
$\thr{i}(\clc) \geq \gdegmax - 1$,
and otherwise
$\thr{i}(\clc)$
is the number of positive literals
$x_{i,j}$ for $j \in \neigh{i}$.
Given a judiciously  chosen
``filter vector''
$\vecthr =(\thr{1}, \ldots, \thr{m})$
for
$\thr{i} \approx \gdegmax$
and a ``slack''
$\delta \approx \gdegmax / \log m$,
we 
say that
pigeon $i$ is \introduceterm{\heavy{}} in 
$\clc$
if $\thr{i}(\clc) \ge \thr{i} - \delta$ and
\introduceterm{\superheavy{}} 
\mbox{if  $\thr{i}(\clc) \ge \thr{i}$.}
We define the \introduceterm{\pseudowidth{}} of 
a clause~$\clc$ to be
the number of \heavy pigeons in $\clc$.

With these definitions in hand, we can give a description of the
actual proof:
\begin{enumerate}
\item 
\label{item:overview-filter-lemma}
  Given any resolution refutation~$\proofpi$ of~$\GFPHP$ in small
  length~$\lengthstd$, we argue that all clauses can be classified as
  having either low or high \pseudowidth, where 
  an important additional guarantee is that 
  the high-width clauses not
  only have many \heavy pigeons but actually many \superheavy pigeons.
  
\item
  \label{item:overview-substitution}
  We replace all clauses~$\clc$ with many \superheavy pigeons with
  ``\fakeaxiomintro{}s''
  $\clc' \subseteq \clc$
  obtained by throwing away literals from~$\clc$
  until we have nothing left but a medium number of \superheavy pigeons.
  By construction, the set $\Ax$ of such \fakeaxiomintro{}s is of size
  $\setsize{\Ax} \leq \lengthstd$, and after making the replacement we
  have a resolution refutation $\proofpi'$ of~$\GFPHP \union \Ax$
  in low \pseudowidth.
  
\item
\label{item:overview-pseudowidth-lower-bound}
  However, since  $\Ax$ is not too large, we are able to show that
  any resolution refutation of
  $\GFPHP \union \Ax$
  must still require   large pseudo-width. Hence, $\lengthstd$ cannot
  be small, and the lower bound follows.
\end{enumerate}

Part~\ref{item:overview-filter-lemma}
is similar to~\cite{Razborov03ResolutionLowerBoundsWFPHP},
but with a slight twist.
We show that if the length of~$\proofpi$ is  $\lengthstd < 2^{w_0}$
and if we choose
$
\delta \leq 
\varepsilon
\gdegmax \log n / \log m
$, 
then there exists  a vector
$\vecthr = (\thr{1}, \ldots, \thr{m})$
such that for all clauses 
in~$\proofpi$  either the number of \superheavy pigeons is
at least $w_0$
or else the number of \heavy pigeons is at 
\mbox{most
$
\Bigoh{w_0 \cdot n^{\varepsilon}}        
$.}
The proof of this is by sampling the coordinates~$\thr{i}$
independently from a suitable probability distribution and then
applying a union bound argument.
Once this has been established,
part~\ref{item:overview-substitution}
follows easily:
we just replace all clauses with at least
$w_0$ \superheavy pigeons by  
(stronger) \fakeaxiomintro{}s.
Including all \fakeaxiomintro{}s~$\Ax$ yields a
refutation~$\proofpi'$ of
$ \GFPHP \union \Ax $ (since we can add a weakening rule
deriving
$\clc$ from   $\clc' \subseteq \clc$
to resolution \wolog)
and clearly all clauses 
in~$\proofpi'$
have \pseudowidth
$\Bigoh{w_0 \cdot n^\varepsilon}$.

Part~\ref{item:overview-pseudowidth-lower-bound} is where most of the
hard work is.
Suppose that
$G$ is an excellent 
expander graph, so that for some value $r$ 
all left vertex sets $U'$ of size $\Setsize{U'} \leq r$
have at least
\mbox{$
(1 - \varepsilon\, {\log n}/{\log m})\gdegmax \setsize{U'}
$}
unique neighbours on the right-hand side.
We show that, under the assumptions above, refuting
$\GFPHP \union \Ax$
requires pseudo-width
$
\Bigomega{
  {r \cdot \log n} / {\log m}
}
$.
Tuning the parameters appropriately, this yields a contradiction with
part~\ref{item:overview-substitution}.

Before outlining how the proof of
part~\ref{item:overview-pseudowidth-lower-bound} goes,
we remark that the requirements we place on the expansion of~$G$ are
quite severe. Clearly, any left vertex set~$U$ can have at most
$\gdegmax \setsize{U'}$ neighbours in total, and we are asking for
all except a vanishingly small fraction of these neighbours to be
unique. This is why we can establish
\refth{th:informal-quasipoly-npigeons}  
but not
\refth{th:informal-exponential-npigeons}
for randomly sampled graphs. We see no reason to believe that the
latter theorem would not hold also for random graphs, 
but the expansion properties required for our
proof
are so stringent that they are not satisfied 
in this parameter regime.  This seems to be a fundamental
shortcoming of our technique, and it appears that new ideas would be
required to circumvent this problem.

In order to argue that refuting
$\GFPHP \union \Ax$
in resolution requires large pseudo-width, we
want to
estimate how much
progress the resolution derivation has made up to the point when it
derives some clause~$\clc$. Following Razborov's lead, we measure this
by looking at what fraction of partial matchings of all the \heavy
pigeons in~$\clc$ do not satisfy~$\clc$ (meaning, intuitively, that
the derivation has managed to rule out this part of the search space).
It is immediate by inspection that 
all pigeons mentioned in the
real
axiom clauses~\refeq{eq:axiom-pigeon}--\refeq{eq:axiom-functionality}
are \heavy, 
and any matching of such pigeons satisfies the clauses. Thus, 
the original axioms in~$\GFPHP$ do not rule out any matchings.
Also, it is easy to show that \fakeaxiomintro{}s rule out only an
exponentially small fraction of matchings, since they contain many
\superheavy pigeons and it is hard to match all of these pigeons
without satisfying the clause.
However, the contradictory empty clause~$\emptycl$ rules out
100\% of partial matchings, since it contains no \heavy pigeons to
match in the first place.

What we would like to prove now is that for any 
derivation
in small \pseudowidth it holds that
the derived clause cannot rule out any matching other than those
already 
eliminated
by the clauses used to derive it.
This means that the \fakeaxiomintro{}s together need to rule out all
partial matchings,
but since every \fakeaxiomintro contributes only an exponentially
small fraction 
they are too few
to achieve this. Hence, it
is not
possible to derive contradiction in small \pseudowidth,
which completes 
part~\ref{item:overview-pseudowidth-lower-bound} 
of our proof outline.

There is one problem, however: the last claim above is not true, and
so what is outlined above is only a fake proof. While we have to defer
the discussion of what the full proof actually looks like in detail,
we conclude this section by attempting to hint at a couple of
technical issues and how to resolve them.

Firstly, it does not hold that a derived clause~$\clc$ eliminates only
those matchings that are also forbidden by one of the predecessor clauses
used to derive~$\clc$.  The issue is that a pigeon~$i$ that is \heavy
in both predecessors might cease to be \heavy in~$\clc$---for
instance, if $\clc$ was derived by a resolution step over a
variable~$x_{i,j}$. If this is so, then we would need to show that any
matching of the \heavy pigeons in~$\clc$ can be extended to match
also pigeon~$i$ to any of its neighbouring holes without satisfying
both predecessor clauses. But this will not be true, because a
non-\heavy pigeon can still have some variable~$x_{i,j}$ occurring in both
predecessors. The solution to this, introduced
in~\cite{Razborov03ResolutionLowerBoundsWFPHP}, is to do a  
``lossy counting'' of matchings by associating each partial matching
with a linear subspace of some suitable vector space, and then 
to consider
the span of all matchings ruled out
by~$\clc$. When we accumulate a ``large enough'' number of matchings for a
pigeon~$i$, then the whole subspace associated to~$i$ is spanned and
we can stop counting.

But this leads to a second problem: when studying matchings of the
\heavy pigeons in~$\clc$ we might already have assigned pigeons
$i'_1, \ldots, i'_w$ to occupy holes where pigeon~$i$ might want to fly.  For standard PHP
formulas over complete bipartite graphs this is not a problem, since
at least $n - w$ holes are still available and this number is ``large
enough'' in the sense described above. But for a sparse graph it will
typically be the case that $w \gg \gdegmax$, and so it might well be
the case that
pigeons $i'_1, \ldots, i'_w$ 
are already occupying all the~$\gdegmax$ holes available for pigeon~$i$
according to~$G$. Although it is perhaps hard to see from our
(admittedly somewhat informal) discussion, this turns out to be a very
serious problem, and indeed it is one of the main technical challenges
we need to overcome.

To address this problem we consider not only the \heavy pigeons
in~$\clc$, but also any other pigeons in~$G$ that risk becoming far too
constrained when the 
\heavy pigeons of~$\clc$ 
are matched.
Inspired by~\cite{ABRW04Pseudorandom,AR03LowerBounds},
we define the \introduceterm{closure} to be a
superset~$S$ of the \heavy pigeons such that when $S$ and the
neighbouring holes of~$S$ are removed it holds that the residual graph is
still guaranteed to be a good expander. Provided that $G$ is an
excellent expander to begin with, and that the number of
\heavy pigeons in~$\clc$ is not too large, it can then be shown that an
analogue of the original argument outlined above goes through.

\squeezesubsection{Outline of This Paper}

We review the necessary preliminaries in 
\refsec{sec:prelims}
and introduce two crucial technical tools in
\refsec{sec:filterandclosure}.
The lower bounds for weak graph FPHP formulas are then presented in
\refsec{sec:GFPHP},
after which the perfect matching lower bounds follow in
\refsec{sec:pm}.
We conclude with a discussion of questions for future research in
\refsec{sec:conclusion}.

\section{Preliminaries}
\label{sec:prelims}

We denote natural logarithms (base $\mathrm{e}$) by $\ln$, and base
$2$ logarithms by $\log$.  For positive integers $n \in \Nplus$ we
write $[n] = \set{1,\ldots,n}$.

\subsection{Proof Complexity}

A \introduceterm{literal} over a Boolean variable $\varx$ is either
the variable $\varx$ itself (a \introduceterm{positive literal}) or
its negation $\olnot{\varx}$ (a
\introduceterm{negative literal}).
A \introduceterm{clause}
$\clc = \ell_1 \lor \formuladots \lor \ell_{w}$ is a disjunction
of literals.
We write $\emptycl$ to denote the empty clause without any literals.
A \introduceterm{CNF formula}
$\formf = \clc_1 \land \formuladots \land \clc_m$ is a conjunction
of clauses.
We think of clauses and CNF formulas as sets:
order is irrelevant and there are no repetitions.
We let $\vars{\formf}$ denote the set of variables of~$\formf$.

A \introduceterm{resolution  refutation} $\proofstd$ of an unsatisfiable CNF formula~$\formf$,
or
\introduceterm{resolution proof} for (the unsatisfiability
of)~$\formf$, 
is an ordered sequence of clauses
\mbox{$\proofstd = (\cld_1, \dotsc, \cld_{L})$}
such that $\cld_{L} = \emptycl$ and
for each $i \in [L]$
either 
$\cld_i$ is a clause in $\formf$
(an \introduceterm{axiom})
or there exist $j < i$ and $k < i$ such that
$\cld_i$ is derived from $\cld_j$ and $\cld_k$
by the
\introduceterm{resolution rule}
\begin{equation}
\label{eq:resolution-rule}
\AxiomC{$\clb \lor x$}
\AxiomC{$\clc \lor \olnot{x}$}
\BinaryInfC{$\clb \lor \clc$}
\DisplayProof
\eqperiod
\end{equation}
We refer to $\clb \lor \clc$ as the \introduceterm{resolvent} of $\clb
\lor x$ and~$\clc \lor \olnot{x}$ over $x$, and to $x$ as the
\emph{resolved variable}. 
For technical reasons it is 
sometimes convenient to also allow
clauses to be derived 
by the \introduceterm{weakening} rule
\begin{equation}
\label{eq:weakening}
\AxiomC{$\clc$}
\UnaryInfC{$\cld$}
\DisplayProof
\ \ [ \clc \subseteq \cld ]
\end{equation}
(and for two clauses 
$\clc \subseteq \cld$
we will sometimes refer to $\clc$
as a \introduceterm{strengthening} of~$\cld$).

The \introduceterm{\lengthsize}
$\length{\proofstd}$   
of a refutation $\proofstd = (\cld_1, \dotsc, \cld_{L})$
is~$L$ and the length of refuting 
$\formf$ is
$\min_{\refof{\proofstd}{\formf}}\set{\length{\proofstd}}$, 
where the minimum is taken over all resolution refutations~$\proofstd$
of~$\formf$.
It is easy to show that
removing the weakening rule~\refeq{eq:weakening}
does not increase the refutation length.

A \introduceterm{partial assignment} or a \introduceterm{restriction} on a 
formula $\formf$ is a partial function $\rho : \vars{F} \rightarrow \set{0,1}$. 
The clause $C$
\introduceterm{restricted by~$\rho$}, denoted $\restrict{C}{\rho}$, is
the trivial $1$-clause if any of the literals in~$C$ is satisfied by~$\rho$
and otherwise it is $C$ with all falsified literals removed. We extend
this definition  
to CNF formulas in the obvious way by taking unions.
For a variable $x\in \vars{F}$ we 
write
$\rho(x)=*$ if $x\notin \dom{\rho}$,
\ie if $\rho$ does not assign a value to~$x$.

\subsection{Concentration Inequalities}

We often denote random variables in boldface and write
$\sample{\Xrand}{\mathcal{D}}$ to denote that $\Xrand$ is sampled from
the distribution $\mathcal{D}$. We will use the following standard
forms of the multiplicative Chernoff bounds.
\begin{theorem}
  \label{thm:Chernoff}
  Let 
  $\boldsymbol{S}$ 
  be the sum of independent 0-1 random variables 
  (not necessarily equidistributed)
  with expectation $\mu = \expectation{\boldsymbol{S}}$.
  Then for $\delta \geq 0$ it holds that
  \begin{align*}
    \Prob{
    \mu - \boldsymbol{S} \geq \delta
    }
    \leq
      \exp
      \left(
      - \frac{\delta^2}{2\mu}
      \right)~\text{and}~
    \Prob{
    \boldsymbol{S} - \mu \geq \delta
    }
    \leq 
      \exp
      \left(
      - \frac{\delta^2}{2\mu + \delta}
      \right)\eqperiod
  \end{align*}
\end{theorem}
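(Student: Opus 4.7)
The plan is essentially just to combine the two one-sided multiplicative Chernoff bounds stated immediately before the theorem, so the argument is very short and requires no new ideas. I would proceed in three steps.

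First, I would apply the union bound to split the two-sided deviation into the upper and lower tails:
\begin{equation*}
  \Prob{\abs{\mathbf{S} - \mu} \geq \delta}
  \;\leq\; \Prob{\mu - \mathbf{S} \geq \delta} + \Prob{\mathbf{S} - \mu \geq \delta} \eqperiod
\end{equation*}
Second, I would invoke the two stated inequalities to bound each of the two summands, giving
\begin{equation*}
  \Prob{\abs{\mathbf{S} - \mu} \geq \delta}
  \;\leq\; \exp\!\left(-\frac{\delta^2}{2\mu}\right) + \exp\!\left(-\frac{\delta^2}{2\mu+\delta}\right) \eqperiod
\end{equation*}

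Third, I would observe that since $\delta \geq 0$ we have $2\mu \leq 2\mu + \delta$, hence
$-\delta^2/(2\mu) \leq -\delta^2/(2\mu+\delta)$,
so the lower-tail term is bounded by the upper-tail term and the sum is at most $2\exp\bigl(-\delta^2/(2\mu+\delta)\bigr)$, which is exactly what we want.

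There is no real obstacle here; the theorem is only a convenience packaging of the two one-sided bounds the authors have already cited, and the only minor point to be careful about is the trivial monotonicity observation that makes the weaker denominator $2\mu+\delta$ suffice for both tails.
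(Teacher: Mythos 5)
Your proposal is correct and is exactly the argument the paper intends: the text preceding the theorem simply says that combining the two one-sided multiplicative Chernoff bounds yields the statement, and your union bound plus the monotonicity observation that $\exp\bigl(-\delta^2/(2\mu)\bigr) \leq \exp\bigl(-\delta^2/(2\mu+\delta)\bigr)$ is precisely that combination. Nothing is missing.
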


\subsection{Graph Theory}

We write  $G=(\V,E)$ to denote a graph with vertices~$\V$ and
edges~$E$, where $G$ is always undirected and without loops or
multiple edges.
Moreover, 
for bipartite graphs we write
$G = (U \disjointunion V, E)$, where edges in $E$ have one endpoint in
the left vertex set~$U$ and the other in the right vertex set~$V$. 
A \introduceterm{partial matching} $\matcha$ in $G$ is a subset of edges that 
are vertex-disjoint.
Let $V(\matcha) = \setdescr{v}{\exists e \in \matcha: v \in e}$ be the
vertices of $\matcha$ and for $v\in V(\matcha)$ denote by 
$\matcha_v$ the unique vertex $u$
such that $\set{u,v} \in \matcha$. A vertex $v$ 
is \introduceterm{covered} by $\matcha$ if $v \in V(\matcha)$.
If $\matcha$ is a partial matching in a bipartite graph 
$G = (U \disjointunion V, E)$, 
we identify it with a partial mapping of
$U$ to~$V$.
When referring to the pigeonhole formula,
this mapping will also be identified with an assignment
$\matchbool{\matcha}$ to the  
variables defined by
\begin{equation}
  \label{eq:matching-to-assignment}
  \matchboolof{\matcha}{x_{i,j}}
  =
  \begin{cases}
    * 
    & \text{if $i \notin \domainof{\matcha}$,} 
    \\
    0 
    & \text{if $i \in \domainof{\matcha}$ and $\matchaof{i} \neq j$,} 
    \\
    1 
    & \text{if $i \in \domainof{\matcha}$ and $\matchaof{i} = j$.} 
  \end{cases}
\end{equation}
For simplicity we interchangeably think of partial matchings as
partial functions and as Boolean assignments as defined in
\refeq{eq:matching-to-assignment}.

Given a vertex $v \in \V(G)$,
we write 
$\neigh[G]{v}$
to denote the set of \emph{neighbours of $v$} in the graph $G$
and $\deg[G]{v} = \setsize{\neigh[G]{v}}$ to denote the degree of $v$.
We extend this notion to sets and denote by 
$\neigh[G]{S} = \setdescr{v}{\exists\, (u,v)\in E \text { for } u\in S}$ 
the \emph{neighbourhood}
of a set of vertices $S \subseteq V$.
The \emph{boundary}, 
or \emph{unique  neighbourhood},
$\uniqueNeigh[G]{S}
=
\setdescr[:]
{v \in V \setminus S}
{\setsize{\neigh[G]{v} \intersection S} = 1}
$
of a set of vertices
$S \subseteq V$ 
contains all vertices in
$V \setminus  S$ that have a single neighbour 
in~$S$.
  If the graph is bipartite, there is of course  no need to
  subtract~$S$ from the neighbour set.
We will sometimes drop the subscript $G$ when the graph is clear from
context.
For a set $U \subseteq \V$ we
denote by $G\setminus U$ the subgraph of $G$ induced by the vertex set
$\V \setminus U$.  

A graph $G = (V, E)$ is an
\introduceterm{$(r, \gdegmax, c)$-expander} 
if all vertices $v \in V$ have degree at most $\gdegmax$ and for all
sets $S \subseteq V$, $\setsize{S} \le r$, it holds that
$\setsize{ \neigh{S} \setminus S } \geq c \cdot \setsize{S}$.
Similarly,
$G = (V, E)$
is an \emph{$(r, \gdegmax, c)$-boundary expander} if all vertices 
$v \in V$ 
have degree at most $\gdegmax$ and for all sets  
$S \subseteq V$,  $\setsize{S} \leq r$,
it holds that $\setsize{ \uniqueNeigh{S} } \geq c \cdot \setsize{S}$. 
For bipartite graphs, 
the degree and expansion requirements only apply to the left vertex set:
$G = (U \disjointunion V, E)$
is an
\introduceterm{$(r, \gdegmax, c)$-bipartite expander} 
if all vertices 
$u \in U$ 
have degree at most $\gdegmax$ and for all sets  
$S \subseteq U$,  $\setsize{S} \leq r$,
it holds that $\setsize{ \neigh{S} } \geq c \cdot \setsize{S}$,
and an
\emph{$(r, \gdegmax, c)$-bipartite boundary expander} if 
for all sets  
$S \subseteq U$,  $\setsize{S} \leq r$,
it holds that $\setsize{ \uniqueNeigh{S} } \geq c \cdot \setsize{S}$. 
For bipartite graphs we will only ever be interested in bipartite
notions of expansions, and so which kind of expansion is meant will
always be clear from context.
A simple
but useful observation is that
\begin{equation}
  \label{eq:neighbour-leq-unique}
  \setsize{ N(S) \setminus S } \le 
  \setsize{ \uniqueNeigh{S}}  + 
  \frac{\GdegMax \setsize{ S } - \setsize{ \uniqueNeigh{S} }}{2}
  = \frac{\GdegMax \setsize{ S } + \setsize{ \uniqueNeigh{S} }}{2}
  \eqcomma
\end{equation}
since all non-unique neighbours in $N(S) \setminus S$ have at least two incident edges.
This implies that if a graph $G$ is an 
\mbox{$(r, \gdegmax, (1-\diff)\gdegmax)$}-expander then it 
is
also
an $(r, \gdegmax, (1-2\diff)\gdegmax)$-boundary expander.

For $\numholes, \numpigeons, \GdegMax \in \N$, we
denote by $\Gdist(\numpigeons, \numholes, \GdegMax)$ the distribution over
bipartite graphs
with disjoint vertex sets $U = \set{u_1, \ldots, u_{\numpigeons}}$ and 
$V = \set{v_1, \ldots, v_{\numholes}}$
where the neighbourhood of a vertex $u \in U$ is chosen by sampling 
a subset of size $\GdegMax$ uniformly at random from $V$.
A property is said to hold
\emph{\aas} on $\Gdist(f(\numholes), \numholes, \GdegMax)$ if it
holds with probability that approaches $1$ as $\numholes$ approaches infinity.

For the right parameters, a randomly sampled graph
$G \sim \Gdist(\numpigeons, \numholes, \GdegMax)$ is asymptotically
almost surely a good boundary expander as stated next.

\begin{lemma}\label{lem:expander}
  Let $\numpigeons,\numholes$ and $\GdegMax$ be large enough integers such that 
  $\numpigeons > \numholes \ge \GdegMax $.
  Let $\diff, \chi \in \Rplus$
  be such that 
  $\diff < 1/2$,
  $\diff \ln \chi \ge 2$
  and $\diff \GdegMax \ln \chi \ge 4 \ln \numpigeons$.
  Then for
  $r =  \numholes /(\GdegMax \cdot \chi )$ 
  and $c = (1 - 2 \diff) \GdegMax$
  it holds
  asymptotically almost surely for a randomly sampled graph
  $G \sim \Gdist(\numpigeons,\numholes,\GdegMax)$
  that
  $G$ is an 
  $(r, \GdegMax, c)$-boundary expander.
\end{lemma}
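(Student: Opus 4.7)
The plan is to reduce boundary expansion to ordinary (neighbour) expansion via \refeq{eq:neighbour-leq-unique}, and then to carry out a routine first-moment calculation. Since that observation already tells us that every $(r, \GdegMax, (1-\diff)\GdegMax)$-bipartite expander is automatically an $(r, \GdegMax, (1-2\diff)\GdegMax)$-bipartite boundary expander, it suffices to show that $G \sim \Gdist(\numpigeons,\numholes,\GdegMax)$ is the former \aas{}.

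I would bound the failure probability by a union bound over sizes $s \in [r]$, over left sets $S \subseteq U$ of size~$s$, and over right sets $T \subseteq V$ of size $t := \lfloor (1-\diff)\GdegMax s\rfloor$, estimating for each triple the probability that $\neigh{S}\subseteq T$. Since the $\GdegMax$ neighbours of each $u \in S$ are sampled independently and uniformly, this probability equals $\bigl(\binom{t}{\GdegMax}/\binom{\numholes}{\GdegMax}\bigr)^{s}\leq (t/\numholes)^{\GdegMax s}$. Combined with the standard estimate $\binom{N}{k}\leq (eN/k)^{k}$ applied to $\binom{\numpigeons}{s}$ and $\binom{\numholes}{t}$, a short rearrangement expresses the $s$-th summand in the union bound as
\begin{equation*}
\Biggl[\frac{e\numpigeons}{s}\cdot e^{(1-\diff)\GdegMax}\cdot \biggl(\frac{(1-\diff)\GdegMax s}{\numholes}\biggr)^{\diff\GdegMax}\Biggr]^{s}\eqperiod
\end{equation*}

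For $s\leq r=\numholes/(\GdegMax\chi)$ the $s$-dependent middle factor is at most $\chi^{-\diff\GdegMax}$, and the two hypotheses of the lemma are tailored precisely so that this tiny factor swallows everything else. Splitting $\chi^{\diff\GdegMax}=\chi^{\diff\GdegMax/2}\cdot\chi^{\diff\GdegMax/2}$ and using $\diff\ln\chi\geq 2$ on one half yields $\chi^{\diff\GdegMax/2}\geq e^{\GdegMax}$, which kills the factor $e^{(1-\diff)\GdegMax}$; while $\diff\GdegMax\ln\chi\geq 4\ln\numpigeons$ applied to the other half yields $\chi^{\diff\GdegMax/2}\geq \numpigeons^{2}$, which kills the $\numpigeons$ coming from $\binom{\numpigeons}{s}$ with room to spare. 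This shows the bracketed base is at most $e/(s\numpigeons)$, so each summand is at most $(e/(s\numpigeons))^{s}$, and the sum over $s=1,\ldots,r$ is $\Bigoh{1/\numpigeons}=\littleoh{1}$ as $\numpigeons\geq\numholes\to\infty$. The main obstacle is really just this bookkeeping step: each of the two hypotheses on $\diff,\chi,\GdegMax$ controls one distinct source of error in the crude binomial estimates, and no case split on the range of $s$ is needed, since bounding the $s$-dependent middle factor by its value at $s=r$ only makes the small-$s$ terms even tinier.
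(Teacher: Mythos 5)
Your proposal is correct and follows essentially the same route as the paper's proof: both reduce boundary expansion to neighbour expansion via \refeq{eq:neighbour-leq-unique} and then bound, for each left set $S$ of size $s\le r$, the probability that $N(S)$ fits inside a right set of size roughly $(1-\diff)\GdegMax s$, using the same binomial estimates. The only difference is cosmetic bookkeeping — you split $\chi^{\diff\GdegMax}$ into two halves to absorb the $e^{(1-\diff)\GdegMax}$ and $\binom{\numpigeons}{s}$ factors, whereas the paper applies the two hypotheses in sequence — and both arrive at a failure probability of order $1/\numpigeons$.
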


\begin{proof}
  Let $G= (U \disjointunion V, E)$. We 
  first estimate the probability 
  that a set $S \subseteq U$ of size at most $r$ violates 
  the boundary expansion. 
  For brevity, let us write
  $s = \setsize{ S }$
  and $c' = (1 - \diff)\GdegMax$.
  In view of~\refeq{eq:neighbour-leq-unique}, 
  the probability that $S$ violates the boundary expansion
  can be bounded by
  \begin{subequations}
  \begin{align}
    \Prob{ \setsize{ \uniqueNeigh{S} } < cs } 
    &\le \PROB{ \setsize{ N(S) } < \frac{\GdegMax s + cs}{2}  }\\
    &= \Prob{ \setsize{ N(S) } < c's  }\\
    &\leq
      \binom{\numholes}{ c's } \cdot
      \left(
      \frac
      {\binom{c' s}{\gdegmax}}
      {\binom{\numholes}{\gdegmax}}
      \right)^s \\
    &\le  \binom{\numholes}{c's}  \cdot
    \left( \frac{ c's }{\numholes} \right)^{\GdegMax s}\\
    &\le 
     \left[
     \left( \frac{e\numholes}{c's} \right)^{c'} \cdot
     \left( \frac{c's}{\numholes} \right)^{\GdegMax}
     \right]^s
    \\
    &= 
    \left[
    {e}^{(1-\diff)\GdegMax} \cdot
    \left( \frac{\numholes}{c' s} \right)^{-\diff \GdegMax }
    \right]^s\\
    &\le \exp \left(\GdegMax s \left(1 -\diff \ln \left(\frac{\numholes}{c's}\right)\right) \right)\\ \label{eq:exp_first}
    &\le \exp \left(\GdegMax s \left(1 -\diff \ln \left(\frac{\chi}{1-\diff}\right)\right) \right)\\
    &\le \exp \left(\GdegMax s (1 -\diff \ln {\chi}) \right)\\ \label{eq:exp_last}
    &\le \exp \left(- (\GdegMax s \diff \ln {\chi}) /2 \right) \eqcomma
  \end{align}
  \end{subequations}
  where \eqref{eq:exp_first} holds since
    $
    s \leq r \leq \numholes / (\gdegmax \chi)
    $
    and
    \eqref{eq:exp_last} holds since $\diff \ln \chi \ge 2$.
  Hence, the probability that $G$ is not a boundary expander
  can be bounded by
  \begin{align}
\nonumber
    \Prob{G \text{ is not an expander}} &\le 
    \sum_{s \in [r]}\binom{\numpigeons}{s} \exp( - (\GdegMax s \diff \ln \chi)/2)\\
    &\le \sum_{s \in [r]} \exp (-s ((\diff \GdegMax \ln \chi)/2 - \ln
      \numpigeons))\\
\nonumber
    &\le \sum_{s \in [r]} \exp( -s \ln \numpigeons) \le \frac{1}{\numpigeons - 1} \eqcomma 
  \end{align}
where the second-to-last inequality holds since $\diff \GdegMax \ln \chi \ge 4 \ln \numpigeons$.
\end{proof}

We will also consider some parameter settings where randomly
sampled graphs do not have strong enough expansion for our purposes, 
but where we can resort to explicit constructions as follows.

\begin{theorem}[\cite{GUV09Unbalanced}]\label{thm:expander}
  For all positive integers $\numpigeons$, $r \le \numpigeons$, all $\diff > 0$,
  and all constant $\nu > 0$, there is an explicit 
  $( r, \gdegmax, (1- \diff)\GdegMax)$-expander
  $G=(U\disjointunion V, E)$,
  with $\setsize{U} = \numpigeons$,
  $\setsize{V} = \numholes$, 
  $\GdegMax = \BIGOH{ ((\log \numpigeons)(\log r) / \diff )^{1+1/\nu} }$
  and $\numholes \le \GdegMax^2 \cdot r^{1+\nu}$.
\end{theorem}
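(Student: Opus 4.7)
The statement is cited from the work of Guruswami, Umans, and Vadhan, and in practice we would simply invoke their construction as a black box. However, here is the plan I would follow if writing out the underlying argument, following the approach in \cite{GUV09Unbalanced}.

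The plan is to build the bipartite graph from a list-decodable algebraic code, specifically a suitable variant of the Parvaresh--Vardy code. The correspondence is standard: identify each left vertex $u \in U$ with a codeword (equivalently, with a message, say a low-degree polynomial $f \in \mathbb{F}_q[X]$), and identify each right vertex $v \in V$ with a position-symbol pair $(i, \sigma) \in [\gdegmax] \times \Sigma$, where $\Sigma$ is the code alphabet and $\gdegmax$ is the block length. We then place an edge between $u$ and $v = (i, \sigma)$ precisely when the $i$-th symbol of the codeword for $u$ equals $\sigma$. By construction, the left degree is exactly $\gdegmax$, and the right side has size $\numholes = \gdegmax \cdot |\Sigma|$.

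The first step is to recall the Parvaresh--Vardy encoding: a message $f$ of degree less than $k$ is encoded by evaluating the tuple $\bigl(f,\, f^h \bmod E(X),\, \ldots,\, f^{h^{s-1}} \bmod E(X)\bigr)$ at every point of $\mathbb{F}_q$, for an irreducible $E \in \mathbb{F}_q[X]$ and a suitable integer parameter $s$ controlling the redundancy. The key property is list-decodability: any Hamming ball of radius roughly $(1-\diff)\gdegmax$ around an arbitrary received word contains only a bounded number of codewords, where the bound depends on $s$, $k$, and $q$. The second step is to translate this list-decoding statement into an expansion statement, in the contrapositive form: if a left set $S$ of size at most $r$ failed to satisfy $|N(S)| \ge (1-\diff)\gdegmax\,|S|$, then by an averaging argument all vertices of $S$ would be ``covered'' to within a $\diff$-fraction by a right set $T$ of size at most $(1-\diff)\gdegmax\,|S|$, which corresponds to a received word whose list contains more than $|S|$ codewords. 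Choosing the code parameters so that this list size is smaller than $r$ yields the desired expansion.

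The third step is parameter optimization. One picks $q$ as a power of $2$ slightly larger than $((\log \numpigeons)(\log r)/\diff)^{1+1/\nu}$, picks $k$ and $s$ so that the message space has size exactly $\numpigeons$ and the list-decoding radius reaches $(1-\diff)\gdegmax$, and finally verifies that $\gdegmax = q = \BIGOH{((\log \numpigeons)(\log r)/\diff)^{1+1/\nu}}$ and $\numholes = \gdegmax \cdot q^s \le \gdegmax^2 \cdot r^{1+\nu}$. The parameter $\nu$ shows up as the exponent controlling the trade-off between the block length of the code and the alphabet size, \ie between $\gdegmax$ and $\numholes$. Explicitness follows from the fact that Parvaresh--Vardy encoding is computable in time polynomial in $\numpigeons$.

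The main obstacle is the algebraic list-decoding analysis for the Parvaresh--Vardy code: showing that the list of messages consistent with a given corrupted codeword can be recovered as roots of a low-degree bivariate polynomial, and bounding its size. This is entirely nontrivial but already carried out in detail in \cite{GUV09Unbalanced}, so for our purposes we do not redo this work and simply invoke the theorem as stated.
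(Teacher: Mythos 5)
The paper itself gives no proof of this theorem; it is imported verbatim from \cite{GUV09Unbalanced} and used as a black box, which is exactly what you do. Your sketch of the underlying Parvaresh--Vardy construction (left vertices as messages, right vertices as position--symbol pairs, expansion via the list-size bound, and the parameter trade-off governed by $\nu$) is a faithful account of the argument in the cited work, so there is nothing further to check against the paper.
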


\begin{corollary}\label{cor:boundary_subexp}
  Let $\kappa, \varepsilon, \nu$ be positive constants,
  $\kappa <\frac{1}{8}$, and let $\numholes$ be a large enough
  integer.  Then there is an explicit graph
  $G=(U \disjointunion V, E)$, with
  $\setsize{U} = \numpigeons = 2^{\bigomega{\numholes^\kappa}}$ and
  $\setsize{V} \leq \numholes$, that is an
  $(\numholes^{\frac{1}{1 + \nu} - \frac{4\kappa}{\nu} }, \GdegMax, (1
  - 2\diff)\GdegMax)$-boundary expander for
  $\diff = \frac{ \varepsilon \log \numholes }{ \log \numpigeons }$
  and $\GdegMax = O(\log^{2 (1 + 1/\nu) }\numpigeons)$.
\end{corollary}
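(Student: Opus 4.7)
The plan is to apply \refth{thm:expander} (GUV) directly with carefully chosen parameters and then convert the ordinary expansion guarantee into the desired boundary expansion guarantee using the simple inequality in~\refeq{eq:neighbour-leq-unique}. Recall that this inequality shows that any $(r, \GdegMax, (1-\diff)\GdegMax)$-expander is automatically an $(r, \GdegMax, (1 - 2\diff)\GdegMax)$-boundary expander, so the only work is to pick parameters and verify bookkeeping.

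First I would fix $\numpigeons = 2^{\lceil n^{\kappa} \rceil}$ so that $\log \numpigeons = \Theta(n^\kappa)$ and $\numpigeons = 2^{\bigomega{n^\kappa}}$ as required, and set $r = \lceil n^{\frac{1}{1+\nu} - \frac{4\kappa}{\nu}} \rceil$; the condition $\kappa < 1/8$ together with the constant $\nu$ ensures this exponent is positive (and indeed less than~$1$) so that $r \geq 1$ and $r \leq \numpigeons$ holds for large $n$. With $\diff = \varepsilon \log n / \log \numpigeons$ as in the statement, invoking \refth{thm:expander} with these values of $\numpigeons, r, \diff, \nu$ yields an explicit $(r, \GdegMax, (1-\diff)\GdegMax)$-expander $G = (U \disjointunion V, E)$ with
\begin{equation*}
  \GdegMax = \BIGOH{\bigl((\log \numpigeons)(\log r)/\diff\bigr)^{1+1/\nu}}
  \qquad \text{and} \qquad
  \setsize{V} \leq \GdegMax^2 \cdot r^{1+\nu}\eqperiod
\end{equation*}

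Next I would verify that these bounds collapse to the ones claimed. For the degree, since $\log r \leq \log n$ and $\diff = \varepsilon \log n / \log \numpigeons$, we get $(\log \numpigeons)(\log r)/\diff \leq (\log \numpigeons)^2/\varepsilon$, so $\GdegMax = \BIGOH{\log^{2(1+1/\nu)} \numpigeons}$ with the constant absorbing the $\varepsilon$-dependence, matching the statement. For the right-hand side size, the choice of $r$ gives $r^{1+\nu} = n^{1 - 4\kappa - 4\kappa/\nu}$, while $\GdegMax^2 = \BIGOH{n^{2\kappa \cdot 2(1 + 1/\nu)}} = \BIGOH{n^{4\kappa + 4\kappa/\nu}}$, so that $\GdegMax^2 r^{1+\nu} = \BIGOH{n}$, which can be absorbed into $\setsize{V} \leq n$ by rescaling constants (or by taking $\log \numpigeons$ slightly smaller than $n^\kappa$). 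Finally, applying \refeq{eq:neighbour-leq-unique} converts the $(r, \GdegMax, (1-\diff)\GdegMax)$-expander into the required $(r, \GdegMax, (1 - 2\diff)\GdegMax)$-boundary expander.

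The main obstacle is purely bookkeeping: ensuring that the $\varepsilon$, $\kappa$ and $\nu$ constants are threaded consistently through the GUV parameters so that the exponents cancel to give exactly $n$ on the right-hand side and $\log^{2(1+1/\nu)}\numpigeons$ on the degree. The substantive content of the corollary lies entirely in \refth{thm:expander} and in the elementary conversion \refeq{eq:neighbour-leq-unique}; once these are in hand, no further combinatorial argument is needed beyond selecting the parameters so that $\log \numpigeons$ and $\log n$ are in the right polynomial relationship dictated by $\kappa$ and $\diff$.
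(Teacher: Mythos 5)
Your proposal matches the paper's proof essentially verbatim: both invoke Theorem~\ref{thm:expander} with $r = \numholes^{\frac{1}{1+\nu}-\frac{4\kappa}{\nu}}$ and $\diff = \varepsilon\log\numholes/\log\numpigeons$, tune the constant in the exponent of $\numpigeons = 2^{\Theta(\numholes^\kappa)}$ so that $\GdegMax^2 r^{1+\nu} \leq \numholes$, and convert ordinary expansion to boundary expansion via~\refeq{eq:neighbour-leq-unique}. The bookkeeping is correct, so nothing further is needed.
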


  \begin{proof}
    Let $G$ be the expander from Theorem~\ref{thm:expander} for the
    parameters
    $\numpigeons = 2^{\varepsilon' \numholes^\kappa}$, 
    $r = \numholes^{\frac{1}{1 + \nu} - \frac{4\kappa}{\nu} }$, and
    $\diff = \frac{ \varepsilon \log \numholes }{ \log \numpigeons }$,
    where $\varepsilon'$ is chosen to be a small enough constant so that 
    $\GdegMax^2 \cdot r^{1+\nu} \leq  \numholes$.
    Such a graph $G$ is an $(r, \gdegmax, (1-\diff)\GdegMax)$-expander
    for $\GdegMax$ as in the
    Corollary. By \refeq{eq:neighbour-leq-unique} it follows that an 
    $(r, \gdegmax, c)$-expander is an
    $(r, \gdegmax, 2c - \gdegmax)$-boundary expander, 
    and hence $G$ is an 
    $\bigl(r, \GdegMax, (1 -  2\diff)\GdegMax\bigr)$-boundary expander. Note
    that Theorem~\ref{thm:expander} guarantees that the right side of $G$ has size at most 
    $\GdegMax^2 \cdot r^{1+\nu} \leq \numholes$. 
  \end{proof}

\section{Two Key Technical Tools}
\label{sec:filterandclosure}

In this section we review two crucial technical ingredients of the
resolution lower bound proofs.

\squeezesubsection{Pigeon Filtering}

The following lemma is a generalization of
\cite[Lemma~6]{Razborov03ResolutionLowerBoundsWFPHP} and states that
for every small collection~$\vecr(1), \ldots, \vecr(\nvecs)$ of
integer vectors there is a vector~$\vecr$ such that each
vector~$\vecr(\ell)$ has either many coordinates~$i$ smaller
than~$\vecr$ (\ie~$r_i(\ell) \leq r_i$), or has few coordinates~$i$
which are by~$1$ larger than~$\vecr$, that is, there are few
coordinates~$i$ such that~$r_i(\ell) \leq r_i+1$.

The proof follows by sampling each coordinate of~$\vecr$ independently
from a geometric probability distribution and applying a union bound
over the vectors $\vecr(1), \ldots, \vecr(\nvecs)$. The difference
to~\cite{Razborov03ResolutionLowerBoundsWFPHP} is that our geometric
distribution depends on an additional parameter~$\alpha$ (which is
implicitly fixed to $\alpha = 2$ in
\cite{Razborov03ResolutionLowerBoundsWFPHP}) that allows us to get a
better upper bound on the numbers~$\ri{i}$.  This turns out to be
crucial for us---we discuss this in more detail in \refsec{sec:GFPHP}
where we explain how the lemma relates to the weak pigeonhole
principle.

\begin{lemma}[Filter lemma]\label{lem:filter}
  Let $m, \nvecs \in \Nplus$ and suppose that $w_0, \alpha \in [m]$
  are such that $w_0 > \ln \nvecs$ and $w_0 \ge \alpha^2 \ge 4$.
  Further, let $\vecr(1), \ldots, \vecr(\nvecs)$ be integer vectors,
  each of the form $\vecrofdef{\indexvec}{m}$. Then there exists a
  vector $\vecrdef{m}$ of positive integers
  $\ri{i} \le \Floor{ \frac{\log m}{\log \alpha} } - 1$ such that for
  all $\indexvec \in [\nvecs]$ at least one of the following holds:
  \ifthenelse{\boolean{false}} {\begin{enumerate}}
    {\begin{enumerate}[topsep=3pt,itemsep=0pt]}
    \item $\Setsize{ \setdescr[:]{ i \in [m] }{ \riof{\indexvec}{i} \le \ri{i} } } \ge w_0$ 
    \eqcomma
    \item $\Setsize{ \setdescr[:]{ i \in [m] }{\riof{\indexvec}{i} \le \ri{i} + 1} } \le 
      O(\alpha \cdot w_0)$ \eqperiod
  \end{enumerate}
\end{lemma}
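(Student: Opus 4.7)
The natural approach is probabilistic: sample each coordinate $\ri{i}$ independently from a geometric-like distribution on $\set{1, \ldots, K}$ with $K = \Floor{\log m / \log \alpha} - 1$, taking $\prob{\ri{i} = k} \propto \alpha^{-k}$ with appropriate normalization. The key feature of this distribution is that $\prob{\ri{i} \ge k-1} \le \Bigoh{\alpha} \cdot \prob{\ri{i} \ge k}$ away from the truncation boundary, and only a small $\Bigoh{\alpha^2/m}$ correction is needed per coordinate to handle the boundary at $k = K+1$.

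Fix $\indexvec \in \nset{\nvecs}$ and set $S_{\indexvec} = \Setsize{\setdescr{i \in \nset{m}}{\riof{\indexvec}{i} \le \ri{i}}}$ and $S'_{\indexvec} = \Setsize{\setdescr{i \in \nset{m}}{\riof{\indexvec}{i} \le \ri{i}+1}}$; the goal is to show that $S_{\indexvec} \ge w_0$ or $S'_{\indexvec} \le \Bigoh{\alpha w_0}$ holds except with probability at most $1/\nvecs$. Writing these as sums of independent indicator variables and summing the tail estimates above gives $\expectation{S'_{\indexvec}} \le \Bigoh{\alpha} \cdot \expectation{S_{\indexvec}} + \Bigoh{\alpha^2}$, where the additive term accounts for indices $i$ with $\riof{\indexvec}{i} = K+1$ that contribute to $S'_{\indexvec}$ but not to $S_{\indexvec}$.

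The plan is then a case analysis on $\expectation{S_{\indexvec}}$. When $\expectation{S_{\indexvec}}$ exceeds a suitable constant multiple of $w_0$, Chernoff's lower-tail bound yields $\prob{S_{\indexvec} < w_0} \le \exp\bigl(-\bigomega{w_0}\bigr)$, which is at most $1/\nvecs$ by the hypothesis $w_0 > \ln \nvecs$, so condition~(1) holds. Otherwise $\expectation{S'_{\indexvec}} \le \Bigoh{\alpha w_0 + \alpha^2} = \Bigoh{\alpha w_0}$, where the last equality uses the hypothesis $w_0 \ge \alpha^2$; Chernoff's upper-tail bound then yields $\prob{S'_{\indexvec} > C\alpha w_0} \le \exp\bigl(-\bigomega{\alpha w_0}\bigr)$ for a sufficiently large constant $C$, which is again at most $1/\nvecs$ since $\alpha \ge 2$ and $w_0 > \ln \nvecs$. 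A union bound over the $\nvecs$ choices of $\indexvec$ then shows that a randomly sampled $\vecr$ satisfies the lemma with strictly positive probability, establishing existence.

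The main technical obstacle will be tuning the constants in the two Chernoff bounds and the threshold separating the two cases so that everything works under exactly the stated hypotheses. The role of $\alpha$ is to parametrize a trade-off between the upper bound on $\ri{i}$, which shrinks like $\log_\alpha m$ as $\alpha$ grows, and the slack $\Bigoh{\alpha w_0}$ tolerated in condition~(2), which correspondingly worsens; the hypothesis $w_0 \ge \alpha^2$ is precisely what ensures that the boundary correction $\Bigoh{\alpha^2}$ to $\expectation{S'_{\indexvec}}$ gets absorbed into $\Bigoh{\alpha w_0}$.
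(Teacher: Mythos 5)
Your proposal is correct and is essentially the paper's own argument: the paper samples each coordinate independently from the same truncated geometric distribution $\Pr[\rb = k] \propto \alpha^{-k}$ on $\set{1,\dots,\Floor{\log m/\log\alpha}-1}$, splits into cases according to the weight $W(\vecrof{\indexvec}) = \sum_i \alpha^{-\riof{\indexvec}{i}}$ (which is just your $\expectation{S_{\indexvec}}$ up to the normalizing constant and truncation), and applies the lower- and upper-tail Chernoff bounds plus a union bound exactly as you describe. The only quibble is that the per-coordinate truncation correction is $\Bigoh{\alpha^3/m}$ rather than $\Bigoh{\alpha^2/m}$ (since $\Pr[\rb = K] = \beta\alpha^{-K} \le \alpha\cdot\alpha^2/m$), but the total $\Bigoh{\alpha^3}$ is still absorbed into $\Bigoh{\alpha w_0}$ by the hypothesis $w_0 \ge \alpha^2$, so nothing breaks.
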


\ifthenelse{\boolean{false}}
{
\begin{proof}[Proof sketch.]
}
{
\begin{proof}
}
  We first define 
  a weight function $W(\vecr)$ for vectors $\vecrdef{m}$ as
\begin{equation}
  \label{eq:weight-function}
  W(\vecr) = \sum_{i \in [m]} \alpha^{-\ri{i}}
  \eqperiod
\end{equation}
In order to establish the lemma,
  it is sufficient to show that there
  exist constants $\cgamma$ and $\cgammap$ 
  and 
  a vector $r = (r_1, \ldots, r_m)$
  such that for all $\indexvec \in [\nvecs]$
  the implications
 \ifthenelse{\boolean{false}}{\vspace{-5pt}}{}\begin{subequations}
  \begin{align}
    \label{eq:filter-low-weight}
    W(\vecrof{\indexvec}) \ge \frac{ \cgammap w_0}{ \alpha } 
    &\ \Rightarrow \  
      \setsize{ \set{ i \in [m] \mid \riof{\indexvec}{i} \le  \ri{i}} } \ge w_0 \eqcomma
    \\
    \label{eq:filter-high-weight}
    W(\vecrof{\indexvec}) \le \frac{ \cgammap w_0}{ \alpha } 
        &\ \Rightarrow \  
      \setsize{ \set{ i \in [m] \mid \riof{\indexvec}{i} \le  \ri{i} + 1 } } \le 
      \cgamma \alpha w_0  
  \end{align}
 \end{subequations}
  hold.
  Let $\limval = \Floor{ \frac{ \log m }{ \log \alpha } } - 1$ and let $\distfilter$ be a 
  probability distribution
  on $[\limval]$ given by
  $\prob {\rb = i} = \beta \cdot \alpha^{-i}$
  for all $i \in [\limval]$,
  where $\beta = \frac{\alpha - 1}{1 - \alpha^{-\limval}}$. 
\ifthenelse{\boolean{false}}{}{Note that 
  \begin{equation}
  \beta \sum_{i \in [\limval]} \alpha^{-i} 
    = \frac{\alpha - 1}{1 - \alpha^{-\limval}} 
      \left( \frac{{1 - \alpha^{-\limval}}}{\alpha - 1} \right) = 1
  \end{equation}
  and thus $\distfilter$ is a valid distribution.}
  Let us write 
  $\vecrb = (\rbi{1}, \ldots, \rbi{m})$
  to denote a random vector with coordinates sampled  independently
  according to~$\distfilter$.
  We claim that for every $\indexvec \in [\nvecs]$ the implications
  \refeq{eq:filter-low-weight}
  and~\refeq{eq:filter-high-weight}
  are true
  asymptotically almost surely.
\ifthenelse{\boolean{false}}{The proof of this fact follows by applying Chernoff bounds as 
in~\cite{Razborov03ResolutionLowerBoundsWFPHP}. 
}
{Let us  proceed to verify this.
  \begin{enumerate}
  \item 
    Suppose that 
    $W(\vecrof{\indexvec}) \ge \frac{ \cgammap w_0}{ \alpha}$. 
    We wish to show     that 
    $\setsize{ \setdescr[:]{ i \in [m] }{ \ri{i} \ge \riof{\indexvec}{i} } }
    \ge w_0$.
    Observe that 
    coordinates larger than $\limval$ contribute only
    \begin{equation}
      \sum_{\riof{\indexvec}{i} > \limval}{\alpha^{-\riof{\indexvec}{i}}} \le 
      m \cdot \alpha^{-\limval-1} < \alpha
    \end{equation}
to $W(\vecrof{\indexvec})$,
    and hence the weight function  truncated at $\limval$ is
    \begin{equation}
      \label{eq:truncated-weight}
      \sum_{\riof{\indexvec}{i} \le \limval} { \alpha^{-\riof{\indexvec}{i}} } \ge 
      \frac{ \cgammap w_0}{ \alpha } - \alpha \ge (\cgammap - 1)\frac{ w_0}{ \alpha } \eqcomma
  \end{equation}
  since $ w_0 \geq \alpha^2$.
  Note that for every coordinate $i$ with $\riof{\indexvec}{i} \le \limval$
  we have that 
  $\prob{\rbi{i} \ge \riof{\indexvec}{i}} \ge \beta \cdot \alpha^{-\riof{\indexvec}{i}}$.
  Consider the 
  random
  set
  $\goodset{\indexvec} = \set{ i \in [m] \mid 
    \riof{\indexvec}{i} \le \limval \text{ and } \rbi{i} \ge \riof{\indexvec}{i} }$. 
  We can appeal to~\refeq{eq:truncated-weight}
  to derive  that 
\begin{align}
  \Expectation{ \setsize{ \goodset{\indexvec} } } 
  &=
  \sum_{\riof{\indexvec}{i} \le \limval}
  \prob{\rbi{i} \ge \riof{\indexvec}{i}}\nonumber\\
  &\geq 
  \sum_{\riof{\indexvec}{i} \le \limval}
  \beta
  { \alpha^{-\riof{\indexvec}{i}} } \nonumber\\
  &\geq
  \beta (\cgammap - 1) \frac{ w_0}{ \alpha }
  \geq
  \frac{\cgammap - 1}{2} w_0 
\end{align}
is  a lower bound  on the expected size of $\goodset{\indexvec}$.
As the events $\rbi{i} \ge \riof{\indexvec}{i}$ are independent, by
the multiplicative Chernoff bound, that is, \refthm{thm:Chernoff}, we
get that
\begin{subequations}
\begin{align}
  \Prob{\abs{ \goodset{\indexvec} } < w_0} 
  &\le
  \Prob{ \setsize{ \goodset{\indexvec} } - 
  \Expectation{ \setsize{ \goodset{\indexvec} } } \le w_0  - \Expectation{ \setsize{ \goodset{\indexvec} } } }\\
  &=
  \Prob{ \Expectation{ \setsize{ \goodset{\indexvec} } } - \setsize{ \goodset{\indexvec} }
   \ge \Expectation{ \setsize{ \goodset{\indexvec} } } - w_0  }\\
  &\le
  \exp
  \bigl( - 
  \frac{\bigl(\Expectation{ \setsize{ \goodset{\indexvec} } } - w_0 \bigr)^2}{2 \Expectation{ \setsize{ \goodset{\indexvec} } } }
  \bigr)\\  
  &=
  \exp 
  \left( - 
  \frac{\Expectation{ \setsize{ \goodset{\indexvec} } }^2 - 2\Expectation{ \setsize{ \goodset{\indexvec} } } w_0 + w_0^2}{2 \Expectation{ \setsize{ \goodset{\indexvec} } } }
  \right)\\
  &\le
  \exp
  \left( - 
  \frac{\Expectation{ \setsize{ \goodset{\indexvec} } } - 2 w_0}{ 2 }
  \right)\\
  &\le
  \exp
  \left( - 
  \frac{ (\cgammap - 5)}{ 4 } w_0
  \right)\\
  &\le
  \exp( -2 w_0 )\\
  &\le
  \nvecs^{-2} \eqcomma
\end{align}
\end{subequations}
where the second to last inequality holds for $\cgammap\geq 13$. 

\item Suppose that $W(\vecrof{\indexvec}) \le \frac{ \cgammap w_0}{ \alpha }$.
  Now we need to show 
  that 
  $\setsize{ \setdescr[:]{ i \in [m] }{ \ri{i} \ge \riof{\indexvec}{i} - 1 } }
  \le \cgamma \alpha w_0$
  holds asymptotically almost surely.
  Note that
\begin{subequations}
  \begin{align}
	\prob{\rbi{i} \ge \riof{\indexvec}{i} - 1} 
	&= \beta \sum_{j=\riof{\indexvec}{i} - 1}^\limval \alpha^{-j} \\
	&= \frac{\alpha - 1}{1 - \alpha^{-\limval}}
	\left( 
	\frac{\alpha^{-\riof{\indexvec}{i} + 2} -
	\alpha^{-\limval}}{\alpha - 1} 
	\right)\\
	&= \frac{\alpha^{-\riof{\indexvec}{i} + 2} - 
	\alpha^{-\limval}}{1 - \alpha^{-\limval}}\\
	&= \frac{\alpha^{\limval - \riof{\indexvec}{i} + 2} - 1}{\alpha^{\limval} - 1}\\
	&\le \frac{\alpha^{\limval - \riof{\indexvec}{i} + 2} }{\alpha^{\limval}/2} 
	= 2 \alpha^{2-\riof{\indexvec}{i}} \eqperiod
  \end{align}
\end{subequations}
  Similar to the previous case, let
  $\betterset{\indexvec} = \set{ i \in [m] \mid  \rbi{i} \ge \riof{\indexvec}{i} - 1 }$.
  We can upper-bound the expected cardinality of 
  $\betterset{\indexvec}$ by
  \begin{align}
    \Expectation{ \setsize{ \betterset{\indexvec} } } 
    =
    \sum_{i\in [m]}
    \prob{\rbi{i} \ge \riof{\indexvec}{i} - 1}    
    \le 
    2 \alpha^2 W(\vecrof{\indexvec}) \le 2\cgammap \alpha w_0 \eqperiod
  \end{align}
  Again, we apply the Chernoff bound in \refthm{thm:Chernoff} and conclude that
  \begin{align}
\nonumber
    \Prob{ \setsize{ \betterset{\indexvec} } \ge \cgamma \alpha w_0 }
    &\le
    \PROB{ \setsize{ \betterset{\indexvec} } - 
    \Expectation{ \setsize{ \betterset{\indexvec} } }  
    \ge \cgamma \alpha w_0 - 2\cgammap \alpha w_0 }\\
    &\le
    \exp\left( - \frac{(\cgamma - 2\cgammap)^2 (\alpha w_0)^2}{4\cgammap \alpha w_0 + (\cgamma - 2\cgammap) \alpha w_0} \right)\\
\nonumber
    &\le
    \exp( - \alpha w_0 )\\
\nonumber
    &\le
    \nvecs^{-2}
  \end{align}
where the second to last inequality holds for $\cgamma$ sufficiently larger than $\cgammap$, say $\cgamma \geq 5 \cgammap$.
\end{enumerate}
}A union bound argument over all vectors in 
$\setdescr[:]{ \vecrof{\indexvec} }{ \indexvec \in [\nvecs] }$ 
for both cases
shows
that for $\cgammap \geq 13$ and $\cgamma \geq 5 \cgammap$
there exists a choice of $\vecrdef{m}$ 
such that both implications~\refeq{eq:filter-low-weight} and~\refeq{eq:filter-high-weight} hold.
\end{proof}

\squeezesubsection{Graph Closure}

\newcommand{\Vexp}{V_{\mathsf{exp}}}

A key concept in our work is that of a \emph{closure} of a vertex set,
which seems to have originated
in~\cite{ABRW04Pseudorandom,AR03LowerBounds}. In later arguments we
will want to remove a set~$T$ of vertices along with its
neighbourhood~$N(T)$ from an expanding graph~$G$ while maintaining
some guarantee of expansion. The closure of~$T$ is a suitably small
set $S \supseteq T$ such that the graph $G\setminus (S \cup N(S))$ is
a reasonably good expander; see \reflem{lem:closure_expansion} for the
formal guarantees. For intuition, it is useful to think of~$S$ as being
constructed by an iterative
process~$T = S_0 \subset S_1 \subset \cdots \subset S_t = S$
where~$S_i = S_{i-1} \cup S'_i$ for~$S'_i$ any small set of vertices
in~$G\setminus (S_{i-1} \cup N(S_{i-1}))$ that is not sufficiently
expanding. It turns out to be more convenient to work with the
somewhat technical definition that follows.

In order to have a definition that makes sense for both expanders and
bipartite expanders, we define $\Vexp(G)$ to be the set of vertices of
$G$ that expand, that is, if $G=(V,E)$ is an expander then
$\Vexp(G) = V$, and if $G=(U\disjointunion V, E)$ is a bipartite
expander then $\Vexp(G) = U$.

\begin{definition}[Closure]\label{def:closure}
  For an expander graph $G$ and
  vertex sets $S \subseteq \Vexp(G) $ and $U \subseteq V(G)$,
  we say that
  the set $S$ is \emph{\contained{U}{r}{\almostcontained}} if
  $\setsize{S} \le r$ 
  and 
  $
  \Setsize{
    \uniqueNeigh{S} \setminus U
  }
  <
  \almostcontained \cdot \setsize{S}$. 

  For any expander graph $G$  and any set $T \subseteq \Vexp(G)$       of size $\setsize{T} \leq r$, we let
  $\closure[r, \almostcontained]{T}$
  denote an arbitrary but fixed maximal set
  such that
  $T \subseteq \closure[r, \almostcontained]{T} \subseteq \Vexp(G)$
  and  $\closure[r, \almostcontained]{T}$ is
  \contained{\neigh{T}}{r}{\almostcontained}.
\end{definition}

Note that the closure of any set $T$ of size 
$\setsize{T} \le r$ as defined above does indeed exist, since~$T$ itself is
\contained{\neigh{T}}{r}{\almostcontained}.

\begin{lemma}\label{lem:closure_size}
  Suppose that $G$ 
  is an $(r, \gdegmax, \expansionconst )$-boundary expander
  and that
  $T \subseteq \Vexp(G)$ has size $\setsize{T} \le \setsizemax \le r$. 
  Then
  $\setsize{ \closure[r, \almostcontained]{T} } < 
  \frac{\setsizemax \GdegMax}{\expansionconst -
    \almostcontained}$.
\end{lemma}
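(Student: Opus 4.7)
The plan is to unpack the two defining properties of $S := \closure[r,\almostcontained]{T}$ and combine them with the boundary expansion hypothesis. By \refdef{def:closure}, the set $S$ is \contained{\neigh{T}}{r}{\almostcontained}, meaning $\setsize{S} \le r$ and
\begin{equation*}
  \Setsize{\uniqueNeigh{S} \setminus \neigh{T}} < \almostcontained \cdot \setsize{S} \eqperiod
\end{equation*}
Since $\setsize{S} \le r$, the boundary-expansion assumption on $G$ gives $\setsize{\uniqueNeigh{S}} \ge \expansionconst \cdot \setsize{S}$, so this is the place where the expansion of $G$ enters the argument.

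The next step is simply to split the unique neighbourhood of $S$ into the part that lies inside $\neigh{T}$ and the part that lies outside. The ``outside'' piece is already controlled by the containment inequality above. For the ``inside'' piece I would use the crude upper bound
\begin{equation*}
  \Setsize{\uniqueNeigh{S} \intersection \neigh{T}}
  \le \setsize{\neigh{T}}
  \le \GdegMax \cdot \setsize{T}
  \le \GdegMax \cdot \setsizemax \eqcomma
\end{equation*}
using only that every vertex of $T$ has degree at most $\GdegMax$.

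Combining the last three inequalities yields $\expansionconst \cdot \setsize{S} < \GdegMax \setsizemax + \almostcontained \cdot \setsize{S}$, which rearranges to the desired bound $\setsize{S} < \GdegMax \setsizemax / (\expansionconst - \almostcontained)$. I would expect no real obstacle here: the only point worth being careful about is that the containment condition is a strict inequality, which is exactly what allows us to conclude a strict bound (and implicitly requires $\expansionconst > \almostcontained$, the regime in which the statement is meaningful).
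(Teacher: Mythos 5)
Your proof is correct and follows essentially the same route as the paper's: both use the defining containment inequality $\setsize{\uniqueNeigh{S}\setminus\neigh{T}} < \almostcontained\setsize{S}$, lower-bound the same quantity via boundary expansion minus $\setsize{\neigh{T}}\le\GdegMax\setsizemax$, and combine. The splitting of $\uniqueNeigh{S}$ into the parts inside and outside $\neigh{T}$ is just a rephrasing of the paper's subtraction step.
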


\begin{proof}
  By definition we have that 
  $ \Setsize{ 
  \uniqueNeigh{\closure[r, \almostcontained]{T}}  \setminus \neigh{T} } < 
  \nu \cdot \setsize{\closure[r, \almostcontained]{T}}$.
  Furthermore, since 
  $\setsize{\closure[r, \almostcontained]{T}} \leq r$ by definition,
  we can use the expansion property of the graph to derive 
  the inequality
  $\Setsize{
    \uniqueNeigh{\closure[r, \almostcontained]{T}} \setminus \neigh{T}
  } 
  \ge \setsize{\uniqueNeigh{\closure[r, \almostcontained]{T}}} - \setsize{ \neigh{T} }
  \ge  
  \expansionconst \cdot \setsize{\closure[r, \almostcontained]{T}} - 
  \setsizemax  \GdegMax 
  $. Note that we also use the fact that the neighbourhood of $T$ is of size at most 
  $\setsizemax \GdegMax$. 
  The conclusion follows by combining both statements.
\end{proof}

  Suppose $G$ is an excellent boundary expander and that $T \subseteq \Vexp(G)$
  is not too large. Then \reflem{lem:closure_size} shows that the
  closure of $T$ is not much larger.  
  And if the closure is not too large, then
  after removing the closure and its neighbourhood from the graph
  we are still left with a decent expander,
  a fact which will play a key role in the technical arguments in later
  sections. The following lemma makes this intuition precise.

\begin{lemma}\label{lem:closure_expansion}
  For $G$ an 
  $(r, \gdegmax, \expansionconst )$-boundary expander,
  let $T \subseteq \Vexp(G)$ be such that $\setsize{T} \le r$ and
  $\setsize{ \closure[r, \almostcontained]{T} } \le r/2$,
  let $G' = G \setminus \bigl( \closure[r, \almostcontained]{T} \cup 
  \neigh{\closure[r, \almostcontained]{T}} \bigr)$ and
  $\Vexp(G') = \Vexp(G) \cap V(G')$.
  Then any set $S \subseteq \Vexp(G')$ of size $\setsize{S} \le r/2$
  satisfies
$    \abs{ \uniqueNeigh[G']{S}} \ge 
    \almostcontained \setsize{S} $. \end{lemma}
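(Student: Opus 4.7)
The plan is to argue by contradiction, leveraging the maximality built into \refdef{def:closure}. Suppose, toward a contradiction, that there is some $S \subseteq \Vexp(G')$ with $\setsize{S} \leq r/2$ and $\setsize{\uniqueNeigh[G']{S}} < \almostcontained \setsize{S}$, and set $S' := S \cup \closure[r,\almostcontained]{T}$. We will show that $S'$ is itself \contained{\neigh{T}}{r}{\almostcontained} in $G$. Since $S \subseteq V(G')$ is disjoint from $\closure[r,\almostcontained]{T}$, and since $S$ is nonempty (the empty case is trivially consistent with the lemma), $S'$ strictly enlarges $\closure[r,\almostcontained]{T}$ while still lying in $\Vexp(G)$, contradicting the maximal choice of the closure. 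The size constraint $\setsize{S'} \leq r/2 + r/2 = r$ is immediate from the hypotheses.

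The main step is to split the vertices of $\uniqueNeigh[G]{S'} \setminus \neigh{T}$ according to which side of $S' = S \cup \closure[r,\almostcontained]{T}$ their unique neighbor lies on. A vertex $v$ whose unique neighbor sits in $\closure[r,\almostcontained]{T}$ contributes to $\uniqueNeigh[G]{\closure[r,\almostcontained]{T}} \setminus \neigh{T}$, because $v$ has no neighbor in $S$ and avoids $\neigh{T}$ by assumption. A vertex $v$ whose unique neighbor sits in $S$ has no neighbor in $\closure[r,\almostcontained]{T}$, so $v \notin \neigh[G]{\closure[r,\almostcontained]{T}}$ and thus $v \in V(G')$; since the passage from $G$ to $G'$ removes only edges incident to the closure, the single edge from $v$ to $S$ survives in $G'$, yielding $v \in \uniqueNeigh[G']{S}$. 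Combining the two contributions with the defining property of $\closure[r,\almostcontained]{T}$ and the contradictory assumption on $S$ gives
\begin{equation*}
\Setsize{\uniqueNeigh[G]{S'} \setminus \neigh{T}} \leq \Setsize{\uniqueNeigh[G]{\closure[r,\almostcontained]{T}} \setminus \neigh{T}} + \setsize{\uniqueNeigh[G']{S}} < \almostcontained \setsize{\closure[r,\almostcontained]{T}} + \almostcontained \setsize{S} = \almostcontained \setsize{S'},
\end{equation*}
which certifies that $S'$ is \contained{\neigh{T}}{r}{\almostcontained} and produces the sought contradiction.

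The main obstacle is the bookkeeping in the case split: one has to check that the two summands really match $\uniqueNeigh[G]{\closure[r,\almostcontained]{T}}$ and $\uniqueNeigh[G']{S}$ rather than some relaxations thereof. This hinges on the observation that deleting $\neigh[G]{\closure[r,\almostcontained]{T}}$ from $V(G)$ excises only vertices adjacent to the closure, so no candidate $v$ whose unique neighbor in $S'$ lies in $S$ is accidentally lost when passing from $G$ to $G'$. The hypothesis $\setsize{\closure[r,\almostcontained]{T}} \leq r/2$ is exactly what lets the union $S'$ fit inside the expansion regime $\setsize{S'} \leq r$ and thereby feed into the closure's defining inequality.
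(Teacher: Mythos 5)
Your proof is correct and follows essentially the same route as the paper's: assume $S$ violates the expansion bound, show that $S \cup \closure[r,\almostcontained]{T}$ is then \contained{\neigh{T}}{r}{\almostcontained} in $G$, and contradict the maximality of the closure. The paper states this in two sentences; your case split on where the unique neighbour lies is exactly the bookkeeping it leaves implicit, and it checks out.
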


\begin{proof}
  Suppose the set $S \subseteq \Vexp(G')$ is of size $\setsize{S} \le r/2$ and 
  does not satisfy $    \abs{ \uniqueNeigh[G']{S}} \ge 
    \almostcontained \setsize{S} $.
  Since
  $\closure[r, \almostcontained]{T}$ is also of size at most
  $r/2$, we have that
the set
  $(\closure[r, \almostcontained]{T} \cup S)$ is 
  \contained{\neigh{T}}{r}{\almostcontained} in\ifthenelse{\boolean{false}}{}{ the graph} $G$.
  But this contradicts the maximality of $\closure[r, \almostcontained]{T}$.
\end{proof}

\section{Lower Bounds for Weak Graph FPHP Formulas }
\label{sec:GFPHP}

We now proceed to establish lower bounds on the length of resolution
refutations of 
functional pigeonhole principle formulas defined over bipartite graphs.
We write
$G = (\Vp \disjointunion \Vh, E)$
to denote the graph over which the formulas are defined and $\matchings$ 
to denote
the set of partial matchings on $G$
(also viewed as partial mappings of $\Vp$ to $\Vh$).
Let us start by making more
precise some of the technical notions 
discussed in the introduction
(which were originally defined 
in~\cite{Razborov01ImprovedResolutionLowerBoundsWPHP}).

For a 
clause~$C$ and a 
pigeon~$i$ we denote the 
set of
holes $j$ with
the property that $C$ is satisfied if $i$ is matched to~$j$ by
\begin{equation}
	\neigh[C]{i} = \setdescr{ j \in \Vh }{ e=\set{i, j} \in 
	E~\text{and}~\matchboolof{\set{e}}{C} = 1 }
\end{equation}
and we define
the \introduceterm{$i$th pigeon degree 
  $\deg[C]{i}$ 
  of $C$} 
as  
\begin{equation}
  \label{eq:pig-deg}
  \deg[C]{i} 
  = \setsize{ \neigh[C]{i} } 
  \eqperiod
\end{equation}
We think
of a pigeon $i$ with large $\deg[C]{i}$ as a pigeon on which the
derivation has not made any significant progress up to the point of
deriving~$C$, since the clause rules out very few holes. 
The pigeons with high enough pigeon degree in a
clause are the \emph{\heavy pigeons} of the clause
as defined next.

\begin{definition}[Pigeon weight, \pseudowidth and \fakeaxiom{s}]
  \label{def:pseudo-width}
  Let $C$ be a clause and
  let $\vecthrdef{m}$ and $\vecadvdef{m}$ be two 
  vectors 
  of positive integers
  such that
  $\vecthr$ is elementwise greater than $\vecadv$. We say that pigeon $i$
  is \introduceterm{$\vecthr$\nobreakdash-\sheavy for $C$} 
  if $\deg[C]{i} \ge \thr{i}$
  and that pigeon~$i$ is 
  \introduceterm{$(\vecthr, \vecadv)$-\heavy for $C$}
  if $\deg[C]{i} \ge \thr{i} - \adv{i}$. When  $\vecthr$ and $\vecadv$ 
  are understood from context, which is most often the case, 
  we omit the parameters and just refer to 
  \introduceterm{\superheavy} and \introduceterm{\heavy} pigeons.
  Pigeons that are not \heavy are referred to as 
  \emph{\light pigeons}.
  The set of pigeons that are \sheavy for $C$ is denoted by
  \begin{equation*}
    \Pfat{C} = \setdescr{i \in [m]}{\deg[C]{i} \ge \thr{i}} 
  \end{equation*}
  and the set of pigeons that are \heavy for $C$ is denoted by 
  \begin{equation*}
    \Pthick{C} = \setdescr{i \in [m]}{\deg[C]{i} \ge \thr{i} - \adv{i}} 
    \eqperiod
  \end{equation*}
  The 
  \introduceterm{\pseudowidth}  
  of~$C$ 
  is the number of \heavy pigeons in $C$
  and the 
  \pseudowidth of 
  a resolution refutation $\refpi$, 
  denoted by $\width{\refpi}$, is 
  $\max_{C \in \refpi}{\width{C}}$.
  Finally, we will refer to clauses~$C$ with precisely~$w_0$~\sheavy
  pigeons, \ie such that 
  $\setsize{ \Pfat{C} } = w_0$, as
  \introduceterm{\fakeaxiom{}s}.
\end{definition}
Note that 
according to \refdef{def:pseudo-width}
\sheavy pigeons are also \heavy.
Making the connection back to 
\ifthenelse{\boolean{false}}
{the introduction,} 
{our informal discussion in the introduction,} 
the ``fake axioms'' mentioned there are nothing other than
\fakeaxiom{s}.

Now that we have all the notions needed, let us give a detailed proof
outline.  Given a short resolution refutation~$\refpi$ of the
formula~$\GFPHP$, we use the Filter lemma (\reflem{lem:filter}) to get
a filter vector $\vecthrdef{m}$ such that each clause either has many
\sheavy pigeons or there are not too many \heavy pigeons (for an
appropriately chosen vector $\vecadv$).  Clearly, clauses that fall
into the second case of the filter lemma have bounded \pseudowidth. On
the other hand, clauses in the first case may have very large
\pseudowidth. In order to obtain a proof of low \pseudowidth, these
clauses are strengthened to \fakeaxiom{s} and added to a special set
$\Ax$. This then gives a refutation $\refpi'$ that refutes the formula
$\GFPHP \cup \Ax$ in bounded \pseudowidth. The following lemma
summarizes the upper bound on \pseudowidth that we obtain.

\begin{lemma}\label{lem:fphp_upper_bound}
  Let $G = (\Vp \disjointunion \Vh, E)$
  be a bipartite graph with
  $\setsize{ \Vp } = m$ and $\setsize{ \Vh } = n$;
  let $\refpi$ be a resolution refutation of $\GFPHP$;
  let $w_0, \alpha \in [m]$ be such that 
  $w_0 > \log \length{\refpi}$ and
  $w_0 \ge \alpha^2 \ge 4$,
  and
  let $\vecadvdef{m}$ be defined by 
  $\adv{i} = \frac{\deg[G]{i} \log \alpha}{\log m}$.
  Then
  there exists 
  an integer vector $\vecthrdef{m}$, with 
  $\adv{i} < \thr{i} \le \deg[G]{i}$ for all 
  $i \in \Vp$, a set of \fakeaxiom{s} $\Ax$
  with 
  $\abs{\Ax} \le \length{\refpi}$, and a resolution refutation 
  $\refpi'$ of $\GFPHP \cup \Ax$ such that
  $\width{\refpi'} = \bigoh{\alpha \cdot w_0}$.
\end{lemma}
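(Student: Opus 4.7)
The plan is to invoke the Filter lemma (Lemma~\ref{lem:filter}) with the right encoding of clauses as integer vectors, so that its two cases translate directly into ``has many super-heavy pigeons'' versus ``has few heavy pigeons''. For each clause $C_\ell$ in $\refpi$, I would set
\begin{equation*}
  \riof{\ell}{i} = 1 + \Ceiling{(\deg[G]{i} - \deg[C_\ell]{i})/\adv{i}}
  \eqcomma
\end{equation*}
which is a positive integer since $0 \le \deg[C_\ell]{i} \le \deg[G]{i}$. With $\nvecs = \length{\refpi}$ vectors and the hypothesis $w_0 > \log\length{\refpi} \ge \ln\length{\refpi}$, together with $w_0 \ge \alpha^2 \ge 4$, the Filter lemma produces a vector $\vecrdef{m}$ with $1 \le \ri{i} \le \Floor{\log m / \log\alpha}-1$. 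I would then define the threshold vector by $\thr{i} = \deg[G]{i} - (\ri{i}-1)\adv{i}$. A direct computation shows that $\riof{\ell}{i} \le \ri{i}$ is equivalent to $\deg[C_\ell]{i} \ge \thr{i}$ (i.e.\ pigeon $i$ is $\vecthr$-super-heavy for $C_\ell$), while $\riof{\ell}{i} \le \ri{i}+1$ is equivalent to $\deg[C_\ell]{i} \ge \thr{i} - \adv{i}$ (i.e.\ pigeon $i$ is $(\vecthr,\vecadv)$-heavy). The range of $\ri{i}$ guarantees $\adv{i} < \thr{i} \le \deg[G]{i}$ as required.

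Consequently, each clause $C_\ell$ of $\refpi$ falls into one of two cases: either it has at least $w_0$ super-heavy pigeons (call this the \emph{thick} case), or it has at most $\bigoh{\alpha w_0}$ heavy pigeons. For every thick clause, I would pick an arbitrary subset of exactly $w_0$ of its super-heavy pigeons and construct $C_\ell' \subseteq C_\ell$ by keeping only the literals mentioning those pigeons; since the pigeon degree $\deg[C]{i}$ depends only on the literals of pigeon $i$ in $C$, the clause $C_\ell'$ has exactly $w_0$ super-heavy pigeons and no other pigeon is even heavy. Collect all such $C_\ell'$ into $\Ax$; clearly $\setsize{\Ax} \le \length{\refpi}$.

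To build $\refpi'$, I would walk through $\refpi$ in order and maintain, for each $C_\ell$, a clause $\widehat C_\ell \subseteq C_\ell$ in $\refpi'$, defined inductively as follows. For an axiom $C_\ell \in \GFPHP$, take $\widehat C_\ell = C_\ell$. For a resolution step deriving $C_\ell$ from $C_i, C_j$ over a variable $x$: if $\widehat C_i$ and $\widehat C_j$ still contain $x$ and $\olnot{x}$ respectively, resolve to get $\widehat C_\ell \subseteq C_\ell$; otherwise take $\widehat C_\ell$ to be whichever of $\widehat C_i, \widehat C_j$ is missing the resolved variable (so that $\widehat C_\ell \subseteq C_\ell$ by weakening reasoning). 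If $C_\ell$ is thick, overwrite $\widehat C_\ell := C_\ell' \in \Ax$ and treat it as an axiom from that point on. Since $C_L = \emptycl$ forces $\widehat C_L = \emptycl$, the sequence $\refpi'$ is a valid refutation of $\GFPHP \cup \Ax$.

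It remains to bound $\width{\refpi'}$. The $\GFPHP$-axioms have at most two heavy pigeons by inspection, and every fake axiom in $\Ax$ has exactly $w_0$ heavy pigeons by construction. For any other derived clause $\widehat C_\ell$, we have $\widehat C_\ell \subseteq C_\ell$ and $C_\ell$ belongs to the non-thick case; since for $\widehat C \subseteq C$ the inequality $\deg[\widehat C]{i} \le \deg[C]{i}$ holds, the set of heavy pigeons of $\widehat C_\ell$ is contained in that of $C_\ell$, which has size $\bigoh{\alpha w_0}$. The main obstacle in the argument is exactly this last point: we apply the Filter lemma to the \emph{original} clauses $C_\ell$, but then need the pseudo-width bound on the \emph{strengthened} clauses $\widehat C_\ell$ that actually appear in $\refpi'$. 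The monotonicity $\widehat C \subseteq C \Rightarrow$ (heavy pigeons of $\widehat C$) $\subseteq$ (heavy pigeons of $C$) is what saves the day, and together with the explicit strengthening of thick clauses into $\Ax$ it gives $\width{\refpi'} = \bigoh{\alpha w_0}$.
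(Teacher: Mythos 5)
Your proposal is correct and follows essentially the same route as the paper: encode each clause as the vector $\riof{C}{i}\approx (\deg[G]{i}-\deg[C]{i})/\adv{i}+1$, apply the Filter lemma, translate its two cases into ``many super-heavy pigeons'' versus ``few heavy pigeons,'' strengthen the former to \fakeaxiom{s}, and use monotonicity of heaviness under taking subclauses to bound the pseudo-width of the propagated strengthenings. The only nit is that your $\thr{i}=\deg[G]{i}-(\ri{i}-1)\adv{i}$ need not be an integer as the lemma requires (the paper takes $\thr{i}=\deg[G]{i}-\ceiling{\adv{i}\ri{i}}+1$), but rounding fixes this without affecting the argument.
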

As 
mentioned
above, this 
upper bound
is a straightforward application of 
\reflem{lem:filter}. 
\ifthenelse{\boolean{false}}
{We defer the formal proof to a later point in this section.}
{We defer the formal proof to \refsec{sec:fphp_upper_bound}.}
What we will need from \reflem{lem:fphp_upper_bound} is that a
resolution refutation of $\GFPHP$ in length less than $2^{w_0}$ can be
transformed into a refutation of $\GFPHP \cup \Ax$ in \pseudowidth at
most~$\bigoh{\alpha \cdot w_0}$.

The second step in the proof is to show that 
any resolution 
refutation 
$\refpi$ of $\GFPHP \cup \Ax$
requires large \pseudowidth. The 
high-level 
idea is to define a  progress measure
on clauses $C \in \refpi$
by
counting the number of matchings on $\Pthick{C}$ 
that do not 
satisfy $C$. We then show that in order to increase this progress measure
we need large \pseudowidth.
The following lemma states the \pseudowidth lower bound.

\begin{restatable}{lemma}{lemmafphplower}
  \label{lem:fphp_lower_bound}\RestateRemark
	Let $\diff \le 1/\cdiff$
        and
        $m,n,r,\GdegMax \in \N$;
	let $G = (\Vp \disjointunion \Vh, E)$
        with
	$\setsize{ \Vp } = m$ and $\setsize{ \Vh } = n$
	be an $(r, \GdegMax, (1 - 2 \diff) \GdegMax)$-boundary expander,
	and let $\vecadvdef{m}$ be defined by 
	$\adv{i} = \cdiff \deg[G]{i} \diff$.
	Suppose that 
	$\vecthrdef{m}$ is an integer vector 
	such that $\adv{i} < \thr{i} \le \deg[G]{i}$ for all $i \in
         \Vp$. Let $w_0$
	be an arbitrary parameter and $\Ax$ be
	an arbitrary set of \fakeaxiom{s} with
	$\abs{\Ax} \le \left(1 + \diff \right)^{w_0} $.
	Then every resolution refutation $\refpi$ 
	of $\GFPHP \cup \Ax$
	must satisfy $\width{\refpi} \ge r \diff/4$.
\end{restatable}
In 
one
sentence, the lemma states that if 
the set of ``fake axioms'' 
$\Ax$ is not too large, 
then resolution requires large \pseudowidth
to refute $\GFPHP \cup \Ax$. Note that this lemma holds for
any filter vector and not just for the one 
obtained
from 
\reflem{lem:fphp_upper_bound}.

In order to prove \reflem{lem:fphp_lower_bound}, we wish to define a
progress measure on clauses that indicates how close the derivation is
to refuting the formula (\ie it should be small for axiom clauses but
large for contradiction). A first attempt would be to define the
progress of a clause $C$ as the number of ruled-out matchings (\ie
matchings that do not satisfy~$C$) on the pigeons mentioned
by~$C$. This definition does not quite work since we will want to keep
track of \emph{which} matchings are ruled out---unless there are more
than~$\dimLi$ holes to which a pigeon can be mapped without
satisfying~$C$. Then we will want to think of $C$ as ruling out
\emph{all holes} for this pigeon. Since the pigeon degree of a \light
pigeon~$i$ is at most $\thr{i} - \adv{i}$, such a pigeon will
certainly have at least $ \deg[G]{i} - d_i + \delta_i \geq \dimLi$
holes to which it can be mapped, and the ``lossy counting'' will thus
ensure that all holes are considered as ruled out.

We realize this ``lossy counting'' through a linear space $\linspace$,
in which each partial matching~$\matcha$ is associated with a
subspace~$\linmapof{\matcha}$.  Roughly speaking, the
progress~$\linmapof{C}$ of a clause~$C$ is then defined to be the span
of all partial matchings that are ruled out by~$C$.  We design the
association between matchings and subspaces so that the contradictory
empty clause~$\emptycl$ has $\linmapof{\emptycl} = \linspace$ but so
that the span of all the axioms
$\spans{\setdescr{ \linmapof{A}}{A \in \GFPHP \cup \Ax}}$ is a proper
subspace of $\linspace$. For this argument we crucially rely on the
gap ($r_i$ vs $r_i + 1$) between the two cases of the filter lemma
(\reflem{lem:filter}). This implies that in a refutation~$\refpi$ of
$ \GFPHP \cup \Ax $ there must exist a resolution step deriving a
clause~$C$ from clauses $C_0$ and~$C_1$ such that the linear space of
the resolvent $\linmapof{C}$ is not contained in
$\spans{\linmapof{C_0}, \linmapof{C_1}}$.  But the main technical
lemma of this section (\reflem{lem:fphp_span}) says that for any
derivation in low \pseudowidth the linear space of the resolvent is
contained in the span of the linear spaces of the clauses being
resolved. Hence, in order for~$\refpi$ to be a refutation it must
contain a clause with large \pseudowidth, and this establishes
\reflem{lem:fphp_lower_bound}.

So far our argument follows that of Razborov very closely, but
it turns out we cannot realize this proof idea if we only keep track
of \heavy and \light pigeons. Let us attempt a proof of the claim in
\reflem{lem:fphp_span} that low-width resolution steps cannot increase
the span to illustrate what the problem is.
The interesting case is when there is a pigeon $i$ that is \heavy
for~$C_0$ or~$C_1$
but not for their resolvent~$C$. 
Then, following Razborov, for any matching $\matcha$ on the \heavy
pigeons of $C$ that  fails to satisfy~$C$, 
we need to be able to extend~$\matcha$ in at
least $\dimLi$ different ways to a matching including also pigeon~$i$
that falsifies either~$C_0$ or~$C_1$. If this can be done, then we think
of $C_0$ and~$C_1$ as together ruling out (essentially) all holes
for~$i$, and  the linear space associated with $C$ will be contained
in the span of the spaces for $C_0$ and $C_1$.
However, the problem is that $\matcha$ can send all \heavy pigeons to
the neighbourhood of pigeon~$i$. In this scenario, there might be very
few holes, or even no holes, to which~$i$ can be mapped when
extending~$\matcha$, and even our lossy counting will not be able to
pick up enough holes for the argument to go through.
We resolve this problem by not only considering the \heavy pigeons but
a larger set of \introduceterm{relevant} pigeons including all pigeons~$i'$
that can become overly constrained when some matching on the \heavy pigeons
shrinks the neighbourhood of~$i'$ too much.
Formally, the \introduceterm{closure} of the set of \heavy pigeons,
as defined in \refdef{def:closure},
is the notion we need.

\squeezesubsection{Formal Statements of Graph FPHP Formula Lower Bounds}
\label{sec:fphp_results}

Deferring the proofs of all technical lemmas for now, 
let us state our lower bounds for graph FPHP formulas and see how they
follow from  
\reftwolems{lem:fphp_upper_bound}{lem:fphp_lower_bound}
above.

\begin{theorem} \label{thm:main}
  Let
  $m = \setsize{ U }$ and $n = \setsize{ V }$ and suppose that
  $
  G = (U 
  \disjointunion
  V, E)
  $
  is an
  $\bigl( 
  r, 
  \GdegMax, 
  \bigl(1 - \frac{ \log \alpha}{ 2 \log m} \bigr)\GdegMax \bigr)
  $-boundary 
  expander
  for $\alpha \in [m]$ such that 
  $8 \le \frac{ \alpha^3 }{ \log \alpha} = \Littleoh{\frac{ r }{ \log m }}$.
  Then resolution requires length
  $\exp \left( \BIGOMEGA {\frac{ r \log^2\alpha }{ \alpha \log^2 m }} \right)$
  to refute $\GFPHP$.
\end{theorem}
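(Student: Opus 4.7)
The plan is to combine \reflem{lem:fphp_upper_bound} with \reflem{lem:fphp_lower_bound} by choosing the parameters $\diff$, $\vecadv$, and $w_0$ so that the \pseudowidth upper bound produced by the first lemma and the \pseudowidth lower bound produced by the second become contradictory precisely at the claimed length threshold. The single delicate choice is $\diff = \frac{\log\alpha}{4\log m}$, which is forced by two requirements: with $2\diff = \frac{\log\alpha}{2\log m}$, the boundary expansion hypothesis of the theorem matches exactly the $(1-2\diff)\GdegMax$ boundary expansion demanded by \reflem{lem:fphp_lower_bound}; and, since $\cdiff = 4$, this same value of $\diff$ makes the slack vector $\adv{i} = \cdiff\deg[G]{i}\diff$ of \reflem{lem:fphp_lower_bound} coincide with the vector $\adv{i} = \deg[G]{i}\log\alpha/\log m$ prescribed by \reflem{lem:fphp_upper_bound}, so a single $\vecadv$ (and the resulting filter vector $\vecthr$) can be fed into both lemmas without conflict.

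Assume for contradiction that $\refpi$ refutes $\GFPHP$ in length $L = \length{\refpi}$ below the claimed threshold. I would set $w_0 = \CEILING{2\log L / \diff}$, calibrated so that $L \leq (1+\diff)^{w_0}$, which follows from the bound $\ln(1+x) \geq x/2$ for small positive $x$. The remaining hypotheses $w_0 > \log L$ and $w_0 \geq \alpha^2 \geq 4$ needed by \reflem{lem:fphp_upper_bound} are then immediate from $\diff < 1$ and from the assumption $\alpha^3/\log\alpha = \littleoh{r/\log m}$, respectively (in the regime where $L$ is large, which is the only case that matters). Applying \reflem{lem:fphp_upper_bound} produces a filter vector $\vecthr$, a set of \fakeaxiom{s} $\Ax$ with $\setsize{\Ax} \leq L \leq (1+\diff)^{w_0}$, and a refutation $\refpi'$ of $\GFPHP \cup \Ax$ of \pseudowidth $\width{\refpi'} = \bigoh{\alpha w_0}$. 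The hypotheses of \reflem{lem:fphp_lower_bound} are then satisfied, and it forces $\width{\refpi'} \geq r\diff/4$. Combining the two \pseudowidth bounds gives $\alpha w_0 = \Bigomega{r\diff}$, and substituting $w_0 = \Theta(\log L/\diff)$ and $\diff = \Theta(\log\alpha/\log m)$ yields
\begin{equation*}
\log L = \Bigomega{r\diff^2/\alpha} = \Bigomega{r\log^2\alpha/(\alpha \log^2 m)}\eqcomma
\end{equation*}
contradicting the assumed bound on $L$ for a small enough implicit constant.

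Since all the deep work is already carried out in \reftwolems{lem:fphp_upper_bound}{lem:fphp_lower_bound}, the remaining argument is essentially parameter bookkeeping. The only real obstacle will be ensuring that the slack vector $\vecadv$ and the filter vector $\vecthr$ output by the upper bound lemma simultaneously satisfy the hypotheses of the lower bound lemma; this is what forces the specific value $\diff = \log\alpha/(4\log m)$, and it is via the factor $\diff^2$ in the final trade-off that this choice propagates into the exponent $\log^2\alpha/\log^2 m$ appearing in the theorem.
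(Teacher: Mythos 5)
Your proposal is correct and follows the paper's proof essentially verbatim: the same choice $\diff = \log\alpha/(4\log m)$ making the expansion hypothesis and the two slack vectors line up, and the same combination of \reflem{lem:fphp_upper_bound} with \reflem{lem:fphp_lower_bound}, the only cosmetic difference being that you define $w_0$ from the assumed length $L$ whereas the paper fixes $w_0 = \varepsilon_0 r \diff / \alpha$ independently of $L$ and bounds $L$ in terms of it. The one loose end is that for short refutations your $w_0 = \lceil 2\log L/\diff\rceil$ may fall below $\alpha^2$; this is repaired by taking $w_0 = \max\{\lceil 2\log L/\diff\rceil,\ \alpha^2\}$, since the hypothesis $\alpha^3/\log\alpha = \littleoh{r/\log m}$ still gives $\alpha w_0 = \littleoh{r\diff}$ in that branch and hence the same contradiction — which is in effect what the paper's $L$-independent choice of $w_0$ accomplishes.
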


\ifthenelse{\boolean{false}}{}{
As promised in \refsec{sec:filterandclosure}, let us briefly discuss
the parameter~$\alpha$.}
Note that, on the one hand, the larger $\alpha$ is,
the more relaxed we can be \wrt the expansion requirements, and hence
the set of formulas to which the lower bound applies becomes larger. 
On the other hand, the strength of the lower bound deteriorates\ifthenelse{\boolean{false}}{}{ quickly }with~$\alpha$. Hence, we need to choose~$\alpha$ carefully to
find a good compromise between these two concerns.
  
\begin{proof}[Proof of \refthm{thm:main}]
  Let $\diff = \frac{\log \alpha}{\cdiff \log m}$ and let
  $w_0 = \frac{\varepsilon_0 r \diff}{ \alpha }$ for some small enough
  $\varepsilon_0 > 0$. We note that the choice of parameters and the
  condition on~$\alpha$ ensure that $4 \le \alpha^2 \le w_0$.
  Furthermore, in terms of $\diff$, the graph
  $G$ is an $(r, \GdegMax, (1 - 2 \diff) \GdegMax)$-boundary expander.
  
  We proceed by contradiction. Suppose $\refpi$ is a resolution refutation 
  with $\length{\refpi} < 2^{ \varepsilon' w_0 \diff }$ for a small enough constant
  $\varepsilon'  > 0$.
  Applying \Reflem{lem:fphp_upper_bound} we get a 
  set of \fakeaxiom{s} $\Ax$ with $\setsize{\Ax} \le \length{\refpi}$
  and a
  resolution refutation~$\refpi'$ of
  $\GFPHP \cup \Ax$ such that 
  $\width{\refpi'} \le K  \alpha  w_0$ for some large enough
  constant~$K$.  
  
Note that $\setsize{\Ax} \le \length{\refpi} < 2^{ \varepsilon' w_0
    \diff } \le (1 + \diff)^{w_0} $ for $\varepsilon' < 1/2$. 
  Applying
  \reflem{lem:fphp_lower_bound} to $\refpi'$
  yields
  a \pseudowidth
  lower bound of $r \diff / 4$. We conclude that 
  \begin{equation}
    r \diff/4 \le \width{\refpi'} 
    \le K  \alpha  w_0 
    = \varepsilon_0  K r \diff
    \eqperiod
  \end{equation}
  Choosing $\varepsilon_0 < \frac{1}{4K}$ yields a contradiction.
\end{proof}

The following corollary summarizes our claims for random graphs.

\begin{corollary}\label{cor:random_main}
  Let $m$ and $n$ be 
  positive integers and let 
  $\funcdescr{\gdegmax}{\Nplus}{\Nplus}$ and
  $\funcdescr{\exponentalpha}{\Nplus}{[0,1]}$ 
  be any monotone functions of $n$ such that 
  $n < m \le n^{\left( \exponentalpha/\constFPHPnew \right)^2 \log n}$ 
  and 
  $n\ge 
  \GdegMax \ge 
  \left( \frac{\constFPHPnew \log m}{\exponentalpha \log n} \right)^2$. 
  Then asymptotically almost surely resolution requires length 
  $\exp{\bigl(\Omega\bigl(n^{1-\exponentalpha}\bigr)\bigr)}$ 
  to refute $\GFPHP$ 
  for $G \sim \Gdist \bigl(m,n,\GdegMax \bigr)$.
\end{corollary}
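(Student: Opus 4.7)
The corollary combines the two preceding results in a direct way: \reflem{lem:expander} shows that a random graph $G \sim \mathcal{G}(m,n,\GdegMax)$ is asymptotically almost surely a good bipartite boundary expander, and \refth{thm:main} then translates such expansion into the claimed resolution length lower bound. The only real content is to check that the parameters can be tuned to satisfy both statements simultaneously.

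The plan is to choose $\alpha = n^{c\varepsilon}$ for a sufficiently small constant $c > 0$ and set $\diff = \log\alpha/(4\log m)$, so that the $(1-2\diff)\GdegMax$-boundary expansion produced by \reflem{lem:expander} coincides with the $\bigl(1 - \log\alpha/(2\log m)\bigr)\GdegMax$-boundary expansion required by \refth{thm:main}. With this $\diff$, \reflem{lem:expander} guarantees boundary expansion up to sets of size $r = n/(\GdegMax\cdot\chi)$, where $\chi$ is any real satisfying $\diff\ln\chi \ge 2$. The hypothesis $m \le n^{(\varepsilon/\constFPHPnew)^2\log n}$ of the corollary is exactly what forces $\log m/\log n \le (\varepsilon/\constFPHPnew)^2\log n$, making $2/\diff = 8\log m/(\log\alpha)$ small enough that the minimum legal value of $\chi$ is only a small polynomial $n^{O(\varepsilon)}$ in $n$. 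The hypothesis $\GdegMax \ge (\constFPHPnew\log m/(\varepsilon\log n))^2$ in turn provides the lower bound on $\GdegMax$ needed to satisfy the remaining side condition $\diff\GdegMax\ln\chi \ge 4\ln m$ of \reflem{lem:expander}. Together these imply that $r \ge n^{1 - O(\varepsilon)}/\polylog(n)$ asymptotically almost surely.

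Once $G$ is known to be the required expander, I apply \refth{thm:main}. Its technical preconditions $\alpha^3/\log\alpha \ge 8$ and $\alpha^3/\log\alpha = \littleoh{r/\log m}$ are immediate for $\alpha = n^{c\varepsilon}$ with $c$ small enough against the exponent $1 - O(\varepsilon)$ in $r$. Substituting our parameters into the resulting length lower bound
\[
\exp\!\left(\Bigomega{\frac{r\log^2\alpha}{\alpha\log^2 m}}\right)
\]
and absorbing the polylogarithmic factors from $\GdegMax$, $\log\alpha$, and the ratio $\log n/\log m$ then yields the claimed $\exp\bigl(\Bigomega{n^{1-\varepsilon}}\bigr)$.

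The only real obstacle is bookkeeping. The constant $c$ in $\alpha = n^{c\varepsilon}$ must be chosen small enough that the factor $1/\alpha = n^{-c\varepsilon}$ appearing in the size bound does not eat into the final exponent $1-\varepsilon$, while at the same time $c$ must be large enough that $\diff = \log\alpha/(4\log m)$ is big enough to make $\chi$ polynomial rather than super-polynomial in $n$. The specific constants $(\varepsilon/\constFPHPnew)^2$ and $(\constFPHPnew/\varepsilon)^2$ appearing in the hypotheses of the corollary are precisely calibrated to accommodate this balancing, so the verification reduces to routine manipulation of logarithms and is not conceptually difficult.
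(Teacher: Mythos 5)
Your overall strategy is exactly the paper's: invoke \reflem{lem:expander} to get that $G$ is asymptotically almost surely an $(r,\GdegMax,(1-2\diff)\GdegMax)$-boundary expander with $\diff=\log\alpha/(\cdiff\log m)$ for $\alpha$ a small power of $n$, check the three side conditions of that lemma using the two hypotheses of the corollary, and then feed the resulting expander into \refth{thm:main}. The balancing of the constant $c$ in $\alpha=n^{c\exponentalpha}$ that you describe is also genuine and works out (the paper takes $c=1/4$ and $\chi=\alpha$).

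There is, however, one concrete step that fails as written. You claim that the two hypotheses ``together imply that $r\ge n^{1-O(\exponentalpha)}/\polylog(n)$,'' where $r=n/(\GdegMax\cdot\chi)$ is the expansion radius delivered by \reflem{lem:expander}. But the corollary only bounds $\GdegMax$ from \emph{below} by $\bigl(\constFPHPnew\log m/(\exponentalpha\log n)\bigr)^2$; it permits $\GdegMax$ as large as $n$. For, say, $\GdegMax=n$ your $r$ collapses to $1/\chi<1$ and \refth{thm:main} gives nothing, so the precondition $\alpha^3/\log\alpha=\littleoh{r/\log m}$ and the final bound both break. The missing ingredient is the reduction the paper performs first: if $\GdegMax$ exceeds the threshold value $\GdegMax_0=\bigl(\constFPHPnew\log m/(\exponentalpha\log n)\bigr)^2$, sample a random subgraph $G'\sim\Gdist(m,n,\GdegMax_0)$ by keeping a uniformly random $\GdegMax_0$-subset of each left vertex's neighbourhood and restricting the formula by zeroing out the deleted variables; a restriction cannot increase refutation length, and $G'$ has the right distribution to apply \reflem{lem:expander} with $\GdegMax=\GdegMax_0=\polylog(n)$, after which your estimate of $r$ is valid. (The analogous reduction to $m$ maximal is only a convenience, since shrinking $m$ weakens the formula, but the degree reduction is essential.) With that step inserted, your argument matches the paper's proof.
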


\begin{proof}
  Let us assume that
  $n^{\left( \exponentalpha/\constFPHPnew \right)^2 \log n}$ and
  $\bigl( ({\constFPHPnew \log m}) / ({\exponentalpha \log n} )
  \bigr)^2$ are integers.  Observe that if
  $G \sim \Gdist \left(m,n,\GdegMax \right)$ for
  $\GdegMax > \bigl( ({\constFPHPnew \log m}) / ({\exponentalpha \log
    n}) \bigr)^2 $, then
  we can sample a graph
  $G'$ from the distribution $\Gdist 
  \bigl(m,n, ( 
      ({\constFPHPnew \log m}) / ({\exponentalpha \log n}) 
      )^2 \bigr)
  $
  by choosing a random subset of appropriate size of each neighbourhood
  of a left vertex of~$G$ (and applying a restriction zeroing out the other
  edges). 
  Hence, we can restrict our attention
  to the case where 
  $\GdegMax= 
  \bigl( ({\constFPHPnew \log m}) / ({\exponentalpha \log n}) \bigr)^2
  $.
  Also, it is sufficient to prove the claim for 
  $m = n^{\left( \exponentalpha/\constFPHPnew \right)^2 \log n}$,
  since choosing $m$ smaller can only make the formula less
  constrained and hence makes the lower bound easier to obtain.

  We want to apply \reflem{lem:expander} for 
  $\chi = \alpha = n^{\exponentalpha/4}$
  and
  $\diff = \frac{\log \alpha}{\cdiff \log m}$.
  In order to do so, we need to verify the inequalities
  \begin{subequations}
    \begin{align}
      \label{eq:check_diff}
      \diff &< 1/2 \eqcomma \\
      \label{eq:check_diff-chi}
      \diff \ln \chi &\geq 2 \eqcomma \\
      \label{eq:check_diff-gdegmax-chi}
      \diff \gdegmax \ln \chi &\geq 4 \ln m \eqperiod
    \end{align}
  \end{subequations}
  For \refeq{eq:check_diff} we observe that 
  $\diff = \frac{16}{\exponentalpha \log n}$
  and
  since
  $n < n^{\left(\exponentalpha/\constFPHPnew \right)^2 \log n}$
  we see that
  $\frac{1}{\log n} < \left(\frac{\exponentalpha}{\constFPHPnew}\right)^2$. 
  Hence, the first condition
  holds for $n$ large enough.
  To check \refeq{eq:check_diff-chi}, we compute
  \begin{equation}
    \diff \ln \chi 
    =
    \frac
    { 16  }
    { \varepsilon \log n}
    \frac{\varepsilon \ln n}{4}
    \ge 2 \eqperiod
  \end{equation}
  For
  \eqref{eq:check_diff-gdegmax-chi},
  we observe that $\gdegmax = \log m$ and hence
  \begin{equation}
    \diff \gdegmax \ln \chi 
    =
    \frac{4}{\log e} \log m
    =  4 \ln m  \eqperiod
  \end{equation}
  We conclude 
  that
  asymptotically almost surely,
  $G \sim \Gdist \left(m,n,\GdegMax \right)$ is an
  $\bigl( n^{1 - \exponentalpha/2}, \GdegMax, 
  (1 - 2 \diff)\GdegMax \bigr)$-boundary expander.
  \refthm{thm:main} then gives a length lower bound
  of $\exp\bigl( \Omega \bigl( n^{1 - \exponentalpha} \bigr) \bigr)$,
  as required.
\end{proof}

The following two corollaries are simple consequences of 
\refcor{cor:random_main}, optimizing
 for different parameters. The first corollary gives the
strongest lower bounds, while the second
minimizes the degree.

\begin{corollary}
  Let $m, n$ be such that
  $m \le n^{\littleoh{\log n}}$.
  Then asymptotically almost surely
  resolution requires length 
  $\exp \bigl( \Bigomega{n^{1-\littleoh{1}}} \bigr)$ to refute $\GFPHP$ 
	for $G \sim \Gdist \left(m,n,\log m\right)$.
\end{corollary}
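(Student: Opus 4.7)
The plan is to deduce this directly from Corollary~\ref{cor:random_main} by letting the parameter $\exponentalpha$ depend on $n$ and tend to $0$ slowly enough that both hypotheses of that corollary are simultaneously met when the left degree is exactly $\GdegMax = \log m$. The hypothesis already allows $\exponentalpha$ to be a monotone function of $n$, so the only work is to pin down the right scaling.

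Since $m \le n^{\littleoh{\log n}}$, we can write $\log m = f(n)\log^2 n$ for some function $f(n) = \littleoh{1}$; if $m \le n$ the claim is a standard graph PHP lower bound, so we may assume $m \ge n+1$. The key calculation I would perform is to set $\exponentalpha(n) = \constFPHPnew\sqrt{f(n)} = \littleoh{1}$ and then verify the two numerical constraints of Corollary~\ref{cor:random_main}. The pigeon-count condition $m \le n^{(\exponentalpha/\constFPHPnew)^2\log n}$ rewrites to $\log m \le f(n)\log^2 n$ and holds by the definition of $f$. For the degree condition $\GdegMax \ge (\constFPHPnew\log m / (\exponentalpha\log n))^2$, substituting $\exponentalpha = \constFPHPnew\sqrt{f(n)}$ and $\log m = f(n)\log^2 n$ collapses the right-hand side to $f(n)\log^2 n = \log m$, so the choice $\GdegMax = \log m$ meets the inequality with equality. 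The remaining requirement $\GdegMax \le n$ is trivial since $\log m = \littleoh{\log^2 n}$.

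With the hypotheses verified, Corollary~\ref{cor:random_main} yields, asymptotically almost surely for $G \sim \Gdist(m,n,\log m)$, a resolution length lower bound of $\exp(\bigomega{n^{1-\exponentalpha}})$, which is $\exp(\bigomega{n^{1-\littleoh{1}}})$ since $\exponentalpha \to 0$. There is no genuine obstacle beyond the parameter balancing: the square-root scaling $\exponentalpha = \constFPHPnew\sqrt{f}$ is precisely the unique choice that simultaneously tightens the ``density of pigeons'' constraint and the ``minimum required left degree'' constraint of Corollary~\ref{cor:random_main} in the regime $\GdegMax = \log m$, so the statement is the natural tightest one-line specialization of the main random-graph result.
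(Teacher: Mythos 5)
Your argument is correct and is essentially the paper's own proof: the paper writes $m = n^{f(n)}$ with $f(n)=\littleoh{\log n}$ and applies Corollary~\ref{cor:random_main} with $\exponentalpha = \constFPHPnew\sqrt{f(n)/\log n}$, which is exactly your choice after the reparametrization $\log m = f(n)\log^2 n$, and both conditions collapse to equalities in the same way. The only cosmetic difference is your explicit (and harmless) aside about the degenerate case $m\le n$, which the paper leaves implicit.
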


\begin{proof}
  Let $m = n^{f(n)}$, where $f(n) = \littleoh{\log n}$. Applying
  \refcor{cor:random_main} for 
  $\exponentalpha = 16 \sqrt{\frac{f(n)}{\log n}} = \littleoh{1}$
  we get the desired statement.
\end{proof}

\begin{corollary}[Restatement of \refth{th:informal-poly-npigeons}]
  Let $k$ and $n$ 
  be positive integers and
  let $m = n^k$
  and
  $\exponentalpha \in \Rplus$.
  Then asymptotically almost surely
  resolution requires length 
  $\exp \bigl(\Bigomega{n^{1-\exponentalpha}}\bigr)$ to refute $\GFPHP$ 
  for 
  $G \sim \Gdist 
  \left(m, n, \left( \frac{16 k}{\exponentalpha} \right)^2 \right)$.
\end{corollary}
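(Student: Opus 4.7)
The plan is to derive this corollary as a direct specialisation of \refcor{cor:random_main}, by taking $m = n^k$ and $\Delta = (16k/\exponentalpha)^2$ and then verifying that the three numerical hypotheses of that corollary are met.

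First I would observe that $\log m = k \log n$, so the lower bound on the degree required by \refcor{cor:random_main}, namely $\bigl(\frac{16 \log m}{\exponentalpha \log n}\bigr)^2$, simplifies to exactly $(16k/\exponentalpha)^2$ and is therefore met with equality by our choice of $\Delta$. The upper bound $\Delta \le n$ is trivial for large $n$ since $k$ and $\exponentalpha$ are fixed.

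Next I would verify the constraint $n < m \le n^{(\exponentalpha/16)^2 \log n}$. Comparing exponents of $n$ on both sides, this is just $1 < k \le (\exponentalpha/16)^2 \log n$. The right inequality holds for all sufficiently large $n$, and the left inequality amounts to the assumption $k \ge 2$, which is needed for the claim to be non-vacuous (for $k = 1$ the formula has $n$ pigeons and $n$ holes and is satisfiable whenever the sampled graph admits a perfect matching).

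With all hypotheses in place, \refcor{cor:random_main} delivers asymptotically almost surely the length lower bound $\exp\bigl(\Omega(n^{1-\exponentalpha})\bigr)$, which is precisely the claim. I do not anticipate any genuine obstacle: all of the heavy lifting (pigeon filtering from \reflem{lem:filter}, the closure construction of \reflem{lem:closure_size} and \reflem{lem:closure_expansion}, and the pseudo-width lower bound in \reflem{lem:fphp_lower_bound}, assembled in \refthm{thm:main}) is already packaged into \refcor{cor:random_main}; the present corollary is purely a convenient restatement for the polynomial regime $m = n^k$, with the parameter $\alpha$ tuned so that the constant degree $(16k/\exponentalpha)^2$ is exactly what the random-graph expansion lemma can guarantee.
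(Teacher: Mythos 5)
Your proposal is correct and matches the paper's own proof, which likewise just invokes \refcor{cor:random_main} with $m = n^k$ and $\gdegmax = (16k/\exponentalpha)^2$ and notes that a short calculation verifies the hypotheses. Your explicit check of the three conditions (including the observation that the degree bound is met with equality since $\log m = k\log n$, and the caveat about $k\ge 2$ for the hypothesis $n<m$) is exactly the "short calculation" the paper leaves implicit.
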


\begin{proof}
  We appeal to \refcor{cor:random_main} with 
  $\gdegmax = \left( \frac{16 k}{\varepsilon} \right)^2$,
  $m = n^{k}$ and 
  $\exponentalpha$ constant.
  A short calculation shows that all conditions are met.
\end{proof}

Our final corollary shows that we can get 
meaningful lower bounds even
for
a weakly exponential number of pigeons. Unfortunately,
the statement does not hold for random graphs.

\begin{corollary}\label{cor:main_subexp}
  Let $\kappa < 3/2 - \sqrt{2}$ and $\exponentalpha > 0$ be constant and
  $n$ be integer. Then
  there is a family of 
  explicitly constructible
  graphs $G$ with 
  $m = 2^{\bigomega{n^\kappa}}$ and
  left degree $\Bigoh{\log^{1 / \sqrt{\kappa}}(m)}$
  such that resolution requires
  length
  $
  \exp
  \bigl(
  \Bigomega{n^{1-2\sqrt{\kappa}(2 - \sqrt{\kappa}) - \exponentalpha}}
  \bigr)
  $
  to refute $\GFPHP$.
\end{corollary}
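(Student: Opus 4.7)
The plan is to combine the explicit expander construction from \refcor{cor:boundary_subexp} with the main lower bound \refthm{thm:main}, then tune the parameters $\nu$ and $\alpha$ to optimize both the degree bound and the exponent of the length lower bound.

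First, I would apply \refcor{cor:boundary_subexp} to obtain an explicit bipartite graph $G = (U \disjointunion V, E)$ with $\setsize{U} = m = 2^{\Omega(n^\kappa)}$, $\setsize{V} \leq n$, left degree $\Delta = O(\log^{2(1 + 1/\nu)} m)$, that is an $\bigl(r, \Delta, (1 - 2\diff)\Delta\bigr)$-boundary expander for $r = n^{1/(1+\nu) - 4\kappa/\nu}$ and $\diff = \varepsilon' \log n / \log m$, where $\varepsilon'$ is a small constant to be fixed. To match the degree bound $O(\log^{1/\sqrt{\kappa}} m)$ demanded by the corollary, set $\nu = 2\sqrt{\kappa}/(1 - 2\sqrt{\kappa})$, which requires $\kappa < 1/4$ (implied by $\kappa < 3/2 - \sqrt{2}$) so that $\nu > 0$. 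A routine calculation gives $1/(1+\nu) - 4\kappa/\nu = 1 - 4\sqrt{\kappa} + 4\kappa$.

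Next I would apply \refthm{thm:main} with $\alpha = n^{4\varepsilon'}$, chosen so that $\log\alpha/(2\log m) = 2\diff$, which aligns the expansion parameter of $G$ with the hypothesis of the theorem. The theorem then yields a length lower bound of $\exp\bigl(\Omega(r \log^2 \alpha / (\alpha \log^2 m))\bigr)$. Plugging in $r = n^{1 - 4\sqrt{\kappa} + 4\kappa}$, $\alpha = n^{4\varepsilon'}$, and $\log m = \Theta(n^\kappa)$, the exponent in the lower bound simplifies (up to polylogarithmic factors absorbed by the $\Omega$) to
\begin{equation*}
1 - 4\sqrt{\kappa} + 4\kappa - 4\varepsilon' - 2\kappa = 1 - 4\sqrt{\kappa} + 2\kappa - 4\varepsilon' = 1 - 2\sqrt{\kappa}(2 - \sqrt{\kappa}) - 4\varepsilon' \eqcomma
\end{equation*}
so choosing $\varepsilon' = \varepsilon/4$ delivers exactly the exponent claimed in the statement.

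The main place where things could go wrong is verifying the side condition $8 \le \alpha^3/\log\alpha = o(r/\log m)$ of \refthm{thm:main}. Substituting the values above, this reduces to $12 \varepsilon' < 1 - 4\sqrt{\kappa} + 3\kappa$ (asymptotically), and the crucial nontriviality is that $1 - 4\sqrt{\kappa} + 2\kappa > 0$, which is exactly the condition $\kappa < 3/2 - \sqrt{2}$ obtained by solving the quadratic $2x^2 - 4x + 1 > 0$ in $x = \sqrt{\kappa}$. Thus the restriction on $\kappa$ in the corollary statement is precisely the regime where both the expansion supplied by \refcor{cor:boundary_subexp} is strong enough for \refthm{thm:main} to apply and the resulting exponent of the length lower bound remains strictly positive, allowing $\varepsilon'$ to be taken small enough to absorb the slack while still producing a meaningful bound.
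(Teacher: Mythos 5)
Your proposal is correct and follows exactly the paper's route: the paper's proof consists precisely of invoking Corollary~\ref{cor:boundary_subexp} with $\nu = 2\sqrt{\kappa}/(1-2\sqrt{\kappa})$ and then appealing to Theorem~\ref{thm:main}. Your parameter calculations (the exponent $1-4\sqrt{\kappa}+4\kappa$ for $r$, the degree exponent $2(1+1/\nu)=1/\sqrt{\kappa}$, and the verification that $\kappa < 3/2-\sqrt{2}$ is exactly what keeps the final exponent positive) correctly fill in the details the paper leaves implicit.
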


\begin{proof}
  Let $G$ be the graph from Corollary \ref{cor:boundary_subexp} with
  $\nu = \frac{ 2\sqrt{\kappa} }{ 1 - 2\sqrt{\kappa} }$.
  An appeal to \refth{thm:main} using the graph~$G$ yields 
  the desired lower bound.
\end{proof}

\squeezesubsection{A \PSEUDOWIDTH Upper Bound for Graph FPHP Formulas
  with Extra Axioms}
\label{sec:fphp_upper_bound}

Let us now prove \reflem{lem:fphp_upper_bound}.  For this proof, let
us identify $\Vp$ with $[m]$.  For every clause $C$ in the refutation
$\refpi$, let $\vecrofdef{C}{m}$ be the vector where each coordinate
is given by
\begin{align}
  \riof{C}{i} = \FLOOR{\frac{\deg[G]{i} - \deg[C]{i}}{\adv{i}}} + 1 \eqcomma
\end{align}
where~$\adv{i} = \frac{\deg[G]{i} \log \alpha}{\log m}$ as in
\reflem{lem:fphp_upper_bound} and~$\deg[C]{i}$ denotes the $i$th
pigeon degree of~$C$, that is, the number of holes pigeon~$i$ may go
to such that the clause~$C$ is satisfied under the corresponding
partial assignment (see \refeq{eq:pig-deg}). We apply the filter lemma
(\reflem{lem:filter}) to the set of vectors
$\set{\vecrof{C} \mid C \in \refpi}$.  Denote by $\vecrdef{m}$ a
vector as guaranteed to exist by \reflem{lem:filter}.  Let
\begin{equation}
  \label{eq:def-d-i}
  \thr{i} = \deg[G]{i} - \ceiling{\adv{i} \ri{i}} + 1 \eqperiod
\end{equation}
A short calculation establishes that $\thr{i}$ is the smallest integer such that 
$\Floor{\frac{\deg[G]{i} - \thr{i}}{\adv{i}}} + 1 \le \ri{i}$.

Note that every pigeon $i \in [m]$
such that $\riof{C}{i} \le \ri{i}$
is 
\sheavy
for $C$. Also, every 
\heavy pigeon of a clause $C$
satisfies that $\riof{C}{i} \le \ri{i} + 1$. 

To obtain a refutation $\refpi'$ that satisfies the conclusions of the lemma,
we consider every clause $C \in \refpi$ and either there is a
\fakeaxiom which is a strengthening of~$C$ that can be added to~$\Ax$ or
the \pseudowidth of~$C$ is small enough and the clause can thus remain
in~$\refpi'$.  More concretely, we make a case distinction whether
$\vecrof{C}$ satisfies case 1 of \reflem{lem:filter} or only case
2. In the former case
there is a \fakeaxiom which is a strengthening of~$C$, while in the
other
the \pseudowidth of $C$ is bounded: \begin{enumerate}
  \item
    $C$ satisfies 
    $\Abs{ \setdescr{ i \in [m]}{ \riof{C}{i} \le \ri{i} } } \ge
    w_0$: As every pigeon $i \in [m]$
    with $\riof{C}{i} \le \ri{i}$
    also satisfies $\deg[C]{i} \ge \thr{i}$, there is a \fakeaxiom
    which is a strengthening of~$C$ that can be added to~$\Ax$. This
    reduces the \pseudowidth of this clause to $w_0$.
  \item
    $C$ satisfies 
    $\Abs{ \setdescr{ i \in [m] }{ \riof{C}{i} \le \ri{i} + 1 } } \le
    O(\alpha \cdot w_0)$:  As every \heavy pigeon always
    satisfies $\riof{C}{i} \le \ri{i} + 1$, the \pseudowidth of $C$ is
    $\bigoh{\alpha \cdot w_0}$.
\end{enumerate}
This concludes the proof as 
$\setsize{\Ax} \leq \length{\refpi}$ 
and the \pseudowidth of $\refpi'$ is
$\bigoh{\alpha \cdot w_0}$ by construction.

\squeezesubsection{A \PSEUDOWIDTH Lower Bound for
  Graph FPHP Formulas with Extra Axioms}
\label{sec:fphp_lower_bound}

We continue to the proof of \reflem{lem:fphp_lower_bound}. For
convenience, we restate the lemma here.


\lemmafphplower*

Using \refdef{def:closure}, we define the set of
\introduceterm{relevant} pigeons of a clause~$C$ as
\begin{equation}
  \closure{C} = \closure[r, (1 - 3\diff)\gdegmax]{\Pthick{C}} 
  \eqcomma
\end{equation}
where 
$\Pthick{C} $
denotes the set of
$(\vecthr, \vecadv)$-\heavy pigeons for~$C$ as defined in
\refdef{def:pseudo-width}.
By definition,
the closure of a set $T$ contains $T$ itself 
but is only defined if $\setsize{T} \le r$.
However, if $ \Setsize{\Pthick{C}} \geq r \geq r \diff/4 $ then we
already have the lower bound claimed in the lemma, and so we may
assume that the closure is well-defined for all clauses in the
refutation~$\refpi$.  This implies, in particular, that for every
clause $C \in \refpi$ we have $\Pthick{C} \subseteq \closure{C}$.

Let us next construct the linear space $\linspace$ and describe how
matchings are mapped into it.  Fix a field $\mathbb{F}$ of characteristic~$0$ and for each pigeon
$i \in \Vp$ let $\linspace[i]$ be a linear space over $\mathbb{F}$ of
dimension $\dimLi$. Let $\linspace$ be the tensor product
$\linspace = \bigtensor_{i \in \Vp} \linspace[i]$ and denote by
$\linmap[i]: \Vh \mapsto \linspace[i]$ a function with the
property\footnote{It is readily seen that such functions exist by,
  e.g., considering mappings that send each hole~$j \in \Vh$ to a
  uniformly sampled unit vector.} that any subset of holes
$J \subseteq \Vh$ of size at least $\dim(\linspace[i])$ spans
$\linspace[i]$. In other words, for $J$ as above
we have that $\linspace[i] = \spans{\linmapof[i]{j}:j\in J}$.
This is how we will realize the idea of ``lossy counting''. For
$J \subseteq \Vh$ such that $\setsize{J} \le \dim(\linspace[i])$ we
have exact counting
$ \dim(\spans{\setdescr{\linmap[i](j)}{j \in J}}) = \setsize{J} $ as
otherwise we can construct a set of size~$\dim(\linspace[i])$ that
does not span the entire space, but
when $ \setsize{J} > \dim(\linspace[i]) $ gets large enough we have
$ \dim(\spans{\setdescr{\linmap[i](j)}{j \in J}}) = \dim(\linspace[i])
$. We crucially rely on this notion of ``lossy counting'' in the proof
of \reflem{lem:fphp_span}.

In order to map functions $\Vp \mapsto \Vh$ into $\linspace$, we define $\linmap : \Vh^{\Vp} \mapsto \linspace$ by 
$\linmapof{ j_1, \ldots, j_m} = \bigtensor_{i\in \Vp}
\linmapof[i]{j_i}$,
where will we abuse notions slightly in that we identify a vector with the
$1$\nobreakdash-di\-men\-sion\-al space spanned by this vector.
For a partial function $\matcha : \Vp \mapsto \Vh$, 
we
let $\linmapof{\matcha}$ 
be the span of all total extensions of $\matcha$ (not necessarily matchings), 
or equivalently
\begin{equation}
  \label{eq:lanbda-map-def}
  \linmapof{\matcha} = \bigtensor_{i \in \dom{\matcha}} 
  \linmapof[i]{\matcha_i} \tensor 
  \bigtensor_{ i \not \in \dom{\matcha}}\linspace[i] \eqperiod
\end{equation}
Recall that $\matchings$ is the set of all partial matchings on the
graph $G$ and that we interchangeably think of partial matchings as
partial functions $\funcdescr{\matcha}{\Vp}{\Vh}$ or as Boolean
assignments as defined in \refeq{eq:matching-to-assignment}.
For each clause $C$, we are interested in the partial matchings
$\matcha \in \matchings$ with domain $ \dom{\matcha} = \closure{C} $
such that $\matcha$ does not satisfy $C$.  We refer to the set of such
matchings as the \introduceterm{zero space} of $\clc$ and denote it by
\begin{equation}\label{eq:def_null}
  \Null{C}
   = \set{\matcha \in \matchings \mid \dom{\matcha} = \closure{C} \wedge
     \restrict{C}{\matcha} \neq 1}
   \eqperiod
\end{equation}
We associate $C$ with the linear space
\begin{equation}
  \linmapof{C} = 
  \spans{ \setdescr{\linmapof{\matcha} }{ \matcha \in \Null{C}}} \eqperiod  
\end{equation}
Note that contradiction is mapped to $\linspace$, \ie
$\linmapof{\bot} = \linspace$. 

We assert that the span of the axioms
$\spans{\setdescr{\linmapof{A}}{A \in \GFPHP \cup \Ax}}$ 
is a proper subspace of $\linspace$.

\begin{lemma} \label{clm:fphp_spanL} \label{lem:fphp_spanL}
  If $\setsize{\Ax} \le (1 + \diff)^{w_0}$, then 
  $\spans{\setdescr{\linmapof{A}}{A \in \GFPHP \cup \Ax}} \subsetneq \linspace$.
\end{lemma}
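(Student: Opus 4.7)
The plan is to prove the strict containment by a dimension-counting argument: I bound $\dim(\linmapof{A})$ for each axiom $A$ individually, and then show that the sum of these bounds is strictly less than $\dim(\linspace)$.

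First I would exploit the tensor-product structure of $\linmapof{\matcha}$ in~\refeq{eq:lanbda-map-def}: for any clause $C$, the subspace $\linmapof{C}$ is contained in $\bigtensor_{i \in \closure{C}} W_i(C) \tensor \bigtensor_{i \notin \closure{C}} \linspace[i]$, where $W_i(C) = \spans{\setdescr{\linmapof[i]{j}}{j \in \neigh[G]{i} \setminus \neigh[C]{i}}}$ collects the ``safe'' vectors for pigeon~$i$ with respect to $C$. Since $\linmap[i]$ is in general position (any $\dim(\linspace[i])$ of its values form a basis), this yields $\dim(\linmapof{C}) \le \prod_{i \in \closure{C}} \dim(W_i(C)) \cdot \prod_{i \notin \closure{C}} \dim(\linspace[i])$.

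For every genuine axiom $A \in \GFPHP$ I would argue that $\linmapof{A} = \set{0}$. Each such axiom mentions at most two pigeons, all of which lie in $\Pthick{A}$ (and hence in $\closure{A}$) in the parameter regime at hand. Consequently any matching $\matcha$ with $\dom{\matcha} = \closure{A}$ must assign holes to these pigeons, and the matching property combined with the structure of the pigeon, hole, or functionality clause forces $\matchboolof{\matcha}{A} = 1$, so $\Null{A} = \emptyset$.

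For each \fakeaxiom $A \in \Ax$ I would combine the tensor bound with the following per-pigeon estimates: for an \sheavy pigeon $i \in \Pfat{A}$ we have $\deg[A]{i} \ge \thr{i}$, and hence $\dim(W_i(A)) \le \deg[G]{i} - \thr{i}$; for every other pigeon in $\closure{A}$ the trivial bound $W_i(A) \subseteq \linspace[i]$ applies. Since $\dim(\linspace[i]) = \deg[G]{i} - \thr{i} + \adv{i}/4$ and $\adv{i}/4 = \deg[G]{i}\diff$, the per-pigeon ratio $\dim(W_i(A))/\dim(\linspace[i])$ is at most $(\deg[G]{i} - \thr{i})/(\deg[G]{i} - \thr{i} + \deg[G]{i}\diff)$. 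The main step is then to verify that this ratio is \emph{strictly} smaller than $1/(1+\diff)$: the function $a \mapsto a/(a + \deg[G]{i}\diff)$ is strictly increasing in $a$, and the hypothesis $\thr{i} > \adv{i} > 0$ gives $\deg[G]{i} - \thr{i} < \deg[G]{i}$, so the ratio is strictly smaller than $\deg[G]{i}/\bigl(\deg[G]{i}(1+\diff)\bigr) = 1/(1+\diff)$. Taking the product over the $w_0$ \sheavy pigeons of $A$ yields $\dim(\linmapof{A}) < \dim(\linspace)/(1+\diff)^{w_0}$, and summing over all $A$ and invoking $\setsize{\Ax} \le (1+\diff)^{w_0}$ gives $\dim(\spans{\setdescr{\linmapof{A}}{A \in \GFPHP \cup \Ax}}) < \dim(\linspace)$, which is the desired strict containment.
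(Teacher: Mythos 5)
Your proposal is correct and follows essentially the same route as the paper's proof: genuine axioms of $\GFPHP$ have empty zero space because all their pigeons are \heavy and any matching covering them satisfies the clause, while each \fakeaxiom{} spans at most an exponentially small fraction of $\linspace$ via the per-pigeon bound $\dim(W_i(A)) \le \deg[G]{i} - \thr{i}$ over the $w_0$ \sheavy pigeons. The only (immaterial) difference is the final arithmetic: you bound the per-pigeon ratio strictly by $1/(1+\diff)$ and sum dimensions, whereas the paper bounds it by $1-\diff$ and concludes from $(1-\diff)^{w_0}(1+\diff)^{w_0} < 1$.
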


Accepting this claim without proof for now, this implies that in
$\refpi$ there is some resolution step deriving $C$ from $C_0$
and~$C_1$ where the subspace of the resolvent is not contained in the
span of the subspaces of the premises, or in other words
$ \linmapof{C}
\nsubseteq 
\spans{\linmapof{C_0}, \linmapof{C_1}}
$.
Our next lemma, which is the heart of the argument,
says that this cannot happen 
as long as the closures of the clauses are small. 

\begin{lemma}\label{lem:fphp_span}
  Let $C$ be a clause derived from clauses $C_0$ and $C_1$ by the
  resolution rule. If it holds that
  \begin{align*}
    \maxofexpr{
    \setsize{\closure{C_0}},
    \setsize{\closure{C_1}},
    \setsize{\closure{C}}
    }
    \leq r/4 \eqcomma
  \end{align*}
  then
  $\linmapof{C} \subseteq \spans{\linmapof{C_0}, \linmapof{C_1}}$.
\end{lemma}

Since contradiction cannot be derived while the closure is of size at
most $r/4$, any refutation~$\refpi$ must contain a clause~$C$ with
$\setsize{ \closure{C} } > r/4$.  But then \reflem{lem:closure_size}
implies that $C$ has \pseudowidth at least $r \diff / 4$, and
\reflem{lem:fphp_lower_bound} follows.  All that remains for us is to
establish \reftwolems {lem:fphp_spanL} {lem:fphp_span}.

\begin{proof}[Proof of  \reflem{lem:fphp_spanL}]
  We need to show that the
  axioms $\GFPHP \cup \Ax$ 
  do not span all of $\linspace$. We start with the axioms in $\GFPHP$.

  Let $A$ be pigeon axiom $\pigeonclause{i}$ as in~\eqref{eq:axiom-pigeon}
  or a functionality axiom $\functionclause{i}{j}{j'}$ 
  as in~\eqref{eq:axiom-functionality}. 
  Note that~$i$ is a \heavy pigeon for $A$. Clearly, there are no pigeon-to-hole
  assignments for pigeon $i$ that do not satisfy $A$. Thus there are no 
  matchings on $\closure{A}$ that do not satisfy $A$. We conclude that 
  $\linmapof{A} = \emptyset$.
  If instead $A$ is a hole axiom $\holeclause{i}{i'}{j}$ as
  in~\eqref{eq:axiom-hole}, 
  then we can  observe that
  $\deg[G]{i} - 1 \ge \thr{i} - \adv{i}$ since $\adv{i} = \cdiff \diff
  \deg[G]{i} \geq 2 \diff \gdegmax \ge 1$ (by boundary expansion). 
  This
  implies that $A$ has two \heavy pigeons.
  Observe that there are no matchings on these two pigeons
  that do not satisfy $A$. Thus $\Null{A} = \emptyset$ and
  we conclude that $\linmapof{A} = \emptyset$.

  Now consider the \fakeaxiom{}s in $\Ax$. We wish to show that any
  $A \in \Ax$ can only span a very small fraction of~$\linspace$.  We
  can estimate the number of dimensions $\linmapof{A}$ spans by
  \begin{align}
    \dim \linmapof{A}
    \le \prod_{i \notin \Pfat{A}} \dim \linspace[i] 
    \prod_{i \in \Pfat{A}} (\deg[G]{i} - \deg[A]{i})
    \le \prod_{i \notin \Pfat{A}} \dim \linspace[i] 
    \prod_{i \in \Pfat{A}}  (\deg[G]{i} - \thr{i}) \eqcomma
  \end{align}
  since all pigeons~$i \in \Pfat{A}$ are \sheavy, that is, it holds
  that~$\deg[A]{i} \geq \thr{i}$.
  Hence the fraction of the space $\linspace$ that $A$ may span is bounded by
  \begin{equation}
    \frac{\dim \linmapof{A}}{\dim \linspace} 
    \le \prod_{ i \in \Pfat{A} } \frac{\deg[G]{i} - \thr{i}}{\dimLi}
    \le \left( 1 - \diff \right)^{w_0} \eqperiod
  \end{equation}
  As $\setsize{\Ax} \le  \left( 1 + \diff \right)^{w_0}$
  we can conclude that
  not all of $\linspace$ is spanned by the axioms.
\end{proof}

\begin{figure}
  \begin{center}
    \includegraphics{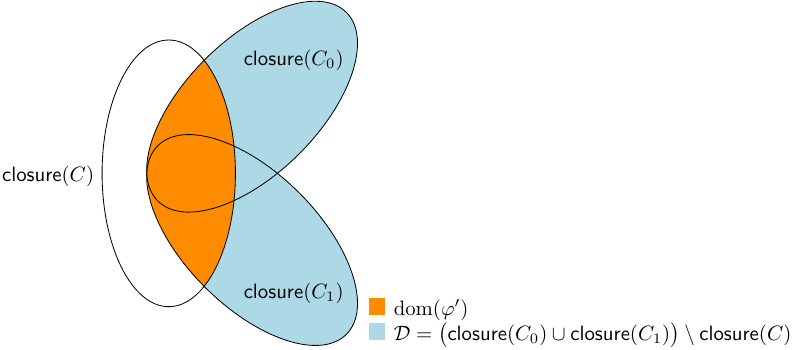}
  \end{center}
  \caption{Depiction of relations between 
    $\closure{C}, 
    \closure{C_i}, i = 1,2, \dom{\matchb}$ and $\diffclosure$ in proof
    of \reflem{lem:fphp_span}.}
  \label{fig:span}
\end{figure}

\begin{proof}[Proof of \reflem{lem:fphp_span}]
  Let us write 
  $S_{01} = \closure{C_0} \cup \closure{C_1}$ and $S = \closure{C}$.
  In order to establish
  the lemma,
  we need to show   for all $\matcha \in Z(C)$ that
  \begin{equation}
    \linmapof{\matcha} \subseteq 
    \spans{ \linmapof{C_0}, \linmapof{C_1} } 
    \eqperiod
  \end{equation}
  Denote by $\matchb$ the restriction of $\matcha$ to the domain 
  $S \cap S_{01}$
  and note
  that $C$ is not satisfied
  under the Boolean assignment associated with~$\matchb$.
  Also, observe that  if a matching $\matche$ extends a matching $\matchf$, then 
  $\linmapof{\matche}$ is contained in $\linmapof{\matchf}$. This is so since
  for any pigeon
  $i \in \dom{\matche} \setminus \dom{\matchf}$
  we have  from~\refeq{eq:lanbda-map-def}
  that $\matchf$ picks up the whole subspace~$\linspace[i]$
  while $\matche$ only gets a single vector.  
  Thus, if we can show that
  $\linmapof{\matchb} \subseteq \spans{ \linmapof{C_0}, \linmapof{C_1}
  }$, then we are done as $\matcha$ extends $\matchb$ and hence
  $\linmapof{\matcha} \subseteq \linmapof{\matchb}$. For the following
  argument it may be helpful to refer to the illustration in
  \reffig{fig:span}.

  Let $\diffclosure = S_{01} \setminus S$ and denote by 
  $\matchings[\diffclosure]$ 
  the set of matchings
  that extend $\matchb$ to the domain $\diffclosure$ and do not satisfy $C$. 
  Since each matching $\matchc \in \matchings[\diffclosure]$ 
  fails to satisfy $C$,
  by the soundness of the resolution rule we have that it also fails
  to satisfy either     
  $C_0$ or~$C_1$. Assume without loss of generality that $\matchc$
  does not satisfy~$C_0$ and denote
  by $\matchd$ the restriction of $\matchc$ to the domain of
  $\closure{C_0}$. 
  From \refeq{eq:def_null} we see that
  $\matchd \in Z(C_0)$ 
  and therefore
  $\linmapof{\matchc} \subseteq \linmapof{\matchd} \subseteq
  \linmapof{C_0}$.
  
  So far we have argued that for all matchings $\matchc \in \matchings[\diffclosure]$ 
  it holds that
  $\linmapof{\matchc}$ is contained in $\spans{\linmapof{C_0}, \linmapof{C_1}}$.
  Let $\linmapof{\matchings[\diffclosure]} = 
  \spans{\linmapof{\matchc} \mid \matchc \in \matchings[\diffclosure]}$.
  If we can show that the set of matchings
  $\matchings[\diffclosure]$ 
  is large enough for 
  $\linmapof{\matchings[\diffclosure]} = \linmapof{\matchb}$
  to hold, then the lemma follows.
  In other words, we want to show that
  $\linmapof{\matchings[\diffclosure]}$ 
  projected to 
  $\linspace[\diffclosure] = 
  \bigtensor_{i \in \diffclosure} \linspace[i]$
  spans all of the space $\linspace[\diffclosure]$.

  To argue this, note first
  that $\diffclosure$ is completely outside the $\closure{C}$.
  Furthermore,
  by assumption we have
  $\setsize{\closure{C}} \leq  r/4$ and  
  $
  \setsize{ \diffclosure } 
  \leq
  \setsize{ S_{01} } 
  \leq 
  r/2
  $. 
  An application of
  \reflem{lem:closure_expansion} 
  now tells us that
  \begin{equation}
    \abs{ \uniqueNeigh[G \setminus (\closure{C} \cup \neigh{\closure{C}})]
      {\diffclosure} } \ge 
    (1 - 3 \diff) \GdegMax \setsize{\diffclosure} 
    \eqperiod
  \end{equation}    
  By an averaging argument,
  there must exist a pigeon $i_1 \in \diffclosure$
  that has more than $(1 - 3 \diff) \GdegMax$ 
  unique neighbours in 
  $\uniqueNeigh[G \setminus 
  (\closure{C} \cup \neigh{\closure{C}})]{\diffclosure}$.
  The same argument applied to   
  $\diffclosure \setminus \set{ i_1 }$
  show that some pigeon $i_2$
  has more than $(1 - 3 \diff) \GdegMax$ 
  unique neighbours on top of the neighbours reserved for
  pigeon~$i_1$.
  Iterating this argument, 
  we derive by induction that for each pigeon 
  $i \in \diffclosure$  we can find
  $(1 - 3 \diff) \GdegMax$ distinct holes
  in $\neigh{\diffclosure}$.
  Since all pigeons in  $\diffclosure$ are \light in~$C$, 
  it follows that at most $\thr{i} - \adv{i}$ mappings of pigeon~$i$
  can satisfy the clause $C$. Hence, there are at least
  \begin{align}
    (1 - 3 \diff) \GdegMax - (\thr{i} - \adv{i})
    &\geq 
      (1 - 3 \diff) \deg[G]{i} - \thr{i} + 4 \diff \deg[G]{i}
      \nonumber \\
    &\geq
      \deg[G]{i} - \thr{i} + \adv{i}/4
  \end{align}
  many holes to which  each pigeon in~$\diffclosure$
  can be sent, independently of all other pigeons in~$\diffclosure$,
  without satisfying $C$. As we 
  have that
  $\dim(\linspace[i]) = \dimLi$, we conclude that
  $\linmapof{\matchings[\diffclosure]}$ projected to
  $\linspace[\diffclosure]$ spans the whole space.
  This concludes the proof of the lemma.
\end{proof}

\newcommand{\closureconstant}{20}
\newcommand{\closureconplus}{28}

\section{Lower Bounds for 
  Perfect Matching Principle
    Formulas}
\label{sec:pm}

In this section, we show that the perfect matching principle formulas
defined over even highly unbalanced bipartite graphs 
require exponentially long resolution refutations if the graphs are
expanding enough.

Just as in~\cite{Razborov04ResolutionLowerBoundsPM}, our proof is by an
indirect reduction to the  FPHP lower bound, and therefore there is a
significant overlap in concepts and notation with~\refsec{sec:GFPHP}.
However, since there are also quite a few subtle shifts in meaning, 
we restate all definitions in full below to make the exposition in
this section self-contained and unambiguous.

We first review some useful notions
from~\cite{Razborov01ImprovedResolutionLowerBoundsWPHP}. 
Let
$G = (\V, E)$
denote the graph over which the formulas are defined.
For a 
clause~$C$ and a 
vertex $v \in V(G)$,
let the \introduceterm{clause-neighbourhood of $v$ in $C$}, denoted by $\neigh[C]{v}$,  be
the vertices $u \in V(G)$ with the property that $C$ is satisfied if
$v$ is matched to $u$, that is,
\begin{equation}
	\neigh[C]{v} = \setdescr{ u \in \V }{ e=\set{u, v} \in 
	E~\text{and}~\matchboolof{\set{e}}{C} = 1 } 
   \eqperiod
\end{equation}
For a set $V \subseteq V(G)$ let $\neigh[C]{V}$ be the union of the
clause-neighbourhoods of the vertices in $V$, \ie
$
  \neigh[C]{V} = \bigcup_{v \in V}\neigh[C]{v}
$ and let the \introduceterm{$v$th vertex degree of $C$} be
\begin{equation}
  \deg[C]{v} = \setsize{ \neigh[C]{v} } 
\eqperiod
\end{equation}
We think of a vertex $v$ with large 
degree~$\deg[C]{v}$ 
as a vertex on which the derivation has
not made any progress up to the point of deriving $C$, since the
clause rules out very few
neighbours.
The vertices with high enough vertex degree in a clause are the 
\emph{\heavy vertices} of the clause as defined next.

\begin{definition}[Vertex weight, \pseudowidth and \fakeaxiom{s}]
  \label{def:onto-pseudo-width}
  Let $\vecthrdef{m + n}$ and $\vecadvdef{m + n}$ be two vectors such that
  $\vecthr$ is elementwise greater than $\vecadv$. We say that a vertex $v$
  is \introduceterm{$\vecthr$-\sheavy for $C$} 
  if $\deg[C]{v} \ge \thr{v}$
  and that vertex~$v$ is
  \introduceterm{$(\vecthr, \vecadv)$-\heavy for $C$}
  if $\deg[C]{v} \ge \thr{v} - \adv{v}$. When $\vecthr$ and $\vecadv$ 
  are understood from context
  we omit the parameters and just refer to 
  \introduceterm{\superheavy} and \introduceterm{\heavy} vertices.
  Vertices that are not \heavy are referred to as 
  \emph{\light vertices}.
  The set of vertices that are \sheavy for $C$ is denoted by
  \begin{equation}
    \Vfat{C} = \setdescr{v \in \V}{\deg[C]{v} \ge \thr{v}}
  \end{equation}
  and the set of \heavy vertices for  $C$ is denoted by 
  \begin{equation}
    \Vthick{C} = \setdescr{v \in \V}{\deg[C]{v} \ge \thr{v} - \adv{v}} 
    \eqperiod
  \end{equation}
  The 
  \introduceterm{\pseudowidth}
  $\width{C} = \setsize{ \Vthick{C} }$ of a clause~$C$
  is the number of \heavy vertices in it, and
  the \pseudowidth of a resolution 
  refutation~$\refpi$ is
  $\width{\refpi} =
  \max_{C \in \refpi}{\width{C}}$.
  We refer to clauses $C$
  with precisely~$w_0$~\sheavy vertices
  as \introduceterm{\fakeaxiom{s}}.
\end{definition}

To a large extent, the proof of the
lower bounds for perfect matching formulas follows the general idea of the proof of \refthm{thm:main}:
given a short refutation we first apply the filter lemma to obtain a refutation
of small \pseudowidth; we then prove that
in small \pseudowidth contradiction cannot be derived
and can thus conclude that no short refutation exists.
In more detail, 
given a short resolution refutation~$\refpi$,
we use the filter lemma (\reflem{lem:filter}) 
to get a filter vector $\vecthrdef{m+n}$
such that each clause either has many \sheavy vertices or
not too many \heavy vertices (for an appropriately chosen vector $\vecadv$).
Clearly, clauses that fall into the second case of the filter lemma
have bounded \pseudowidth. Clauses in the first case, however,
may have very large \pseudowidth. In order to obtain a proof of
low \pseudowidth, these latter clauses are strengthened to \fakeaxiom{s}
and added to a special set $\Ax$. This then gives a refutation
$\refpi'$ that refutes the formula $\PM \cup \Ax$ in bounded
\pseudowidth
as stated in the next lemma.

\begin{lemma}\label{lem:ontofphp_upper_bound}
  Let $G=( \Vl \disjointunion \Vr, E)$ be a bipartite graph 
  with $\setsize{\Vl} = \numpigeons$ and 
  $\setsize{\Vr} = \numholes$;
  let $\refpi$ be a resolution refutation of $\PM$;
  let $w_0, \alpha \in [\numpigeons + \numholes]$ be such that 
  $w_0 > \log \length{\refpi}$ and
  $w_0 \ge \alpha^2 \ge 4$,
  and let $\vecadvdef{\numpigeons + \numholes}$ be defined by
  $\adv{v} = \frac{\deg[G]{v} \log \alpha}
  {\log (\numpigeons + \numholes)}$ for $v\in V(G)$.
  Then there exists 
  an integer vector $\vecthrdef{\numpigeons + \numholes}$, with 
  $\adv{v} < \thr{v} \le \deg[G]{v}$ for all 
  $v \in V(G)$, a set of \fakeaxiom{s} $\Ax$ with 
  \mbox{$\setsize{\Ax} \le \length{\refpi}$}, and a resolution refutation 
  $\refpi'$ of $\PM \cup \Ax$ such that 
  $\length{\refpi'} \leq \length{\refpi}$ and
  \mbox{$\width{\refpi'} \le \bigoh{\alpha \cdot w_0}$.}
\end{lemma}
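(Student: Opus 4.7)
The plan is to mimic the proof of \reflem{lem:fphp_upper_bound} from \refsec{sec:fphp_upper_bound} essentially verbatim, with the one conceptual change that the filtering is carried out over \emph{all} vertices $v \in V(G) = \Vl \disjointunion \Vr$ rather than only over the left vertices. This reflects the fact that in the perfect matching formula $\PM$ the right-hand vertices play a symmetric role via the onto axioms, so a clause can have \heavy vertices on either side, and a \fakeaxiom is indexed by its \sheavy vertices regardless of which side they sit on.

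First I would associate to every clause $C \in \refpi$ the vector $\vecrofdef{C}{\numpigeons + \numholes}$ with coordinates
\[
\riof{C}{v} = \FLOOR{\frac{\deg[G]{v} - \deg[C]{v}}{\adv{v}}} + 1
\quad\text{for every } v \in V(G) \eqperiod
\]
Since $\length{\refpi} < 2^{w_0}$, I can apply the Filter lemma (\reflem{lem:filter}) with $\nvecs = \length{\refpi}$, parameter $\alpha$, and window $w_0$ to the collection $\setdescr{\vecrof{C}}{C \in \refpi}$, obtaining an integer vector $\vecrdef{\numpigeons + \numholes}$ with $\ri{v} \le \floor{\log(\numpigeons+\numholes)/\log\alpha}-1$. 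I then define the threshold vector by
\[
\thr{v} = \deg[G]{v} - \Ceiling{\adv{v} \ri{v}} + 1 \eqcomma
\]
which, by the same short computation as in \refsec{sec:fphp_upper_bound}, is the smallest integer making $\floor{(\deg[G]{v}-\thr{v})/\adv{v}}+1 \le \ri{v}$ hold; the bound on $\ri{v}$ guarantees $\adv{v} < \thr{v} \le \deg[G]{v}$ for every vertex.

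The crucial observations to carry over are then: (a) if $\riof{C}{v} \le \ri{v}$ then $\deg[C]{v} \ge \thr{v}$, so $v$ is \sheavy for $C$; and (b) if $v$ is \heavy for $C$ then $\riof{C}{v} \le \ri{v}+1$. Applying these to the dichotomy supplied by the Filter lemma, every clause $C \in \refpi$ falls into one of two types. Clauses in case~1 have at least $w_0$ \sheavy vertices, so I strengthen such a clause by discarding literals until exactly $w_0$ \sheavy vertices remain, thus producing a \fakeaxiom $C' \subseteq C$, which I place in $\Ax$. Clauses in case~2 have at most $\bigoh{\alpha \cdot w_0}$ vertices with $\riof{C}{v} \le \ri{v}+1$ and therefore \pseudowidth $\bigoh{\alpha \cdot w_0}$ by observation~(b).

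To form $\refpi'$, I replace each case-$1$ clause in $\refpi$ by its associated \fakeaxiom and insert a single weakening step $C' \vdash C$ so that the rest of the derivation remains valid verbatim; since weakenings can be absorbed without increasing the number of clauses, we get $\length{\refpi'} \le \length{\refpi}$ and $\setsize{\Ax} \le \length{\refpi}$, while every clause of $\refpi'$ now has \pseudowidth $\bigoh{\alpha \cdot w_0}$. I do not anticipate a genuine obstacle here: the only thing to double-check is that the Filter lemma's hypothesis $w_0 > \ln \nvecs$ is met, which follows from $w_0 > \log \length{\refpi}$, and that the $\adv{v}$ chosen in the lemma statement matches the slack arising from the floor function, which it does by construction.
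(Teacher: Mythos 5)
Your proposal is correct and is essentially the paper's own argument: the paper omits the proof of this lemma precisely because it is, as you describe, the proof of \reflem{lem:fphp_upper_bound} carried out verbatim with the filtering applied to all of $V(G) = \Vl \disjointunion \Vr$ instead of only the left vertices. The details you supply (the vectors $\vecrof{C}$, the application of \reflem{lem:filter}, the definition of $\thr{v}$, and the two-case split into \fakeaxiom{}s versus clauses of \pseudowidth $\bigoh{\alpha \cdot w_0}$) match the intended adaptation.
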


The proof of the above lemma is omitted as it is syntactically equivalent
to the proof of \reflem{lem:fphp_upper_bound}.
Until this point, we have almost mimicked the proof of \refthm{thm:main}.
The main differences will appear in the proof of the
counterpart to \reflem{lem:ontofphp_upper_bound}, 
which states a \pseudowidth lower bound.

\begin{lemma}\label{lem:onto_lower_bound}
  Assume for
  $\diff \le 1/64$ and $m,n,r,\gdegmax \in \N$
  that
  $G=(\Vl \disjointunion \Vr, E)$ 
  is an
  $(r, \GdegMax, \left(1 - 2\diff\right)\GdegMax)$-boundary expander
  with $\setsize{\Vl} = m$, $\setsize{\Vr} = n$,
  $\GdegMax \geq {\log m}/{\diff^2}$, and 
  $\minofset{\deg[G]{v}}{v\in \Vr} \geq {r}/{\diff}$.
  Let $\vecadv = (\adv{v} \mid v \in V(G))$ be defined by
  $\adv{v} = \ontocdiff \deg[G]{v} \diff$ and suppose that
  $\vecthr = (\thr{v} \mid v \in V(G))$
  is an integer vector such
  that $\adv{v} < \thr{v} \leq \deg[G]{v}$ for all $v \in V(G)$.
  Fix $w_0$ such that $64 \leq w_0 \leq r \diff - \log n$ 
  and let $\Ax$ be
  an arbitrary set of \fakeaxiom{s} with
  $\setsize{\Ax} \le (1 + 16\diff)^{w_0/8}$. 
  Then every resolution refutation~$\refpi$ of
  $\PM \cup \Ax$ has either length
  $\length{\refpi} \ge 2^{w_0/32}$ or
  \pseudowidth
  $\width{\refpi} \ge r \diff$.
\end{lemma}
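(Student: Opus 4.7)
The plan is to adapt the FPHP argument of \reflem{lem:fphp_lower_bound} to the perfect matching setting, with two essential modifications. First, the onto axioms force right vertices to play a role symmetric to left vertices, so the closure operation and the associated linear-space must cover heavy vertices on both sides of the bipartition. Second, the length/pseudo-width dichotomy in the conclusion signals that a random-restriction preprocessing is needed to kill the (at most $2^{w_0/32}$) clauses of $\refpi$ whose closures would otherwise be too large for the core span argument to handle.

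I would begin with random restriction: assuming $\length{\refpi} < 2^{w_0/32}$, sample a partial matching $\matcha$ in $G$ that covers an appropriately chosen fraction of $\Vr$. Standard expander-preservation arguments show the residual graph $G^\star$ is still a boundary expander with parameters only mildly degraded, and a union bound over the $2^{w_0/32}$ clauses in $\refpi$ shows that with positive probability every clause whose heavy-vertex closure would exceed a threshold $\sim r$ gets satisfied by $\matcha$, while $\setsize{\Ax \upharpoonright_{\matcha}}$ remains bounded by $(1+16\diff)^{w_0/8}$. After this step, all surviving clauses have closures of controllable size, and it suffices to prove a pure pseudo-width lower bound of $r\diff$ on the restricted refutation.

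Next, define $\closure{C} = \closure[r,(1-c\diff)\gdegmax]{\Vthick{C}}$ for an appropriate constant $c$ chosen so \reflem{lem:closure_expansion} applies after closure removal, where now $\Vthick{C} \subseteq \Vl \cup \Vr$. Build the tensor-product space $\linspace = \bigotimes_{v} \linspace[v]$, where for a left vertex the dimension reflects available holes and for a right vertex it reflects available pigeons (using $\minofset{\deg[G]{v}}{v\in\Vr} \ge r/\diff$ to control the right-side dimensions). Associate each partial matching with a rank-one tensor as in \refeq{eq:lanbda-map-def}, set $\Null{C}$ to be matchings on $\closure{C}$ that falsify $C$, and $\linmapof{C} = \spans{\linmapof{\matcha} : \matcha \in \Null{C}}$. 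The analogue of \reflem{lem:fphp_spanL} requires one new case: for an onto axiom $\ontoclause{j}$, vertex $j$ is heavy by construction, so no matching on $\closure{\ontoclause{j}}$ can falsify it, giving $\linmapof{\ontoclause{j}} = \emptyset$; combined with the per-fake-axiom bound $(1-\diff)^{w_0/8}$ and $\setsize{\Ax} \le (1+16\diff)^{w_0/8}$, the axioms strictly fail to span $\linspace$. The span-preservation step (analogue of \reflem{lem:fphp_span}) proceeds by the same closure-based extension argument: for a resolution step over $x_{ij}$, either endpoint may drop out of $\Vthick{C}$, so we run the extension argument twice, once per side, using \reflem{lem:closure_expansion} to guarantee each vertex $v \in \diffclosure$ retains $(1-3\diff)\gdegmax$ unique neighbours outside the closure, enough to dominate $\adv{v}$ and span the corresponding factor of $\linspace$.

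The hard part will be calibrating the margins $\adv{v} = \ontocdiff \deg[G]{v}\diff$ and the closure expansion parameter simultaneously on both sides of the bipartition: right vertices can have very high degree (up to $m$), so the ``lossy counting'' dimensions $\dim(\linspace[v])$ must be coordinated with both the expansion hypothesis $\minofset{\deg[G]{v}}{v\in\Vr} \ge r/\diff$ and the restriction step, which must destroy bad clauses without collapsing too many onto axioms. The strict constants $\diff \le 1/64$ and $\ontocdiff = 64$ in the statement almost certainly come from tight slack in this balancing, as do the bounds $w_0 \le r\diff - \log n$ (needed so the union bound over clauses survives the restriction) and $\gdegmax \ge \log m / \diff^2$ (needed so Chernoff-type concentration controls right-side residual degrees).
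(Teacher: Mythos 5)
Your high-level frame (tensor-product spaces, closure, a span-preservation lemma, plus some probabilistic preprocessing that consumes the length bound) is right, but two of your central design choices would not go through, and you miss the device the paper actually uses to make the reduction work.

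First, the paper does not apply a random restriction that kills bad clauses; it samples a random \emph{partition} $V(G) = \Vp \disjointunion \Vh$ of the vertices into ``pigeons'' and ``holes'' (\reflem{lem:partition}) and then simulates the graph-FPHP argument with respect to that partition. The length bound $2^{w_0/32}$ is consumed by the union bound over clauses needed to guarantee that every clause-neighbourhood splits roughly evenly between $\Vp$ and $\Vh$ (up to a small error set $\Vbad{C}$ of left vertices per clause) --- no clause is ever satisfied or removed. Your proposed restriction step has a genuine gap: there is no reason a clause whose heavy-vertex closure is large should be satisfied with high probability by a random partial matching (such a clause may contain very few literals), so the claim that ``every clause whose closure would exceed a threshold gets satisfied'' is unsupported and, as far as I can see, false in general.

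Second, you define the closure on $\Vthick{C} \subseteq \Vl \cup \Vr$ and invoke \reflem{lem:closure_expansion} ``once per side.'' But the closure operation and \reflem{lem:closure_expansion} only make sense for vertex sets that expand, and the hypotheses give boundary expansion only from $\Vl$ into $\Vr$ --- there is no expansion guarantee from $\Vr$ into $\Vl$ whatsoever (indeed the explicit graphs the theorem is applied to have none). The paper therefore takes the closure only over left vertices (in the subgraph $G \setminus (\Vr \cap \Vp)$, and over the enlarged set $(\Vthick{C} \cap \Vl) \cup \Vbad{C}$), and handles right vertices in $\diffclosure$ by a completely different argument: the minimum-degree hypothesis $\deg[G]{v} \ge r/\diff$ guarantees that even after the at most $r/2$ edges of the current matching are accounted for, each right pigeon still has enough free neighbours, which is Razborov's original dense-graph argument. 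Finally, even on the left side your extension step is incomplete: the candidate holes for a vertex of $\diffclosure \cap \Vl$ may already be matched, and naively deleting those edges can unmatch right vertices that must remain covered; the paper needs the ``lazy'' edge-removal procedure (Algorithm~\ref{alg:extendmatchings}) together with \refclaim{clm:neighbours} and \refclaim{clm:dom_matchings} to control this, and nothing in your sketch addresses it.
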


The proof of the above lemma is based on a sort of reduction to the
$\GFPHP$ case. The idea, due to Razborov \cite{Razborov04ResolutionLowerBoundsPM}, is to first pick a
partition of the vertices of $G$ that looks random to every clause in the 
refutation and then simulate the $\GFPHP$ lower bound on this partition.
In our setting, however, this process gets quite involved. 
Already implementing the partition idea of Razborov is
non-trivial: for a fixed clause $C$ some
vertices that are \light may be \sheavy with respect to the
partition, and we do not have an upper bound on the \pseudowidth any longer.
The insight needed to solve this issue 
is to show that
by expansion there
are not too many such vertices per clause, and then adapt the closure
definition to take these vertices into account.

Another issue we run into is that the span argument from
\refsec{sec:GFPHP} cannot be applied to all the vertices in the graph
since vertices in $\Vr$ are not guaranteed to be expanding. Instead,
for the vertices in $\Vr$, we need to resort to the span argument from
\cite{Razborov03ResolutionLowerBoundsWFPHP}. Moreover, vertices in the neighbourhood of $\diffclosure$
(as defined in the proof of \reflem{lem:fphp_span})
may already be matched, and we are hence unable to attain enough
matchings. 
Our solution is to consider a ``lazy'' edge removal procedure
from the original matching, which with a careful analysis can be shown
to circumvent the problem\ifthenelse{\boolean{false}}{. 
 We refer to the full-length version of this paper for the proof of \reflem{lem:onto_lower_bound}.
}{---see \refsec{sec:onto_lower_bound} for
details.}

\squeezesubsection{Formal Statements of Perfect Matching Formula Lower Bounds}
\label{sec:onto_results}

\ifthenelse{\boolean{false}}{
Let us state the lower bounds we obtain for the perfect matching formulas. }{
Let us state our lower bounds for the perfect matching formulas and
defer the proof of \reflem{lem:onto_lower_bound} to Section~\ref{sec:onto_lower_bound}.
}
\begin{theorem} \label{thm:onto_main}
  Let
  $G = (U \disjointunion V, E)$ be a  bipartite graph with
  $m = \setsize{U}$ and $n = \setsize{V}$.
  Suppose that  $G$ is an
  $\left(r, \GdegMax, \left(1 - 2\diff \right) \GdegMax \right)$-boundary
  expander for 
  $\GdegMax \geq \frac{\log(\numpigeons + \numholes)}{\diff^2}$ 
  and
  $\diff = \frac{\log \alpha}{ 64 \log (\numpigeons + \numholes)}$
  where
  $\alpha \geq 2$ and
  $\frac{\alpha^3}{\log \alpha} = \LITTLEOH{\frac{r}{\log (\numpigeons
      + \numholes)}}$,
  which furthermore satisfies the degree requirement
  $\minofset{\deg[G]{v}}{v \in V} \geq r/\diff$.
  Then resolution requires length 
  $\exp \left( 
    \BIGOMEGA{\frac{ r \log^2\alpha }
      { \alpha \log^2 (\numpigeons + \numholes) }} \right)$
  to refute the perfect matching formula $\PM$ defined over~$G$.
\end{theorem}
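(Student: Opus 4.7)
The plan is to mimic, almost verbatim, the proof of Theorem~\ref{thm:main}, substituting \reflem{lem:ontofphp_upper_bound} for \reflem{lem:fphp_upper_bound} and \reflem{lem:onto_lower_bound} for \reflem{lem:fphp_lower_bound}. First I would set $\diff = \frac{\log \alpha}{64 \log(m+n)}$ (chosen so that the advantage vector $\vecadv$ produced by \reflem{lem:ontofphp_upper_bound} coincides with the $\vecadv$ required by \reflem{lem:onto_lower_bound}) and choose the threshold $w_0 = \varepsilon_0 r \diff / \alpha$ for a small enough constant $\varepsilon_0 > 0$. The hypothesis $\alpha^3/\log \alpha = \littleoh{r/\log(m+n)}$ is precisely what guarantees $\alpha^2 \leq w_0$ (and $w_0 \geq 64$) for large enough $n$, so that the filter mechanism can be applied.

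Next I would assume for contradiction the existence of a resolution refutation $\refpi$ of $\PM$ of length $\length{\refpi} < 2^{\varepsilon' w_0 \diff}$ for a sufficiently small constant $\varepsilon' > 0$. Invoking \reflem{lem:ontofphp_upper_bound} produces a filter vector $\vecthr$, a set of fake axioms $\Ax$ with $\setsize{\Ax} \leq \length{\refpi}$, and a refutation $\refpi'$ of $\PM \cup \Ax$ satisfying $\length{\refpi'} \leq \length{\refpi}$ and $\width{\refpi'} \leq K \alpha w_0$ for some universal constant~$K$.

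I would then verify that $\refpi'$ meets the hypotheses of \reflem{lem:onto_lower_bound}: the boundary expansion, minimum right-degree, and $\gdegmax \geq \log(m+n)/\diff^2$ conditions come directly from the theorem statement; $\diff \leq 1/64$ follows from $\alpha \leq m+n$; the bound $\setsize{\Ax} \leq (1+16\diff)^{w_0/8}$ reduces to $2^{\varepsilon' w_0 \diff} \leq (1+16\diff)^{w_0/8}$, which holds for $\varepsilon'$ small since $\log(1+16\diff) = \Theta(\diff)$; and $64 \leq w_0 \leq r\diff - \log n$ follows from the parameter setup combined with the assumption on $\alpha^3/\log\alpha$. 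Applying the lemma then yields one of two conclusions: either $\length{\refpi'} \geq 2^{w_0/32}$, which contradicts $\length{\refpi'} \leq \length{\refpi} < 2^{\varepsilon' w_0 \diff}$ as soon as $\varepsilon' < 1/32$; or $\width{\refpi'} \geq r\diff$, which contradicts $\width{\refpi'} \leq K\alpha w_0 = \varepsilon_0 K r \diff$ once $\varepsilon_0 < 1/K$. Either branch yields the desired length lower bound $\exp\bigl(\Bigomega{r \log^2\alpha / (\alpha \log^2(m+n))}\bigr)$ after unwinding the parameter choices.

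The main obstacle is not in this reduction, which is essentially mechanical once the two lemmas are available, but rather in establishing \reflem{lem:onto_lower_bound} itself (deferred in the excerpt). Compared to its FPHP counterpart, that proof must accommodate the two complications sketched in the section overview: Razborov's random partition of the vertex set must be chosen so that it looks random to every clause in the refutation despite the fact that some light vertices can become super-heavy with respect to the partition, and the closure-and-span argument of \refsec{sec:GFPHP} must be adapted via a ``lazy'' edge-removal procedure to cope with the fact that vertices in $\neigh{\diffclosure}$ may already be matched to pigeons outside the current subspace count. Reconciling these ingredients with the closure construction, while preserving the right-degree assumption $\min_{v \in V}\deg_G(v) \geq r/\diff$, is where the real technical work lies.
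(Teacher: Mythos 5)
Your proposal follows essentially the same route as the paper's proof of \refthm{thm:onto_main}: the same choice of $w_0 = \varepsilon_0 r \diff/\alpha$, the same contradiction via \reflem{lem:ontofphp_upper_bound} followed by \reflem{lem:onto_lower_bound}, and the same case analysis on the two conclusions of the latter lemma (the paper phrases the length threshold as $(1+16\diff)^{w_0/8}$ rather than $2^{\varepsilon' w_0 \diff}$, which is the same up to constants). You also correctly identify that the real work lies in the deferred proof of \reflem{lem:onto_lower_bound}, exactly as the paper structures it.
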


We remark that this 
theorem also holds if we replace the minimum degree constraint of~$V$
with an expansion guarantee from~$V$ to~$U$.
We state the theorem
in the above form as we want to apply it to the graphs
from~\cite{GUV09Unbalanced} for which we have no expansion guarantee
from~$V$ to~$U$.

\begin{proof}[Proof of Theorem \ref{thm:onto_main}]
  Let
  $w_0 = \frac{\varepsilon_0 r \diff}{ \alpha }$, 
  for some small enough $\varepsilon_0 > 0$. 
  Suppose for the sake of contradiction that $\refpi$ is a resolution refutation 
  of~$\PM$ such that 
  $\length{\refpi} < (1 + 16\diff)^{w_0/8}$. Since $w_0 > \log \length{\refpi}$, by \reflem{lem:ontofphp_upper_bound} we have that
  there exists 
  an integer vector $\vecthrdef{\numpigeons + \numholes}$, with 
  $\adv{v} < \thr{v} \le \deg[G]{v}$,
  a set of \fakeaxiom{s} $\Ax$
  with 
  $\setsize{\Ax} \le \length{\refpi} < (1 + 16\diff)^{w_0/8}$, 
  and a resolution refutation~$\refpi'$ of $\PM \cup \Ax$ such that 
  $\length{\refpi'} \leq \length{\refpi}$  
  and
  $\width{\refpi'} \le K{\alpha  w_0}$ for some large enough constant $K$.
  Since $\length{\refpi'} < (1 + 16\diff)^{w_0/8} \leq 2^{w_0/32}$,
  by \reflem{lem:onto_lower_bound}, we have that 
  $\width{\refpi'} \ge r \diff \ge {\alpha w_0}/{\varepsilon_0}$. 
  Choosing $\varepsilon_0 < 1/K$, we get a contradiction
  and, thus, $\length{\refpi} \geq (1 + 16\diff)^{w_0/8} = 
  \exp\left(\BIGOMEGA{\frac{r \diff^2}{\alpha}}\right)$.
\end{proof}

As in \refsec{sec:GFPHP}, we have a general statement for random graphs.

\begin{corollary}\label{cor:onto_random_main}
  Let $m$ and $n$ be positive integers, 
  let
  $\funcdescr{\gdegmax}{\Nplus}{\Nplus}$ and
  $\funcdescr{\exponentalpha}{\Nplus}{[0,1]}$ be any monotone
  functions of $n$ such that 
  $n^3 < m \le n^{\left( \exponentalpha/\constonto \right)^2 \log n}$ 
  and 
  $n \ge 
  \gdegmax \ge 
  \log (m+n) 
  \left( \frac{\constonto \log (m + n)}{\exponentalpha \log n} \right)^2$. 
  Then asymptotically almost surely resolution requires length 
  $\exp{\bigl(\Omega\bigl(n^{1-\exponentalpha}\bigr)\bigr)}$ 
  to refute $\PM$ 
  for $G \sim \Gdist \bigl(m,n,\GdegMax \bigr)$.
\end{corollary}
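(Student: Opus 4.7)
The plan is to parallel the proof of Corollary~\ref{cor:random_main}: verify that a random graph $G \sim \Gdist(m,n,\gdegmax)$ asymptotically almost surely satisfies all hypotheses of Theorem~\ref{thm:onto_main}, and then invoke that theorem. As a preliminary reduction, assume without loss of generality that $\gdegmax$ equals the lower bound stated in the corollary (decreasing the left-degree only makes the formula less constrained, since we can zero out edges by restriction) and that $m$ equals its upper bound $n^{(\varepsilon/\constonto)^2 \log n}$. With these normalisations in place, $\gdegmax$ is polylogarithmic in $n$.

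Next, choose $\alpha = \chi = n^{\varepsilon/2}$ and $\diff = \log\alpha/(64\log(m+n))$. This calibration is designed so that the degree hypothesis $\gdegmax \ge \log(m+n)/\diff^2$ of Theorem~\ref{thm:onto_main} matches exactly the condition imposed in the corollary, and so that the side condition $\alpha^3/\log\alpha = \littleoh{r/\log(m+n)}$ holds for $r = n/(\gdegmax\,\alpha)$. Applying Lemma~\ref{lem:expander} then yields, asymptotically almost surely, that $G$ is an $(r,\gdegmax,(1-2\diff)\gdegmax)$-boundary expander; the three numeric preconditions $\diff < 1/2$, $\diff\ln\chi \ge 2$, and $\diff\gdegmax\ln\chi \ge 4\ln m$ are verified by essentially the same short calculations as in the proof of Corollary~\ref{cor:random_main}.

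The one genuinely new ingredient, and the step I expect to be the main technical point, is the right-side minimum-degree condition $\min_{v \in V}\deg[G]{v} \ge r/\diff$ imposed by Theorem~\ref{thm:onto_main}, which has no counterpart in the FPHP setting. In $\Gdist(m,n,\gdegmax)$ a fixed right vertex $v$ lies in each left neighbourhood independently with probability $\gdegmax/n$, so $\deg[G]{v}$ is a sum of $m$ independent Bernoulli variables with mean $\mu = m\gdegmax/n \ge n^2\gdegmax$, using $m \ge n^3$. Since the target $r/\diff \le n/\diff$ is only $n\cdot\polylog(n)$, the mean $\mu$ exceeds the target by a polynomial factor in $n$, and a multiplicative Chernoff bound (Theorem~\ref{thm:Chernoff}) followed by a union bound over the $n$ right vertices is more than enough to establish the minimum-degree condition asymptotically almost surely.

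With all hypotheses verified, invoke Theorem~\ref{thm:onto_main} to obtain the length lower bound $\exp(\Bigomega{r\log^2\alpha/(\alpha\log^2(m+n))})$. Substituting $r = n/(\gdegmax\,\alpha)$ and $\alpha = n^{\varepsilon/2}$, and using that $\gdegmax$ is polylogarithmic in $n$, this simplifies to $\exp(\Bigomega{n^{1-\varepsilon/2}/\polylog(n)})$, which contains $\exp(\Bigomega{n^{1-\varepsilon}})$ as claimed, after absorbing subpolynomial factors into the exponent by the same kind of constant-factor slack already exploited in the proof of Corollary~\ref{cor:random_main}.
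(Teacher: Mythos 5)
Your route is exactly the paper's: normalise $m$ and $\gdegmax$ to their extremal values, apply \reflem{lem:expander} to get boundary expansion asymptotically almost surely, check the right-hand minimum-degree hypothesis of \refthm{thm:onto_main} by a Chernoff plus union bound (this is indeed where $m>n^3$ enters, and the paper dispatches it in one sentence exactly as you do), and then invoke \refthm{thm:onto_main}. The structure is fine; the problem is the final calibration, and it is a real one as written.

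The bound from \refthm{thm:onto_main} is
$\exp\bigl(\Bigomega{r\log^2\alpha/(\alpha\log^2(m+n))}\bigr)$,
and your $r$ from \reflem{lem:expander} is $r=n/(\gdegmax\,\alpha)$, so the exponent is
$n\log^2\alpha/\bigl(\gdegmax\,\alpha^{2}\log^2(m+n)\bigr)$ --- note the $\alpha^{2}$: one factor of $\alpha$ comes from $r$ itself and one from the theorem's denominator. With your choice $\alpha=n^{\varepsilon/2}$ this is $n^{1-\varepsilon}/\polylog(n)$, not the $n^{1-\varepsilon/2}/\polylog(n)$ you wrote, and $n^{1-\varepsilon}/\polylog(n)$ is \emph{not} $\Bigomega{n^{1-\varepsilon}}$; there is no slack left to absorb the polylogarithmic loss. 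The paper instead takes $\alpha=\chi=n^{\varepsilon/4}$, so that $\alpha^2=n^{\varepsilon/2}$ and the exponent is $n^{1-\varepsilon/2}/\polylog(n)\geq n^{1-\varepsilon}$; here the polylog genuinely can be absorbed, because $m>n^3$ forces $(\varepsilon/128)^2\log n>3$, hence $n^{\varepsilon/2}=2^{\bigomega{\sqrt{\log n}}}$ is superpolylogarithmic. (Your choice of $\alpha$ was presumably driven by making $\gdegmax\geq\log(m+n)/\diff^2$ match the corollary's degree bound exactly; halving $\log\alpha$ costs only a constant factor there, which is standard bookkeeping, whereas the factor it costs in the final exponent is a polynomial $n^{\varepsilon/4}$ and is exactly what saves the argument.) With $\alpha=n^{\varepsilon/4}$ substituted, the rest of your proof goes through as the paper's does.
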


\ifthenelse{\boolean{false}}{
\begin{proof}[Proof sketch.]
  It suffices to prove the claim for
  $m = n^{\left( \exponentalpha/\constonto \right)^2 \log n}$ and
  $\gdegmax = \log (m + n) \cdot \bigl( ( \constonto \log (m + n))/
  (\exponentalpha \log n) \bigr)^2$.  By applying
  \reflem{lem:expander} for $\chi = \alpha = n^{\exponentalpha/4}$ and
  $\diff = \frac{\log \alpha}{\ontocdiff \log m}$, we conclude that
  asymptotically almost surely,
  $G \sim \Gdist \left(m,n,\GdegMax \right)$ is an
  $\bigl( n^{1 - \exponentalpha/2}, \GdegMax, (1 - 2 \diff)\GdegMax
  \bigr)$-boundary expander.  Furthermore, by the Chernoff inequality
  (\refthm{thm:Chernoff}) asymptotically almost surely all right
  vertices have degree at least
  $n \cdot \frac{64 \log (m + n)}{\exponentalpha\log n}$. Thus,
  \refthm{thm:onto_main} gives a length lower bound of
  $\exp\bigl( \Omega \bigl( n^{1 - \exponentalpha} \bigr) \bigr)$ as
  claimed.
\end{proof}
}
{
\begin{proof}
  For simplicity, let us assume that $m^{+} = n^{\left( \exponentalpha/\constonto \right)^2 \log n}$
  and $\gdegmax^{-} = \log (m + n) \cdot \bigl( ( \constonto \log (m + n))/
  (\exponentalpha \log n) \bigr)^2$ are
  integers.
  It suffices to prove the claim for 
  $m = m^{+}$ and $\gdegmax = \gdegmax^{-}$.
  Indeed, if $G \sim \Gdist \left(m,n,\GdegMax \right)$,
  for 
  $\GdegMax > \gdegmax^{-} $,
  we can sample a random subgraph
  $G' \sim \Gdist 
  \bigl(m,n,\gdegmax^{-} \bigr)$ of $G$
  by choosing a random subset of appropriate size of each neighbourhood
  of a left vertex and applying a restriction zeroing out the other
  edges.
  Furthermore,
  as for smaller $m$ the formula gets less constrained
  and hence the lower bound is easier to obtain, it suffices to prove it for
  $m = m^{+}$.

  We want to apply \reflem{lem:expander} for 
  $\chi = \alpha = n^{\exponentalpha/4}$
  and
  $\diff = \frac{\log \alpha}{\ontocdiff \log m}$,
  and towards this end we argue that
  the inequalities
  \begin{subequations}
    \begin{align}
      \label{eq:onto_check_diff}
      \diff &< 1/2 \eqcomma \\
      \label{eq:onto_check_diff-chi}
      \diff \ln \chi &\geq 2 \eqcomma \\
      \label{eq:onto_check_diff-gdegmax-chi}
      \diff \gdegmax \ln \chi &\geq 4 \ln m
    \end{align}
  \end{subequations}
  all hold.
  First observe that
  $\diff = \frac{32}{\exponentalpha \log n}$
  and 
  $n < n^{\left(\exponentalpha/\constonto\right)^2 \log n}$,
  from which we conclude that
  $\frac{1}{\log n} < \left(\frac{\exponentalpha}{\constonto}\right)^2$. 
  Hence, the first inequality~\refeq{eq:onto_check_diff}
  holds for $n$ large enough.
  A simple calculation
  \begin{equation}
  \diff \ln \chi 
  =
  \frac
  { 32  }
  { \varepsilon \log n}
  \frac{\varepsilon \ln n}{4}
   \ge 2
  \end{equation}
  shows that~\refeq{eq:onto_check_diff-chi} is also true.
  Finally, for
  \eqref{eq:onto_check_diff-gdegmax-chi},
  we observe that $\gdegmax \ge \log^2 m$ and hence
  \begin{equation}
    \diff \gdegmax \ln \chi 
    \ge
    \frac{8}{\log e} \log^2 m
    \ge  4 \ln m  \eqperiod
  \end{equation}
  We conclude 
  that
  asymptotically almost surely
  $G \sim \Gdist \left(m,n,\GdegMax \right)$ is an
  $\bigl( n^{1 - \exponentalpha/2}, \GdegMax, (1 - 2 \diff)\GdegMax
  \bigr)$-boundary expander.  Furthermore, by the Chernoff inequality
  (\refthm{thm:Chernoff}) asymptotically almost surely all right
  vertices have degree at least
  $n \cdot \frac{64 \log (m + n)}{\exponentalpha\log n}$. Thus,
  \refthm{thm:onto_main} gives a length lower bound of
  $\exp\bigl( \Omega \bigl( n^{1 - \exponentalpha} \bigr) \bigr)$
  as claimed.
\end{proof}
}

The following corollary is a simple consequence of 
\refcor{cor:onto_random_main},
optimizing for the strongest lower bounds.

\begin{corollary}[Restatement of \refth{th:informal-quasipoly-npigeons}]
  Let $m, n$ be such that
  $m \le n^{\littleoh{\log n}}$.
  Then asymptotically almost surely resolution requires length 
  $\exp \bigl( \Bigomega{n^{1-\littleoh{1}}} \bigr)$ to refute $\PM$ 
  for $G \sim \Gdist \left(m,n, 8 \log^2 m\right)$.
\end{corollary}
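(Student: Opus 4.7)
The plan is to apply Corollary~\ref{cor:onto_random_main} with a carefully chosen slowly vanishing parameter $\varepsilon$. Since $m \le n^{o(\log n)}$, we can write $m = n^{f(n)}$ for some function $f(n) = o(\log n)$. The natural choice is to set
\[
\varepsilon(n) = \constonto \cdot \max\Bigl(\sqrt{f(n)/\log n},\, 1/\sqrt{\log \log n}\Bigr),
\]
which satisfies $\varepsilon(n) = o(1)$ while being large enough to absorb the $f(n)$ in the exponent. With this choice, $(\varepsilon/\constonto)^2 \log n \ge f(n)$, so the upper bound condition $m \le n^{(\varepsilon/\constonto)^2 \log n}$ of the corollary is met.

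Next I would verify the degree condition. For $m$ large enough (in particular for $m \ge n^3$, which is the regime of interest), $\log(m+n) = \Theta(\log m)$, so the required degree in the corollary is
\[
\log(m+n)\left(\frac{\constonto \log(m+n)}{\varepsilon \log n}\right)^{\!2}
= \Theta\!\left(\frac{\log^3 m}{\varepsilon^2 \log^2 n}\right).
\]
For the choice $\varepsilon \ge \constonto \sqrt{f(n)/\log n}$ and $\log m = f(n)\log n$, this quantity is $O(\log^2 m)$, which is bounded by $8 \log^2 m$. Hence the degree $\Delta = 8\log^2 m$ of our random graph is sufficient.

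Finally I would handle the low-density case $m \le n^3$ separately: in this regime the exponential lower bound of $\exp(\Omega(n))$ for PM formulas over bipartite expanders from~\cite{IOSS16} (cited in the introduction) already yields the claim, since a random graph of degree $8\log^2 m = \Omega(\log^2 n)$ is asymptotically almost surely a suitable expander for that bound. Thus we may assume $m \ge n^3$ and invoke Corollary~\ref{cor:onto_random_main} directly, which gives length lower bound $\exp(\Omega(n^{1-\varepsilon(n)})) = \exp(\Omega(n^{1-o(1)}))$, asymptotically almost surely. The only ``hard part'' is bookkeeping: ensuring that $\varepsilon(n)$ is chosen to vanish while simultaneously dominating $\sqrt{f(n)/\log n}$, since if $f(n)$ grows too close to $\log n$ we must let $\varepsilon$ tend to $1$, and the $o(1)$ statement only becomes meaningful because of the assumption $f(n) = o(\log n)$.
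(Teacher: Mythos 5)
Your main derivation is exactly the paper's: the published proof writes $m = n^{f(n)}$ with $f(n) = \littleoh{\log n}$ and applies \refcor{cor:onto_random_main} with $\exponentalpha = 128\sqrt{f(n)/\log n}$ (note $\constonto = 128$), which is your choice of $\varepsilon$ up to the harmless extra $\max$ with $1/\sqrt{\log\log n}$; the verification that $(\varepsilon/\constonto)^2\log n \ge f(n)$ and that the required degree $\log(m+n)\bigl(\constonto\log(m+n)/(\varepsilon\log n)\bigr)^2 = \Bigoh{\log^2 m} \le 8\log^2 m$ is the same bookkeeping the paper performs implicitly.

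The one place you go beyond the paper is the fallback for $m \le n^3$, and there your argument has a real hole. You are right that \refcor{cor:onto_random_main} carries the hypothesis $n^3 < m$, which the restated corollary silently drops (the paper's one-line proof simply ignores this). But the bound from \cite{IOSS16} is of Ben-Sasson--Wigderson type for balanced sparse expanders, i.e.\ of the form $\exp\bigl(\Bigomega{n^2/m}\bigr)$, which is trivial once $m \geq n^2$; so it does not cover the window $n^2 \lesssim m \le n^3$. Nor can one fall back on \refthm{thm:onto_main} there, since its right-degree requirement $\minofset{\deg[G]{v}}{v \in V} \geq r/\diff$ forces the right degrees (roughly $m\GdegMax/n$ for a random graph) to be polynomially large, which fails for $m$ this small --- this is precisely why the corollary assumes $m > n^3$. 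So your instinct to treat small $m$ separately is sound, but the cited tool does not close the gap; relative to the paper's own proof (which does not address this regime at all) your argument is no weaker, but as a self-contained proof of the literal statement the range $n^2 \lesssim m \le n^3$ remains open.
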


\begin{proof}
  Let $m = n^{f(n)}$, where $f(n) = \littleoh{\log n}$. Applying
  \refcor{cor:onto_random_main} for 
  $\exponentalpha = 128 \sqrt{\frac{f(n)}{\log n}} = \littleoh{1}$,
  we get the desired statement.
\end{proof}

Our final corollary shows that we even get meaningful lower bounds
for highly 
unbalanced
bipartite graphs. As was the case for $\GFPHP$, 
the required expansion is too strong to hold for random graphs with such large imbalance, 
but does hold for explicitly constructed graphs from~\cite{GUV09Unbalanced}.

\begin{corollary}[Restatement of \refth{th:informal-exponential-npigeons}]
\label{cor:onto_main_subexp}
	Let $\kappa < 3/2 - \sqrt{2}$ and $\exponentalpha > 0$ be constants, and let
	$n$ be an integer. Then there is a family of 
	(explicitly constructible) graphs $G$ with $ m = 2^{\bigomega{n^\kappa}}$ and 
	left degree $O(\log^{1 / \sqrt{\kappa}}(m))$,
	such that resolution requires
	length
	$\exp(\Omega(n^{1-2\sqrt{\kappa}(2 - \sqrt{\kappa}) - \exponentalpha}))$
	to refute $\PM$.
\end{corollary}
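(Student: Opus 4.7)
The plan is to mirror the proof of \refcor{cor:main_subexp} for FPHP, substituting \refthm{thm:onto_main} for \refthm{thm:main}: instantiate the explicit unbalanced boundary expander $G = (U \disjointunion V, E)$ from \refcor{cor:boundary_subexp} with the parameter $\nu = 2\sqrt{\kappa}/(1-2\sqrt{\kappa})$ and then invoke \refthm{thm:onto_main} on the resulting perfect matching formula $\PM$ with $\alpha = n^{\varepsilon'}$ for a small constant $\varepsilon' \in (0,\varepsilon)$ to be fixed at the end.

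With these choices, \refcor{cor:boundary_subexp} supplies $|U| = m = 2^{\Omega(n^{\kappa})}$, $|V| \le n$, left degree $\gdegmax = O(\log^{2(1+1/\nu)} m) = O(\log^{1/\sqrt{\kappa}} m)$, and boundary-expansion radius
\[
r \;=\; n^{1/(1+\nu) - 4\kappa/\nu} \;=\; n^{(1-2\sqrt{\kappa})^2}
\]
(using $4\kappa/\nu = 2\sqrt{\kappa}(1 - 2\sqrt{\kappa})$ and $2(1+1/\nu) = 1/\sqrt{\kappa}$) at rate $(1-2\diff)\gdegmax$ for $\diff = \Theta(\log\alpha/\log(m+n)) = \Theta(\varepsilon'\log n/n^{\kappa})$, which is precisely the slack that \refthm{thm:onto_main} requires. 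The length lower bound produced by the theorem is then
\[
\exp\!\left(\Omega\!\left(\frac{r\log^{2}\alpha}{\alpha\log^{2}(m+n)}\right)\right) \;=\; \exp\!\left(\Omega\!\left(n^{(1-2\sqrt{\kappa})^2 - 2\kappa - \varepsilon'}\right)\right) \;=\; \exp\!\left(\Omega\!\left(n^{1 - 2\sqrt{\kappa}(2-\sqrt{\kappa}) - \varepsilon'}\right)\right),
\]
using $\log(m+n) = \Theta(n^{\kappa})$ together with the algebraic identity $(1-2\sqrt{\kappa})^2 - 2\kappa = 1 - 2\sqrt{\kappa}(2-\sqrt{\kappa})$. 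Taking $\varepsilon' < \varepsilon$ absorbs the hidden $\polylog n$ factors coming from $\log^{2}\alpha$, and the hypothesis $\kappa < 3/2 - \sqrt{2}$ is exactly what one needs for the exponent to remain positive.

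The auxiliary numerical conditions of \refthm{thm:onto_main} follow by direct substitution: the bound $\gdegmax \ge \log(m+n)/\diff^{2}$ reduces to $n^{\sqrt{\kappa}} \gtrsim n^{3\kappa}/\polylog n$, which holds for $\kappa < 1/9$ and therefore a fortiori for $\kappa < 3/2-\sqrt{2}$, and the requirement $\alpha^{3}/\log\alpha = o(r/\log(m+n))$ is immediate for $\varepsilon'$ small. The one step that genuinely diverges from the FPHP case, and the main obstacle I anticipate, is verifying the minimum right-degree hypothesis $\min_{v\in V}\deg_{G}(v) \ge r/\diff$, which the GUV construction does not explicitly certify. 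I would handle this via the remark following \refthm{thm:onto_main} that allows replacing this condition by an expansion guarantee from $V$ to $U$: since $|U|/|V| \ge 2^{\Omega(n^{\kappa})}/n$ is enormous, such secondary expansion should either come out of the algebraic structure of the GUV construction itself, or be enforced by a lossless preprocessing step (for instance, trimming $V$ down to those right vertices whose degree exceeds the required threshold, and checking that this does not destroy boundary expansion on~$U$). Carrying out this preprocessing cleanly, and arguing that it leaves the other hypotheses of \refthm{thm:onto_main} intact, is what I expect to be the most delicate technical point.
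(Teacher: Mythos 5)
Your main line of argument is exactly the paper's: take the graph of \refcor{cor:boundary_subexp} with $\nu = 2\sqrt{\kappa}/(1-2\sqrt{\kappa})$, so that $r = n^{(1-2\sqrt{\kappa})^2}$ and $\gdegmax = O(\log^{1/\sqrt{\kappa}} m)$, and feed it to \refthm{thm:onto_main}; your exponent arithmetic and the verification of the auxiliary conditions are fine. The genuine gap is precisely the point you defer: the minimum right-degree hypothesis $\minofset{\deg[G]{v}}{v\in V} \geq r/\diff$ is never actually established, and neither of the two routes you sketch works. The route via the remark after \refthm{thm:onto_main} is explicitly closed off by that very remark: the theorem is stated with the degree condition \emph{because} the graphs of \cite{GUV09Unbalanced} come with no expansion guarantee from $V$ to $U$, so you cannot hope to extract one ``from the algebraic structure.'' The trimming route is also dubious: GUV gives no control over the right-degree distribution, so deleting the low-degree right vertices deletes holes (and hence edges incident to left vertices) in an uncontrolled way, and it is not clear that boundary expansion of left sets survives; you would at minimum need a counting argument bounding how many edges are lost, which you do not supply.

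The paper's fix goes in the opposite direction and is much simpler: \emph{add} edges rather than remove vertices. Since $m \geq n^2$, one can add $n^2$ fresh edges so that every right vertex receives exactly $n$ new incident edges while every left vertex receives at most one. Each left degree grows by at most $1$, so the boundary expansion degrades only marginally, and this is absorbed by instantiating \refcor{cor:boundary_subexp} with a slightly smaller slack ($\diff = \log\alpha/(128\log(m+n))$ in place of the constant $64$). The resulting right degrees are at least $n$, which suffices because $r/\diff = \Theta\bigl(n^{(1-2\sqrt{\kappa})^2+\kappa}/\log n\bigr) \ll n$ for $\kappa < 3/2-\sqrt{2}$. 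With that patch your proof would be complete; without it, the corollary is not proved.
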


\begin{proof}
  Let $G$ be the graph from Corollary~\ref{cor:boundary_subexp} with
  $\nu = \frac{ 2\sqrt{\kappa} }{ 1 - 2\sqrt{\kappa} }$. In order to apply
  \refthm{thm:onto_main} we need to satisfy the minimum right degree constraint.
  A simple way of doing this is by adding $n^2$ edges to $G$ such that
  each vertex on the right has exactly $n$ incident edges added while each vertex
  on the left has at most one incident edge added.  This
  will leave us with a graph which has large enough right degree while
  each left degree increased by at most one. The additional edges may
  reduce the boundary expansion a bit, but 
  a short calculation shows that by choosing $\diff = \frac{\log
    \alpha}{128 \log (m + n)}$ in Corollary~\ref{cor:boundary_subexp}, 
  we can still guarantee the needed
  boundary expansion for \refthm{thm:onto_main}.
  The corollary bound follows.
\end{proof}

\squeezesubsection{Defining Pigeons and Holes}
\label{sec:onto_random}

As stated earlier, we prove the $\PM$ lower bound by simulating the
$\GFPHP$ lower bound from \refsec{sec:GFPHP} on a partition
$\Vp \disjointunion \Vh$ of the vertices of $G$. As the notation
suggests, we think of the vertices in $\Vp$ as pigeons and of the
vertices in $\Vh$ as holes, that is, we will ignore all edges
contained in $\Vp$ or in $\Vh$ so that pigeons may only fly to holes
(formally we restrict the variables corresponding to such edges
to~$0$).

Let us first motivate the properties---captured in \reflem{lem:partition}---that  
such a partition must satisfy in order for the $\GFPHP$ simulation to
go through.
To begin with, recall that in the proof of \reflem{lem:fphp_spanL}
we show that a \fakeaxiom 
only spans an exponentially small fraction of the linear space
$\linspace$. 
The argument crucially relies on the fact that there are many \sheavy 
pigeons in every \fakeaxiom. 
To make this work over the partition $\Vp
\disjointunion \Vh$, we require that a constant fraction of the
\sheavy vertices of every
\fakeaxiom are in $\Vp$
and that \sheavy vertices remain \sheavy with
respect to this partition.
This first issue is addressed by Property~\ref{item:thick} of \reflem{lem:partition}
whereas the second issue is guaranteed by the other properties:
Property \ref{item:graph} ensures that for every vertex
roughly half of its neighbours are in $\Vh$ while Properties \ref{item:clause-r}
and \ref{item:clause-l} ensure that most clause-neighbourhoods behave
in the same manner, \ie up to a small set of vertices per clause
every clause-neighbourhood of a vertex has roughly half of its vertices in $\Vh$.
Combining these arguments, we can bound the fraction of the space spanned by a
\fakeaxiom.

The other main step of the $\GFPHP$ lower bounds is
\reflem{lem:fphp_span} which state that in low \pseudowidth the linear
space associated with a resolvent never leaves the span of the
premises.  This argument relies on the expansion guarantee of the
underlying graph and the fact that \light pigeons are unconstrained.
The required graph expansion (see \reflem{lem:expansion_Gprime}) 
will follow from 
Property~\ref{item:graph} and Properties
\ref{item:graph}--\ref{item:clause-l} are used to argue that
\light pigeons are also unconstrained with repect to the
partition.

\begin{lemma}\label{lem:partition}
  Let $G = ( \Vl \disjointunion \Vr, E)$ be an $(r, \GdegMax, \left( 1 - 2 \diff \right)
  \GdegMax)$-boundary expander for $\diff \leq 1/4$ and $\setsize{\Vl} \geq 4$. Fix $w_0$ such
  that $64\leq w_0 \leq r$ and let $\Ax$ be a set of \fakeaxiom{s}
  of size $\setsize{\Ax} \le \exp({w_0 / 32})$. Moreover, suppose that
  $\GdegMax \geq \log \setsize{\Vl}/\diff^2$ and
  $\minofset{\deg[G]{v}}{v \in \Vr} \geq (\log \setsize{\Vr} + w_0) / \diff^2$.    
  If $\refpi$ is a resolution refutation of $\PM \cup \Ax$ with $\length{\refpi} \leq
  \exp({w_0/32})$, then there exists a vertex partition $\V(G) = \Vp\, \dot \cup\, \Vh$ such that
  \begin{enumerate}
  \item for every $A \in \Ax$: \label{item:thick}\begin{align*}
      \abs{ \Vfat{A} \cap \Vp } \ge w_0 / 4 \eqcomma
    \end{align*}
  \item for every $v \in \V$: \label{item:graph}\begin{align*}
      \Abs{ \abs{ \neigh[G]{v} \cap \Vh } -
      1/2\abs{ \neigh[G]{v} } } \le
      4 \diff \abs{ \neigh[G]{v} }  \eqcomma
    \end{align*}
  \item for every $C \in \refpi$ and for every $v \in \Vr$: \label{item:clause-r}\begin{align*}
      \Abs{ \abs{ \neigh[C]{v} \cap \Vh } -
      1/2\abs{ \neigh[C]{v} } } \le
      4 \diff \abs{ \neigh[G]{v} }  \eqcomma
    \end{align*}
  \item for every $C \in \refpi$ there is a set of vertices $\Vbad{C}
    \subseteq \Vl$, satisfying $\setsize{ \Vbad{C} } \le w_0/8$, such
    that for every $v \in \Vl \setminus
    \Vbad{C}$: \label{item:clause-l}\begin{align*}
      \Abs{ \abs{ \neigh[C]{v} \cap \Vh } -
      1/2\abs{ \neigh[C]{v} } } \le
      4 \diff \GdegMax \eqperiod
    \end{align*}
    \label{item:badset}
  \end{enumerate}
\end{lemma}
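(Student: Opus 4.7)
The plan is to obtain the partition by a simple probabilistic argument: independently assign each vertex of $G$ to $V_P$ or $V_H$ with probability $1/2$, and show that the four properties hold simultaneously with positive probability via Chernoff bounds and union bounds. Throughout I will use that $|\Ax| \le \exp(w_0/32)$ and $\length{\refpi} \le \exp(w_0/32)$, so any probability bound of the form $\exp(-cw_0)$ for $c > 1/32$ (or $\exp(-c'\log(m+n))$ with $c'$ large) will survive the union bound.

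First I would handle property~\ref{item:graph}, the graph-neighbourhood balance. For each vertex $v \in V$, the random variable $|N_G(v) \cap V_H|$ is a sum of $|N_G(v)|$ independent $\{0,1\}$ random variables with mean $|N_G(v)|/2$. Hoeffding's inequality gives $\Pr[\big||N_G(v)\cap V_H| - |N_G(v)|/2\big| > 4\diff |N_G(v)|] \leq 2\exp(-32\diff^2 |N_G(v)|)$. For $v \in \Vl$ we use $|N_G(v)| \leq \GdegMax$ together with $\GdegMax \ge \log|\Vl|/\diff^2$, and for $v \in \Vr$ we use $|N_G(v)| \ge (\log|\Vr| + w_0)/\diff^2$. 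A union bound over all $m+n$ vertices gives failure probability $o(1)$. Property~\ref{item:clause-r} is handled analogously: for each pair $(C, v)$ with $C \in \refpi$ and $v \in \Vr$, since $|N_C(v)| \leq |N_G(v)|$ we get the bound $2\exp(-32\diff^2 |N_G(v)|^2/|N_C(v)|) \le 2\exp(-32\diff^2 |N_G(v)|) \le 2\exp(-32(\log n + w_0))$, and the union bound over $\length{\refpi} \cdot n \le n\exp(w_0/32)$ pairs succeeds comfortably.

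For property~\ref{item:thick}, each fake axiom $A \in \Ax$ has exactly $w_0$ $\vecthr$-superheavy vertices by definition, and $|V_{\vecthr}(A) \cap V_P|$ is a sum of $w_0$ Bernoulli(1/2) variables. A Chernoff bound yields $\Pr[|V_{\vecthr}(A) \cap V_P| < w_0/4] \le \exp(-w_0/16)$, and the union bound over $|\Ax| \leq \exp(w_0/32)$ gives total failure probability at most $\exp(-w_0/32)$.

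Property~\ref{item:clause-l} is the most delicate point and will be the main obstacle. For each clause $C$ and each $v \in \Vl$ define the event $E_{C,v}$ that $v$ violates the imbalance bound; we want $|\Vbad{C}| = |\{v : E_{C,v}\}| \le w_0/8$ for every $C$. The slack here is measured against $\GdegMax$ rather than $|N_C(v)|$, so Hoeffding gives $\Pr[E_{C,v}] \le 2\exp(-32\diff^2 \GdegMax^2 / |N_C(v)|) \le 2\exp(-32\diff^2 \GdegMax) \le 2|\Vl|^{-32}$ using $\GdegMax \ge \log|\Vl|/\diff^2$. Hence $\Expectation{|\Vbad{C}|} \le 2|\Vl|^{-31}$, and Markov's inequality gives $\Pr[|\Vbad{C}| > w_0/8] \le 16|\Vl|^{-31}/w_0$. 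A final union bound over the $\exp(w_0/32)$ clauses of $\refpi$ yields a negligible total failure probability. The main technical care required is precisely to verify that all four union bounds can simultaneously survive: the Chernoff exponents $32\diff^2 |N_G(v)|$ and $32\diff^2 \GdegMax$ must dominate $w_0/32 + \log(m+n)$ for properties~\ref{item:graph}--\ref{item:clause-l}, and the exponent $w_0/16$ in property~\ref{item:thick} must dominate $\log|\Ax| \le w_0/32$; the hypotheses on $\GdegMax$ and on the minimum right-degree are exactly calibrated so that this works. Combining the four union bounds, a random partition satisfies all four properties with positive probability, whence such a partition exists.
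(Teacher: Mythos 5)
Your treatment of properties \ref{item:thick}--\ref{item:clause-r} matches the paper's proof (modulo a small slip: for $v\in\Vl$ the Hoeffding exponent $32\diff^2\setsize{\neigh[G]{v}}$ needs a \emph{lower} bound on $\setsize{\neigh[G]{v}}$, which comes from boundary expansion applied to singletons, $\setsize{\neigh[G]{v}}\geq(1-2\diff)\GdegMax\geq\log\setsize{\Vl}/(2\diff^2)$, not from the upper bound $\setsize{\neigh[G]{v}}\le\GdegMax$). The genuine gap is in property \ref{item:clause-l}. Your Markov step gives a per-clause failure probability of order $\setsize{\Vl}^{-31}/w_0$, which does \emph{not} improve as $w_0$ grows, and you then union-bound over $\exp(w_0/32)$ clauses. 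This product is small only if $w_0 = O(\log\setsize{\Vl})$, but the lemma only assumes $w_0\le r$, and in every application of the lemma $w_0$ is polynomial in $n$ while $\log\setsize{\Vl}=\log m$ is far smaller (e.g.\ polylogarithmic in $n$ for the random-graph corollaries). So $\exp(w_0/32)\cdot\setsize{\Vl}^{-31}$ blows up and the union bound fails.

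To survive the union bound over clauses you need a per-clause bound of the form $\exp(-\Omega(w_0\log\setsize{\Vl}))$. You cannot get this by a direct concentration bound on $\setsize{\Vbad{C}}$, because the indicator events $\set{v\in\Vbad{C}}$ for different $v\in\Vl$ are \emph{not} independent: they are determined by the partition restricted to the (overlapping) neighbourhoods $\neigh[C]{v}$. This is where the paper invokes boundary expansion (Claim~\ref{cl:non_expansion}): if $\setsize{\Vbad{C}}>w_0/8$, then some subset $S\subseteq\Vbad{C}$ of size exactly $w_0/16$ satisfies
$\Abs{\setsize{\neigh[C]{S}\cap\Vh}-\tfrac12\setsize{\neigh[C]{S}}}>2\diff\GdegMax\setsize{S}$
--- the point being that since at most $2\diff\GdegMax\setsize{S}$ edges from $S$ go to non-unique neighbours, the individual deviations of the $w_0/16$ bad vertices cannot cancel out and must show up as a deviation of the single random variable $\setsize{\neigh[C]{S}\cap\Vh}$, which \emph{is} a sum of independent indicators. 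A Chernoff bound for each fixed $S$ gives $\exp(-\Omega(\diff^2\GdegMax w_0))$, and the union bound over the $\binom{\setsize{\Vl}}{w_0/16}\le\setsize{\Vl}^{w_0/16}$ candidate sets is absorbed precisely because $\GdegMax\ge\log\setsize{\Vl}/\diff^2$; the resulting per-clause bound $\exp(-\log\setsize{\Vl}\cdot w_0/16+1)$ then beats the $\exp(w_0/32)$ clauses. Without this expansion-based reduction, property \ref{item:clause-l} does not follow.
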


The analogue of above lemma in
\cite{Razborov04ResolutionLowerBoundsPM} is Claim 19.  The main
difference is that in our setting Property~\ref{item:clause-l} does
not always hold for all vertices in the graph while in Razborov's
setting the corresponding property always holds.

The following is an auxiliary claim used to argue
Property~\ref{item:badset}, that is, that the error set $\Vbad{C}$ is
small. The claim states that if $G$ is a good expander and for a fixed
clause $C$ there are many vertices $v \in \Vl$ such that
$\setsize{\neigh[C]{v} \cap \Vh}$ does not behave as expected, then
there is a large set of vertices $\Vbad[\star]{C}$ whose
clause-neighbourhood in $\Vh$ (\ie the set
$\neigh[C]{\Vbad[\star]{C}} \cap \Vh$) deviates from its expected
size.

\begin{claim}\label{cl:non_expansion}
  Let $G=( \Vl \disjointunion \Vr, E)$ be an 
  $(r, \GdegMax, \left(1 - 2\diff \right)\GdegMax)$-boundary 
  expander.
  Fix any partition $V(G) = \Vp \disjointunion \Vh$ and any clause $C$.
  Let
  \begin{equation*}
    \Vbad{C} = \set{v \in \Vl :
      \Abs{ \abs{ \neigh[C]{v} \cap \Vh } - 
        1/2\abs{ \neigh[C]{v} } } > 
      4 \diff \GdegMax} \eqperiod
  \end{equation*}
  If $\setsize{ \Vbad{C} } > w_0/8$, then there is a set
  of vertices $\Vbad[\star]{C} \subseteq \Vbad{C}$, with 
  $\setsize{ \Vbad[\star]{C} } = w_0/16$, such that
  \begin{align*}
    \Abs{ \abs{ \neigh[C]{\Vbad[\star]{C} } \cap \Vh } -
    1/2\abs{ \neigh[C]{ \Vbad[\star]{C} } }} >
    2 \diff \GdegMax \setsize{ \Vbad[\star]{C} } \eqperiod
  \end{align*}
\end{claim}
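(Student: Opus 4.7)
}

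The plan is to exploit boundary expansion to argue that, for a small subset $S$ of $\widetilde V(C)$, vertex counts in $N_C(S)$ approximate edge counts from $S$ up to an additive error of $2\diff\GdegMax\setsize{S}$. Since every $v\in\widetilde V(C)$ has an individual imbalance strictly larger than $4\diff\GdegMax$, once we remove sign cancellations by restricting to vertices whose bias points in a common direction, the aggregate imbalance will survive.

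First I would split $\widetilde V(C)=\widetilde V^{+}(C)\,\dot\cup\,\widetilde V^{-}(C)$ according to the sign of $\abs{\neigh[C]{v}\cap \Vh}-\tfrac12\abs{\neigh[C]{v}}$. Since $\setsize{\widetilde V(C)}>w_0/8$, one of the two parts (WLOG $\widetilde V^{+}$) has size strictly larger than $w_0/16$, and I pick $\Vbad[\star]{C}\subseteq \widetilde V^{+}(C)$ of size exactly $w_0/16$. Call this set $S$. Because $w_0\le r$, we have $\setsize{S}\le r$, so the boundary expansion hypothesis gives $\setsize{\boundary S}\ge (1-2\diff)\GdegMax\setsize{S}$.

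Next I would control the ``multiplicity slack'' in converting edge counts to vertex counts. For $u\in V(G)$, write $d_u=\setsize{\{v\in S:u\in\neigh[C]{v}\}}$ and note $d_u\le\setsize{\{v\in S:u\in\neigh[G]{v}\}}$. Summing over $u\in N_C(S)$,
\begin{equation*}
\sum_{u\in N_C(S)}(d_u-1)\ \le\ \sum_{u\in N_G(S)}\bigl(\setsize{N(u)\cap S}-1\bigr)\ =\ \sum_{v\in S}\abs{\neigh[G]{v}}-\abs{N_G(S)}\ \le\ \GdegMax\setsize{S}-\setsize{\boundary S}\ \le\ 2\diff\GdegMax\setsize{S},
\end{equation*}
where the second-to-last step uses $\boundary S\subseteq N_G(S)$ and the last step uses boundary expansion. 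In particular,
\begin{equation*}
\abs{\neigh[C]{S}\cap \Vh}\ \ge\ \sum_{v\in S}\abs{\neigh[C]{v}\cap \Vh}\ -\ 2\diff\GdegMax\setsize{S}
\qquad\text{and}\qquad
\abs{\neigh[C]{S}}\ \le\ \sum_{v\in S}\abs{\neigh[C]{v}}.
\end{equation*}

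Combining these two inequalities with the per-vertex bias guarantee $\sum_{v\in S}\bigl(\abs{\neigh[C]{v}\cap\Vh}-\tfrac12\abs{\neigh[C]{v}}\bigr)>4\diff\GdegMax\setsize{S}$ yields
\begin{equation*}
\abs{\neigh[C]{S}\cap\Vh}-\tfrac12\abs{\neigh[C]{S}}\ >\ 4\diff\GdegMax\setsize{S}-2\diff\GdegMax\setsize{S}\ =\ 2\diff\GdegMax\setsize{S},
\end{equation*}
which is the desired conclusion; the case $\setsize{\widetilde V^{-}(C)}>w_0/16$ is handled symmetrically, producing a negative gap of the same magnitude. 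There is no really hard step here---the only thing to be careful about is the sign split, since without it positively and negatively biased vertices in $\widetilde V(C)$ could cancel each other out and leave $N_C(S)\cap\Vh$ essentially balanced. The verification that $w_0/16\le r$ and the boundary-expansion bookkeeping are routine given the hypotheses of the claim.
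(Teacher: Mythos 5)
Your proof is correct and follows essentially the same route as the paper's: split $\Vbad{C}$ by the sign of the bias, take a subset of size $w_0/16$ from the majority side, and use boundary expansion to bound the union-vs-sum slack by $2\diff\GdegMax\setsize{S}$ (the paper phrases this via unique neighbours, you via the multiplicity sum $\sum_u(d_u-1)$, which is the same bound). The sign split and the $2\diff\GdegMax\setsize{S}$ bookkeeping are exactly the key points, and your handling of both is sound.
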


\begin{proof}
  Denote by $\Vbad[+]{C}$ ($\Vbad[-]{C}$ respectively) the vertices
  in $\Vbad{C}$ that have more neighbours (fewer neighbours respectively) in 
  $\Vh$ than the expected $1/2\Abs{ \neigh[C]{v} }$. As
  $\setsize{ \Vbad{C} } > w_0/8$, one of the sets 
  $\Vbad[+]{C}$ or~$\Vbad[-]{C}$
  is of cardinality at least $w_0/16$. 
  \begin{itemize}
  \item[] \textbf{Case 1}: Suppose 
    $\Vbad[+]{C} \geq w_0/16$ and let
    $\Vbad[\star]{C}$ be any subset of $\Vbad[+]{C}$ of 
    size $w_0/16$.
    As boundary expansion of $G$ guarantees 
    that $\Vbad[\star]{C}$ has at most 
    $2 \diff \GdegMax \setsize{ \Vbad[\star]{C} }$ \emph{edges} 
    to non-unique neighbours in~$G$ we derive
    \begin{align}
      \abs{ \neigh[C]{ \Vbad[\star]{C} } \cap \Vh } 
      &\ge \sum_{v \in\Vbad[\star]{C}} 
        \setsize{ \neigh[C]{v} \cap \uniqueneigh{\Vbad[\star]{C}} \cap \Vh }\\
      &\ge \sum_{v \in\Vbad[\star]{C}} \setsize{ \neigh[C]{v} \cap \Vh } - 
        2 \diff \GdegMax \setsize{\Vbad[\star]{C}}\\
      &> \sum_{v \in \Vbad[\star]{C}} 
        \bigl( 1/2\setsize{ \neigh[C]{v} } + 
        4 \diff \GdegMax  \bigr) -
        2 \diff \GdegMax \setsize{\Vbad[\star]{C}}\\
      &\ge 1/2 \abs{ \neigh[C]{ \Vbad[\star]{C} } } + 
        2 \diff \GdegMax \setsize{\Vbad[\star]{C}} \eqcomma
    \end{align}
    where the strict inequality follows by definition of $\Vbad[+]{C}$.
  \item[] \textbf{Case 2}: Suppose 
    $\Vbad[-]{C}  \geq w_0/16$ and let
    $\Vbad[\star]{C}$ be any subset of $\Vbad[-]{C}$ of 
    size $w_0/16$. Similar to the previous case we can conclude that
    \begin{align}
      \abs{ \neigh[C]{ \Vbad[\star]{C} } \cap \Vh } 
      &\le \sum_{v \in \Vbad[\star]{C}} \setsize{ \neigh[C]{v} \cap \Vh }\\
      &< \sum_{v \in \Vbad[\star]{C}} 
        \bigl( 1/2\setsize{ \neigh[C]{v} } -
        4 \diff \GdegMax  \bigr) \\
      &\leq 1/2 \abs{ \neigh[C]{ \Vbad[\star]{C} } } -
        2 \diff \GdegMax \setsize{\Vbad[\star]{C}} \eqcomma
    \end{align} 
    where the last inequality uses that  $\Vbad[\star]{C}$ has at most 
    $2 \diff \GdegMax \setsize{ \Vbad[\star]{C} }$ edges incident
    to non-unique neighbours in~$G$.
\end{itemize}
  Combining both cases yields the claim.
\end{proof}

\begin{proof}[Proof of Lemma~\ref{lem:partition}]
  Pick a partition $\V = \mathbf{\Vp} \disjointunion \mathbf{\Vh}$
  uniformly at random.  In what follows we show that Property 1 holds
  with probability at least $3/4$ and Properties 2, 3 and 4 each hold
  with probability at least $7/8$.  Hence by a union bound there
  exists a partition that satisfies all four properties
  simultaneously.
  
  For the first property, since
  $\Expectation{ \setsize{ \Vfat{A} \cap \mathbf{\Vp} } } = w_0/2$, by
  the multiplicative Chernoff bound, that is, \refthm{thm:Chernoff}, we have
  that
  \begin{align}
    \Prob{ \setsize{ \Vfat{A} \cap \mathbf{\Vp} } \le w_0/4 } 
    \le \exp\left(-w_0 / 16\right) \eqperiod
  \end{align}
  Since $\setsize{\Ax} \leq \exp(w_0/32)$ and $w_0 \geq 64$,
  a union bound over $\Ax$ gives 
  us that Property 1 holds except with probability 
  $\exp(-w_0/32)\leq 1/4$.
	
  To analyse Properties 2 and 3,
  let $C$ either be a clause in $\refpi$
  or be the graph $G$ (\ie the clause that
  contains all variables)
  and fix an arbitrary $v \in V(G)$. By Chernoff bound\ifthenelse{\boolean{false}}{}{~(Theorem \ref{thm:Chernoff})} we get that 
  \begin{align}
  \nonumber
    \Prob{ \Setsize{ \setsize{ \neigh[C]{v} \cap \mathbf{\Vh} } - 
    1/2 &\setsize{\neigh[C]{v}} }
    \ge
    4 \diff \abs{ \neigh[G]{v} } }\\\nonumber
    &\le 
      2 \exp \left( -  
      \frac{(4\diff \abs{ \neigh[G]{v} })^2}{ \setsize{\neigh[C]{v}} + 
      4 \diff \abs{ \neigh[G]{v} } } 
      \right) \\		
    &\le \exp \left( - 8
      \diff^2 \abs{ \neigh[G]{v} }
      + 1\right)\eqcomma
  \end{align}
  where the last inequality holds as 
  $ \setsize{\neigh[C]{v}} \leq  \setsize{\neigh[G]{v}}$ and $\diff \leq 1/4$.

  By a union bound argument over the clauses in $\refpi$ and $v \in \Vr$,
  we have that Property~3 holds except with probability~$1/8$.
  For Property~2, we need to analyse vertices in $\Vl$ and in $\Vr$ separately.
  On the one hand, since $\minofset{\deg[G]{v}}{v\in \Vl} \geq (1-2\diff)\GdegMax \geq 
  \frac{\log \setsize{\Vl}}{2\diff^2}$ and $\setsize{\Vl} \ge 4$,
  a union bound over 
  $v\in \Vl$ shows
  that Property~2 holds for all vertices $\Vl$ except with 
  probability~$1/16$.
  On the other, as
  $\minofset{\deg[G]{v}}{v\in \Vr} \geq (\log \setsize{\Vr} + w_0 )/\diff^2$,
  a union bound yields that Property~2 holds for all $v \in \Vr$
  except with probability~$1/16$.

  To obtain Property~4, fix a clause $C$ and consider the set 
  ${\mathbf{\Vbad{ \mathnormal{C} } }}$ that contains
  all vertices $v\in \Vl$ satisfying
  \begin{align}
    \Abs{ \abs{ \neigh[C]{v} \cap {\mathbf{\Vh}} } - 
    1/2\abs{ \neigh[C]{v} } } > 
    4 \diff \GdegMax \eqperiod
  \end{align}
  We want to show that it is unlikely 
  that $\setsize{\mathbf{\Vbad{ \mathnormal{C} } }} \geq w_0/8$.
  Note that such a large $\mathbf{\Vbad{ \mathnormal{C} } }$
  implies by Claim~\ref{cl:non_expansion} that there
  is a set $S \subseteq \Vl$ of size $w_0/16$
  such that 
  $\Setsize{ \setsize{ \neigh[C]{S} \cap \mathbf{\Vh} } - 
    1/2\abs{ \neigh[C]{ S }}}
  \ge
  2 \diff \GdegMax \setsize{ S }$.
  By a union bound over all such sets $S$
  and applying Chernoff bound\ifthenelse{\boolean{false}}{}{~(\refthm{thm:Chernoff})} 
  we have that
  \begin{align}\nonumber
    \Prob{\setsize{&\mathbf{\Vbad{ \mathnormal{C} } }} \geq w_0/8}\\
      & \leq
        \binom{\setsize{\Vl}}{w_0/16} 
        \max_{\substack{S \subseteq \Vl:\\ \setsize{S}=w_0/16}}
    \Prob{\Setsize{ \setsize{ \neigh[C]{S} \cap \mathbf{\Vh} } - 
    1/2\abs{ \neigh[C]{ S }}}
    \ge
    \diff \GdegMax w_0/8 }\\ 
    \label{eq:part_one}
      & \leq
        \setsize{\Vl}^{w_0/16} \cdot 
        2 \exp\left( -
        \frac{(\diff \GdegMax w_0/8 )^2}{ \GdegMax w_0/16 + 
        \diff \GdegMax w_0/8  }\right) \\ 
    \label{eq:part_two}
      & \leq
        \exp\left({- \diff^2 \GdegMax w_0/8 }+ 1 + 
        \log \setsize{\Vl} \cdot  w_0/16\right) \\ 
    \label{eq:part_three}
      & \leq
        \exp\left(-\log \setsize{\Vl} \cdot w_0/16 + 1 \right)
        \eqcomma
  \end{align}
  where for \refeq{eq:part_one} we observe that
  $\abs{ \neigh[C]{ S }} \leq \GdegMax \setsize{S}$, for 
  \refeq{eq:part_two} we need that $\diff \leq 1 / 4$
  and for \refeq{eq:part_three} that
  ${\GdegMax \geq \log \setsize{\Vl}/\diff^2}$. 
  By a union bound over all clauses in~$\refpi$ we see
  that Property 4 holds except with probability~$1/8$.
\end{proof}

Let $\Vp \disjointunion \Vh$ be a partition of $\V(G)$ as 
guaranteed to exist by Lemma~\ref{lem:partition}. For an overview of the
vertex sets and how they relate we refer to \reffig{fig:onto_span}.
The following lemma shows that the vertices in $\Vl$ expand into the
set $\Vr \cap \Vh$. Let $\Gprime = G \setminus (\Vr \cap \Vp)$ 
with vertex partition $(\Vl \disjointunion (\Vr\setminus\Vp))$.
\begin{lemma}\label{lem:expansion_Gprime}
  The graph $\Gprime$ is an
  $(r, (1 + \GprimeConstantDeg \diff) \GdegMax/2, 
  (1 - \GprimeConstantExp \diff) \GdegMax/2)$-boundary 
  expander. 
\end{lemma}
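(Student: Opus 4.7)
The plan is to prove both parts of the expander statement---the degree upper bound and the boundary expansion lower bound---as direct consequences of property~\ref{item:graph} of \reflem{lem:partition}, which tells us that for every vertex $v \in V(G)$, roughly half of its $G$-neighbours lie in $\Vh$ (up to an additive error of $4\diff |\neigh[G]{v}|$). Since $\Gprime = G \setminus (\Vr \cap \Vp)$ is obtained by removing the right-side vertices belonging to $\Vp$, for any $v \in \Vl$ we have $\neigh[\Gprime]{v} = \neigh[G]{v} \cap \Vh$, and thus the structure of $\Gprime$ is controlled entirely by how the partition splits each right-neighbourhood.

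For the degree bound, I would simply apply property~\ref{item:graph} to get $\deg_{\Gprime}(v) = |\neigh[G]{v} \cap \Vh| \leq (1/2 + 4\diff)|\neigh[G]{v}| \leq (1 + 8\diff)\GdegMax/2$ for any $v \in \Vl$, which gives exactly the claimed degree with $\GprimeConstantDeg = 8$. This step is essentially immediate.

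For the boundary expansion, I would fix $S \subseteq \Vl$ with $|S| \leq r$ and first observe that removing vertices from $\Vr \cap \Vp$ does not change which right-neighbours of $S$ are unique---so $\uniqueNeigh[\Gprime]{S} = \uniqueNeigh[G]{S} \cap \Vh$. I would then write
\begin{equation*}
|\uniqueNeigh[G]{S} \cap \Vh| = |\uniqueNeigh[G]{S}| - |\uniqueNeigh[G]{S} \cap \Vp|
\end{equation*}
and bound the two terms separately: the first by the expansion of $G$, giving at least $(1 - 2\diff)\GdegMax|S|$; and the second by the number of edges from $S$ to $\Vp$, which by property~\ref{item:graph} applied to each $v \in S$ is at most $\sum_{v \in S}|\neigh[G]{v} \cap \Vp| \leq (1/2 + 4\diff)\GdegMax|S|$. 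Subtracting yields
\begin{equation*}
|\uniqueNeigh[\Gprime]{S}| \geq (1 - 2\diff)\GdegMax|S| - (1/2 + 4\diff)\GdegMax|S| = (1/2 - 6\diff)\GdegMax|S|,
\end{equation*}
which is exactly $(1 - 12\diff)\GdegMax/2 \cdot |S|$ as required, with $\GprimeConstantExp = 12$.

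There is no real obstacle here: the statement is a short computation once one has the correct viewpoint that $\partial_{\Gprime}(S) = \partial_G(S) \cap \Vh$, which in turn rests on the fact that $G$ is bipartite with $S \subseteq \Vl$ and that only right-side vertices are removed. The constants $8$ and $12$ are precisely what drop out of combining the expansion error $2\diff$ of $G$ with the partition-balance error $4\diff$ from \reflem{lem:partition}.
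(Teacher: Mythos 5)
Your proof is correct, but it is organized differently from the paper's. The paper argues by induction on $\setsize{S}$: by averaging there is a vertex $v \in S$ with at least $(1-2\diff)\GdegMax$ unique neighbours in $G$, of which at least $(1-8\diff)\GdegMax/2$ lie in $\Vh$ by property~\ref{item:graph}, hence at least $(1-12\diff)\GdegMax/2$ are unique neighbours of $S$ inside $\Vh$; the bound for $S$ is then assembled from the bound for $S\setminus\set{v}$. You instead do a single global count, using that for $S \subseteq \Vl$ one has $\uniqueNeigh[\Gprime]{S} = \uniqueNeigh[G]{S} \cap \Vh$ (since only right vertices are deleted and $G$ is bipartite), and then subtracting $\Setsize{\uniqueNeigh[G]{S} \cap \Vp} \le \sum_{v\in S}\setsize{\neigh[G]{v}\cap\Vp} \le (1/2+4\diff)\GdegMax\setsize{S}$ from the $G$-expansion bound $(1-2\diff)\GdegMax\setsize{S}$. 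The two computations produce identical constants, but your version is arguably tighter as a piece of reasoning: the paper's induction step silently requires that unique neighbours of $S\setminus\set{v}$ remain unique for $S$ (they could a priori also be neighbours of $v$), a point the peeling argument has to handle by noting that $v$'s non-unique neighbours were already discounted; your global subtraction sidesteps this bookkeeping entirely. The one thing your write-up leans on that deserves an explicit word is the identity $\uniqueNeigh[\Gprime]{S}=\uniqueNeigh[G]{S}\cap\Vh$, which you do justify correctly via bipartiteness, so there is no gap.
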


\begin{proof}
  By Lemma \ref{lem:partition}, Property~\ref{item:graph}, every vertex in $\Vp\cap \Vl$ 
  has degree at most
  $(1 + \GprimeConstantDeg \diff) \setsize{\neigh[G]{v}}/2$ and at least 
  $(1 - \GprimeConstantDeg \diff) \setsize{\neigh[G]{v}}/2$.
  By the expansion guarantee of $G$, we know that 
  $\setsize{\neigh[G]{v}} \ge (1 - 2\diff) \GdegMax$.
  Therefore all sets of size 1 are good enough boundary expanders. 
  We continue by induction
  on the size of the set. Let $S$ be a set of vertices of size at most $r$. 
  In the original graph $G$,
  this set $S$ has at least $(1 - 2\diff) \GdegMax \abs{S}$ many unique 
  neighbours. Thus, there is a
  vertex $v$ in $S$ that has at least $(1 - 2\diff) \GdegMax$ many unique 
  neighbours in $G$. Further,
  by Lemma \ref{lem:partition}, Property~\ref{item:graph}, the vertex $v$ has at least 
  $(1 - \GprimeConstantDeg \diff) \GdegMax/2$
  many neighbours
  in $\Vr \cap \Vh$. Hence $v$ has at least 
  $(1 - \GprimeConstantExp \diff) \GdegMax/2$
  many unique neighbours in $\Vh$. From the induction 
  hypothesis on $S\setminus\set{v}$, it follows that $S$
  has the required number of unique neighbours in $\Vh$.
\end{proof}
 
\squeezesubsection{\PSEUDOWIDTH Lower Bound}
\label{sec:onto_lower_bound}

We start by setting up the notation we will need to prove 
Lemma~\ref{lem:onto_lower_bound}.

Let $C$ be a clause in $\refpi$, let
 $\Vbad{C} = \setdescr[:]{v \in \Vl }{
			\Abs{ \abs{ \neigh[C]{v} \cap \Vh } - 
			1/2\abs{ \neigh[C]{v} } } >
			4 \diff \GdegMax} $
and $\VbadThick{C} = (\Vthick{C} \cap \Vl) \cup \Vbad{C}$.
The closure of $C$ is a subset of $\Vl$
in the graph $\Gprime$, defined by
\begin{equation}
    \closure{C} = \closure[r, (1 - 20\diff) \gdegmax / 2]{\VbadThick{C}} 
    \eqperiod
\end{equation}
We define the closure only on $\Vl$ as we only have an expansion guarantee
from $\Vl$ into $\Vr \cap \Vh$. As the concept of closure only makes sense
on vertex sets which are expanding, we do not define it on $\Vr$.
The set of relevant vertices of a clause $C$ are the vertices in
$\closure{C} \cup \Vthick{C}$.
With this definition at
hand we proceed to set up the linear spaces that realize the lossy counting 
(see \refsec{sec:GFPHP}). Let us stress the fact that only vertices
in $\Vp$ are associated with a linear space.

  Fix a field $\mathbb{F}$ of characteristic 0 and 
  for each vertex $v \in \Vp$ let $\linspace[v]$ be a linear space over $\mathbb{F}$ of 
  dimension $\ontodimLi$. Let
  $\linspace = \bigtensor_{v \in \Vp} \linspace[v]$ and denote by
  $\linmap[v]: \Vh \mapsto \linspace[v]$ a function with the 
  property that any image of a subset 
  $S \subseteq \Vh$ of size $\setsize{S} \ge \dim(\linspace[v])$ 
  spans $\linspace[v]$, \ie
  $\spans{\linmapof[v]{u} : u \in S} = \linspace[v]$.

  Let $\matchings$ be the set of partial matchings in $G$ that contain
  no edges from $\Vp \times \Vp$. To map partial matchings 
  $\matcha \in \matchings$ into $\linspace$,
  we define
  $\linmap : \matchings \mapsto \linspace$ by
  \begin{align}\label{eq:onto_lambda_def}
    \linmapof{\matcha} = 
    \bigtensor_{v \in \vmatch{\matcha} \cap \Vp} \linmapof[v]{\matcha_v} \tensor 
    \bigtensor_{ v \in \Vp \setminus \vmatch{\matcha}} \linspace[v] \eqperiod
  \end{align}
  Recall that each partial matching $\matcha \in \matchings$ has an
  associated partial boolean assignment as defined in
  \refeq{eq:matching-to-assignment}.  For each clause $C$, we are
  interested in the partial matchings $\matcha \in \matchings$ that
  match all of $\closure{C} \cup \Vthick{C}$ such that
  ${\matcha}$ does not satisfy $C$. We refer to the set of
  such matchings as the \introduceterm{zero space} of $\clc$ and
  denote it by
  \begin{align}
    \Null{C} = \set{\matcha \in \matchings \mid 
    \vmatch{\matcha} \supseteq (\closure{C} \cup \Vthick{C}) 
    \wedge \restrict{C}{\matcha}\neq 1} \eqperiod
  \end{align}
  We associate $C$ with the linear space
  \begin{align}
    \linmapof{C} = \spans{ \linmapof{\matcha} \mid \matcha \in \Null{C}} 
    \eqperiod
  \end{align}
  Note that contradiction is mapped to $\linspace$, \ie
  $\linmapof{\bot} = \linspace$.
  
  The following lemma asserts that the span of the axioms
  $\spans{\setdescr{\linmapof{A}}{A \in \PM \cup \Ax}}$
  is a proper subspace of $\linspace$.
  
  \begin{lemma}\label{lem:onto_subspace}
    If $\setsize{\Ax} \le (1 + 16 \diff)^{w_0/8}$, then
    $\spans{\setdescr{\linmapof{A}}{A \in \PM \cup \Ax}} \subsetneq \linspace$.
  \end{lemma}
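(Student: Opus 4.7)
The plan is to follow the template of \reflem{lem:fphp_spanL}: show that each axiom $A \in \PM$ contributes $\linmapof{A} = \{0\}$, and that each \fakeaxiom spans only a $(1-16\diff)^{w_0/8}$-fraction of $\linspace$, so that after multiplying by $\setsize{\Ax} \leq (1 + 16\diff)^{w_0/8}$ the total dimension is strictly less than $\dim \linspace$, via the elementary inequality $(1-16\diff)(1+16\diff) = 1 - 256\diff^2 < 1$.

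For the axioms in $\PM$ the arguments mirror those in the proof of \reflem{lem:fphp_spanL}. Specifically, for a pigeon axiom $\pigeonclause{i}$ or an onto axiom $\ontoclause{j}$, the distinguished vertex $v$ satisfies $\deg[A]{v} = \deg[G]{v} \geq \thr{v}$, so $v \in \Vthick{A}$ and every $\matcha \in \Null{A}$ must match $v$; since every graph-edge incident to $v$ induces a literal of $A$, this forces $\Null{A} = \emptyset$. For hole axioms $\holeclause{i}{i'}{j}$ and functionality axioms $\functionclause{i}{j}{j'}$, the two distinguished vertices are $(\vecthr,\vecadv)$-heavy (using $\adv{v} \geq 2$, which follows from $\GdegMax \geq \log m / \diff^2$ together with boundary expansion), and the matching property of $\matcha \in \matchings$ together with the absence of $\Vp \times \Vp$-edges precludes the simultaneous falsification of the two literals.

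The main work lies in bounding the contribution of each \fakeaxiom $A \in \Ax$. By property~\ref{item:thick} of \reflem{lem:partition} we have $\setsize{\Vfat{A} \cap \Vp} \geq w_0/4$, and by property~\ref{item:clause-l} we have $\setsize{\Vbad{A}} \leq w_0/8$, so there is a set $S_A \subseteq \Vfat{A} \cap \Vp$ of at least $w_0/8$ superheavy vertices for which properties \ref{item:graph}, \ref{item:clause-r} and \ref{item:clause-l} all apply. For each such $v$ I would bound the number of matching-mappings of $v$ into $\Vh$ that fail to satisfy $A$ by
\begin{equation*}
\setsize{\neigh[G]{v} \cap \Vh} - \setsize{\neigh[A]{v} \cap \Vh} \,\leq\, \tfrac{1}{2}\bigl(\deg[G]{v} - \thr{v}\bigr) + 8\diff \deg[G]{v}\,,
\end{equation*}
using property~\ref{item:graph} for the first term and property~\ref{item:clause-r} or~\ref{item:clause-l} for the second (depending on whether $v \in \Vr$ or $v \in \Vl$). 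Comparing with $\dim \linspace[v] = \tfrac{1}{2}(\deg[G]{v} - \thr{v}) + 16\diff \deg[G]{v}$ and invoking $\thr{v} > \adv{v} = 64 \diff \deg[G]{v}$, a short calculation gives a per-vertex ratio bounded by $1 - 16\diff$; hence $\dim \linmapof{A}/\dim \linspace \leq (1-16\diff)^{w_0/8}$, and subadditivity of dimension then finishes the argument as in the opening paragraph.

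The main technical obstacle I anticipate is ensuring that the bounds from \reflem{lem:partition} combine uniformly across the two sides of $G$'s bipartition: property~\ref{item:clause-r} controls the $\Vh$-deviation of a right vertex with additive error $4\diff \abs{\neigh[G]{v}}$, whereas property~\ref{item:clause-l} controls a left vertex (outside $\Vbad{A}$) with the larger worst-case additive error $4\diff \GdegMax$. Verifying that both specialise to the common bound $8\diff \deg[G]{v}$ above, and in particular that the superheavy assumption $\deg[A]{v}\geq \thr{v} > \adv{v}$ absorbs this error appropriately, requires a careful case split on which side of the bipartition $v$ lives, but once this bookkeeping is done the rest of the argument is a direct analogue of the FPHP case.
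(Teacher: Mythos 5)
Your proposal is correct and follows essentially the same route as the paper's proof: the real axioms of $\PM$ are handled by showing $\Null{A} = \emptyset$, and each \fakeaxiom is bounded by restricting to the $\geq w_0/8$ \sheavy vertices in $\Vfat{A} \cap \Vp$ outside $\Vbad{A}$, applying properties \ref{item:graph}--\ref{item:clause-l} of \reflem{lem:partition} per vertex to get the factor $(1-16\diff)$, and multiplying by $\setsize{\Ax}$. The one wrinkle you rightly flag---that property~\ref{item:clause-l} gives an additive error $4\diff\GdegMax$ rather than $4\diff\deg[G]{v}$ for left vertices, so the "common bound" needs the boundary-expansion lower bound $\deg[G]{v}\geq(1-2\diff)\GdegMax$ to absorb the discrepancy into the constants---is glossed over in the paper's own computation as well, so it is a shared (and easily repaired) imprecision rather than a gap in your argument.
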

  
  Deferring the proof of this lemma for now, note 
  this implies that in the refutation
  $\refpi$ there is a resolution step deriving~$C$ from~$C_0$
  and~$C_1$ where the subspace of the resolvent is not contained
  in the span of the subspaces of the premises, or in other words
  $\linmapof{C} 
  \not\subseteq \spans{\linmapof{C_0}, \linmapof{C_1}}$.
  The following lemma, which is the heart of the argument, says that this
  cannot happen while the sets
  of relevant vertices of the clauses are small.

  \begin{lemma}\label{lem:onto_span}
    Let $C$ be derived from $C_0$ and $C_1$.
    If $
    \maxofexpr{
    \setsize{ \closure{C_0} \cup \Vthick{C_0} }, 
    \setsize{ \closure{C_1} \cup \Vthick{C_1} }, 
    \setsize{ \closure{C} \cup \Vthick{C} }
    } 
    \le r/4
    $,
    then
    $\linmapof{C} \subseteq \spans{\linmapof{C_0}, \linmapof{C_1}}.$
  \end{lemma}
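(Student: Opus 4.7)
My plan is to mirror the structure of the proof of \reflem{lem:fphp_span}, but with two crucial adaptations demanded by the perfect matching setting: I must handle heavy vertices in $\Vr$ (which do not carry any associated linear subspace) and I must deal with the fact that holes into which light pigeons of $\diffclosure \cap \Vl$ would be sent may already be occupied by the partial matching. Concretely, fix $\matcha \in Z(C)$. Write $S = \closure{C} \cup \Vthick{C}$, $S_{01} = (\closure{C_0} \cup \Vthick{C_0}) \cup (\closure{C_1} \cup \Vthick{C_1})$, and let $\matchb$ be the restriction of $\matcha$ to $S \cap S_{01}$. Since $\linmapof{\matcha} \subseteq \linmapof{\matchb}$ (extensions only shrink the associated subspace via~\refeq{eq:onto_lambda_def}), it suffices to show $\linmapof{\matchb} \subseteq \spans{\linmapof{C_0}, \linmapof{C_1}}$. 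Let $\diffclosure = S_{01} \setminus S$; it splits into $\diffclosure_L = \diffclosure \cap \Vl$ and $\diffclosure_R = \diffclosure \cap \Vr$. For every total extension $\matchc$ of $\matchb$ to $S_{01}$ that is still a matching and falsifies $C$, soundness of resolution forces $\matchc$ to falsify $C_0$ or $C_1$, so $\linmapof{\matchc} \subseteq \linmapof{C_0} \cup \linmapof{C_1}$. The goal thus becomes to show that the collection of such extensions is \emph{rich enough} that their span, projected onto $\bigtensor_{v \in \diffclosure_L \cap \Vp} \linspace[v]$, fills the whole factor.

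For the $\Vp$-part of $\diffclosure_L$, I intend to invoke \reflem{lem:closure_expansion} applied to the graph $\Gprime$ from \reflem{lem:expansion_Gprime}. Since $\setsize{S} \le r/4$ and $\setsize{\diffclosure_L} \le r/2$, the expander $\Gprime$ minus $\closure{C} \cup \neigh{\closure{C}}$ is still a $(1-20\diff)\gdegmax/2$-boundary expander on $\diffclosure_L$. Peeling off a boundary vertex at a time, each $v \in \diffclosure_L$ acquires $(1-20\diff)\gdegmax/2$ unique neighbours in $\Vr \cap \Vh$ that lie \emph{outside} the neighbourhood of $\closure{C}$. Since $v$ is light in $C$, at most $\thr{v} - \adv{v}$ of its neighbours satisfy $C$, so subtracting these and using $\adv{v} = \ontocdiff \diff \deg[G]{v}$ leaves at least $\ontodimLi = \dim(\linspace[v])$ hole-mappings for $v$ that keep $C$ falsified, giving the required span in $\linspace[v]$. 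Independence across different $v \in \diffclosure_L$ follows from the uniqueness of the neighbours peeled off in the inductive process.

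The \emph{main obstacle} and the place where I depart from the $\GFPHP$ proof is two-fold. First, these peeled-off holes in $\Vr \cap \Vh$ may already be occupied by edges of $\matchb$ to vertices in $S \cap S_{01} \cap \Vl$. Here I will adopt the ``lazy edge removal'' trick alluded to in \refsec{sec:pm}: whenever a needed hole is occupied by some edge of $\matchb$, temporarily delete that edge, extend as required, and re-add. The property~\ref{item:graph} of \reflem{lem:partition} gives that every pigeon retains roughly half its neighbours in $\Vh$, so the total budget of ``displacements'' is controlled by $\setsize{\closure{C}} \leq r/4$, and the expansion slack $(1-20\diff)\gdegmax/2$ versus the lossy-counting dimension $\ontodimLi$ absorbs the loss. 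Second, for vertices $v \in \diffclosure_R$ that are heavy for $C_0$ or $C_1$ but not for $C$, there is no linear space to fill; instead I will invoke a Razborov-style counting argument (as in~\cite{Razborov03ResolutionLowerBoundsWFPHP}), using property~\ref{item:clause-r} of \reflem{lem:partition} to show that $v$ has a large clause-neighbourhood in $\Vh$ so that matching $v$ via some unoccupied hole keeps $C_0$ or $C_1$ falsified, enumerating enough such choices to exhaust the combinatorial possibilities.

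Putting this together, for every $\matchc \in \Null{C_0} \cup \Null{C_1}$ extending $\matchb$ we have $\linmapof{\matchc}$ in the right span, and by the dimension argument above the collection of these $\matchc$'s generates $\linmapof{\matchb}$. Hence $\linmapof{C} \subseteq \linmapof{\matchb} \subseteq \spans{\linmapof{C_0}, \linmapof{C_1}}$, which is exactly what the lemma asserts. The technically most delicate step is the bookkeeping of the lazy edge removal while simultaneously counting over $\diffclosure_L$ and $\diffclosure_R$, since one must argue that the re-matched edges do not accidentally satisfy $C$ via a heavy $\Vr$-vertex--and this is where the bound $\setsize{\VbadThick{C}} \subseteq \closure{C}$ from the definition of closure, combined with property~\ref{item:badset}, becomes essential.
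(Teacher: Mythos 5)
Your plan mirrors the paper's proof in its overall architecture---same reduction to $\linmapof{\matchb} \subseteq \spans{\linmapof{C_0},\linmapof{C_1}}$, same use of \reflem{lem:closure_expansion} on $\Gprime$ with vertex peeling for $\diffclosure\cap\Vl$, same lazy edge removal, same split of $\diffclosure$ into its $\Vl$- and $\Vr$-parts---but it rests on a concrete misunderstanding of the construction. You assert that heavy vertices in $\Vr$ ``do not carry any associated linear subspace.'' That is false: the tensor product $\linspace = \bigtensor_{v\in\Vp}\linspace[v]$ ranges over the random partition class $\Vp$, which contains vertices from \emph{both} $\Vl$ and $\Vr$ (that is the whole point of Razborov's partition trick). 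Consequently, for each $v \in \diffclosure\cap\Vr\cap\Vp$ you must exhibit at least $\dim(\linspace[v]) = \ontodimLi$ distinct extensions matching $v$ without satisfying $C$, not merely ``some unoccupied hole''; the paper does exactly this using the minimum right-degree hypothesis $\deg[G]{v}\geq r/\diff$ together with property~\ref{item:clause-r} of \reflem{lem:partition} and the fact that at most $\setsize{S_{01}}\leq r/2$ neighbours are already used. Your ``enumerate enough choices'' step has no target dimension because you do not know these vertices have subspaces to fill.

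The same misunderstanding hides the second genuine difficulty: the matchings you construct must \emph{cover} all of $\closure{C_b}\cup\Vthick{C_b}$ in order to restrict into $Z(C_b)$, so the lazy removal must be shown never to unmatch a vertex of $\Vp\cap\Vr$ that $\matchb$ had matched (this is \refclaim{clm:dom_matchings}, which follows from bipartiteness since removed edges always have an endpoint in $\Vh\cap\Vr$), and never to disturb edges incident to $S\cap S_{01}\cap\Vl$ (\refclaim{clm:neighbours}). Your ``temporarily delete\dots and re-add'' description is incoherent as stated (re-adding the deleted edge would conflict with the new one), and your claim that the displacement ``budget'' is controlled by $\setsize{\closure{C}}\leq r/4$ is not the actual mechanism. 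Finally, since the span of a set of tensor products is not the tensor product of spans, recovering $\linmapof{\matchb}$ from the constructed matchings requires the equivalence-class induction of \refclaim{clm:span_matchings}: for each pigeon $u$ whose $\matchb$-edge was removed, one must show that, \emph{with all other coordinates fixed}, $u$ is re-matched in at least $\dim(\linspace[u])$ ways. You flag this as ``bookkeeping,'' but it is where the proof actually lives, and your outline gives no argument for it.
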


  Deferring the proof of Lemma~\ref{lem:onto_span} to Section~\ref{sec:onto_span}, 
  we proceed to show how \reflem{lem:onto_lower_bound} follows from 
  what we have established so far.

  \begin{proof}[Proof of \reflem{lem:onto_lower_bound}]
  \reflem{lem:onto_subspace} and \reflem{lem:onto_span} imply that
  contradiction cannot be derived while the set of relevant
  vertices is of size at most $r/4$ and hence
  any refutation~$\refpi$ must contain a clause~$C$
  with
  $\setsize{\closure{C} \cup \Vthick{C}} \geq r/4$.
  If for such a clauses $C$ it holds that
  $\setsize{\Vthick{C}} \ge r\diff$, then \reflem{lem:onto_lower_bound} follows.
  Otherwise,
  recall that 
  $\closure{C} = \closure[r,\nu]{\VbadThick{C}}$, 
  for $\nu =  (1 - 20\diff)\gdegmax/2$, and that
  $\Gprime$ is an
  $(r, \gdegmax', c)$-boundary 
  expander by Lemma~\ref{lem:expansion_Gprime}, 
  where $\gdegmax' = (1 + \GprimeConstantDeg \diff) \GdegMax/2$
  and $c = (1 - \GprimeConstantExp  \diff)\GdegMax/2$. Thus we can 
  apply
  Lemma~\ref{lem:closure_size} to $G'$ and get that
  $\setsize{\VbadThick{C}}
  \ge \min\set{r, (r/4 - r \diff)\cdot \left( c - \nu \right) / \gdegmax'} 
  \ge 3r\diff / 2 $.
As by definition $\VbadThick{C} = (\Vthick{C} \cap \Vl) \cup \Vbad{C}$
  and by property~\ref{item:badset} of Lemma~\ref{lem:partition} we have that
  $\setsize{ \Vbad{C} } \leq w_0/8$, we conclude that
  \begin{equation}
    \width{\refpi} \ge \setsize{\Vthick{C}} 
    \ge \setsize{\Vthick{C} \cap \Vl} 
    \ge \setsize{ \VbadThick{C} } - \setsize{ \Vbad{C} }
    \ge 3r\diff/2  - w_0/8 
    \ge r\diff \eqperiod
  \end{equation}
  This completes the proof of \reflem{lem:onto_lower_bound}.
  \end{proof}

  \begin{proof}[Proof of \reflem{lem:onto_subspace}]
    Suppose $A$ is a vertex axiom $P^v$ 
    or a functionality axiom $F^{v}_{w,w'}$ as in 
    \refeq{eq:axiom-pigeon}
    and \refeq{eq:axiom-functionality}.
    Observe that $v$ is a \heavy vertex for $A$.
    Clearly, there are no matchings on $v$ that do not
    satisfy $A$. We conclude that $\linmapof{A} = \emptyset$.
  
    Let us consider $A \in \Ax$. 
    These axioms may span a part of the space $\linspace$ but 
    the fraction of the space $\linspace$ they span 
    is sufficiently small. 
    We first estimate the dimension of $\linmapof{A}$.
    By definition $\Vbad{A} = \setdescr[:]{v \in \Vl}{
			\Abs{ \abs{ \neigh[A]{v} \cap \Vh } - 
			1/2\abs{ \neigh[A]{v} } } > 
			4 \diff \GdegMax} $
    and by property~\ref{item:badset} of \reflem{lem:partition}
    it holds that
    $\setsize{\Vbad{A}} \leq w_0/8$. We partition $\Vp$ into two sets 
    $U = \Vp \setminus \bigl( \Vthick{A} \setminus \Vbad{A} \bigr)$ and
    $W = \Vp \cap \bigl( \Vthick{A} \setminus \Vbad{A} \bigr)$.
    Note that all vertices $v \in W$ satisfy that
    $\Abs{ \abs{ \neigh[A]{v} \cap \Vh } - 
			1/2\abs{ \neigh[A]{v} } } \le
			4 \diff \GdegMax$.
    Using property~\ref{item:graph} of \reflem{lem:partition}
    we get that
    \begin{align}
      \dim \linmapof{A} 
      &\le \prod_{v \in U}
      \dim \linspace[v] 
      \cdot \prod_{v \in W} 
      \bigl(\abs{ \neigh[G]{v} \cap \Vh } 
      - \abs{ \neigh[A]{v} \cap \Vh } \bigr)
      \\
      &\le \prod_{v \in U} \dim \linspace[v] 
      \cdot \prod_{v \in W}
      \bigl(
      1/2\abs{ \neigh[G]{v} } + 4 \diff \abs{ \neigh[G]{v} } ~
      -  \nonumber\\
      &\qquad\qquad\qquad\qquad\qquad\qquad
      1/2\abs{ \neigh[A]{v} } + 4 \diff \abs{ \neigh[G]{v} }
      \bigr)
      \\
      &= \prod_{v \in U} \dim \linspace[v] 
      \cdot \prod_{v \in W} 
      \bigl(
      1/2\bigl(\abs{ \neigh[G]{v} } - \abs{ \neigh[A]{v} } \bigr) 
      + 
      8 \diff \abs{ \neigh[G]{v} } 
      \bigr)
      \\
      &\le \prod_{v \in U} \dim \linspace[v] 
      \cdot \prod_{v \in W}  
      \bigl(
      1/2(\deg[G]{v} - \thr{v}) 
      + 
      \adv{v}/8
      \bigr)
      \\
      &\le \prod_{v \in U} \dim \linspace[v] 
      \cdot \prod_{v \in W} 
      \bigl( \dim \linspace[v] - \adv{v}/8 \bigr) \eqcomma
    \end{align}
    where the second to last inequality follows from 
    the fact that $\adv{v} = \ontocdiff \diff \abs{ \neigh[G]{v} } $ 
    and the last inequality from the definition of $\dim \linspace[v]$.
    
    Note that by property~\ref{item:thick} of \reflem{lem:partition}, 
    $\setsize{\Vp \cap \Vthick{A}} \geq w_0/4$ and hence 
    $\setsize{W} \geq w_0/8$.
    We conclude that the fraction of the space $\linspace$ 
    that $A$ spans is bounded by
    \begin{align}
      \frac{\dim \linmapof{A} }{\dim \linspace} 
      \le \prod_{ v \in W } \frac{\dim \linspace[v] - 
      \adv{v}/8}{\dim \linspace[v]}
      \le \left( 1 - 16{\diff} \right)^{w_0/8} \eqperiod
    \end{align}
    Along with the assumption on $\setsize{\Ax}$, this shows that not
    all of $\linspace$ is spanned by the axioms.
  \end{proof}

\squeezesubsection{Proof of \reflem{lem:onto_span}}
\label{sec:onto_span}

\begin{figure}
  \begin{center}
    \includegraphics[width=\linewidth,height=\textheight,keepaspectratio]{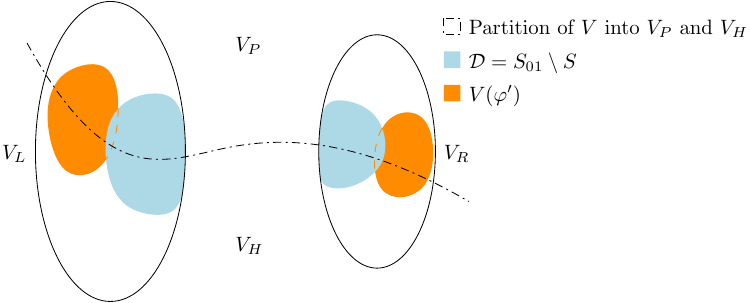}
    \caption{Depiction of relations between $\Vl, \Vr, \Vp, \Vh$ and the vertex sets in the proof of
    \reflem{lem:onto_span}}
    \label{fig:onto_span}
  \end{center}
\end{figure}

  For conciseness of notation, let us write 
  \begin{align}
    S_{01} =
    (\closure{C_0} \cup \closure{C_1}) \cup 
    (\Vthick{C_0} \cup \Vthick{C_1})
  \end{align}
  and 
  $S = \closure{C} \cup \Vthick{C}$.
  In order to establish \reflem{lem:onto_span}, we need to show
  for all $\matcha \in Z(C)$ that
  \begin{align}
    \linmapof{\matcha} \subseteq 
    \spans{ \linmapof{C_0}, \linmapof{C_1} } \eqperiod
  \end{align}
  Denote by $\matchb$ the restriction of $\matcha$ to the edges with at least
  one vertex in
  $S \cap S_{01}$ and note that $C$ is not satisfied 
  by the partial assignment associated with~$\matchb$. 
  Also, observe that  if a matching $\matche$ extends a matching $\matchf$, then 
  $\linmapof{\matche}$ is a subspace of $\linmapof{\matchf}$. This is so since
  for any vertex
  $v \in \Vp \cap \bigl( V(\matche) \setminus V(\matchf) \bigr)$
  we have  from~\refeq{eq:onto_lambda_def}
  that $\matchf$ picks up the whole subspace~$\linspace[v]$
  while $\matche$ only gets a single vector.
  Thus, if we can show that
  $\linmapof{\matchb} \subseteq \spans{ \linmapof{C_0}, \linmapof{C_1} }$,
  the statement follows since $\matcha$ extends $\matchb$ and hence 
  $\linmapof{\matcha} \subseteq \linmapof{\matchb}$. For the following
  argument it may be helpful to refer to \reffig{fig:onto_span}.
  
  Let $\diffclosure = S_{01} \setminus S$ and for a set of matchings
  $\mathcal{N} \subseteq \matchings$ let 
  $\linmapof{ \mathcal{N} } = 
  \spans{ \setdescr{ \linmapof{\matchc} }
  { \matchc \in \mathcal{N} } } $.
  In the following we show that there exists a set of matchings
  $\matchings[\diffclosure] \subseteq \matchings$ 
  that do not satisfy $C$, that cover $S_{01}$
  and such that
  \begin{equation} \label{eq:good_set}
    \linmapof{\matchb} \subseteq 
    \linmapof{ \matchings[\diffclosure] } \eqperiod
  \end{equation}

  Before arguing the existence of such a set
  $\matchings[\diffclosure]$ let us
  argue that this would imply the lemma.
  Observe that by soundness of resolution,
  no matching in $\matchings[\diffclosure]$ can satisfy 
  both $C_0$ and $C_1$ simultaneously. Fix $\matchc \in \matchings[\diffclosure]$. Without
  loss of generality, assume that $C_0$ is not satisfied. Denote
  by $\matchd \subseteq \matchc$ all edges in $\matchc$ with at least one
  vertex in 
  $\closure{C_0} \cup \Vthick{C_0}$. Clearly, $\matchd \in Z(C_0)$ and hence
  $\linmapof{\matchc} \subseteq \linmapof{\matchd} \subseteq \linmapof{C_0}$.
  Thus, for all matchings $\matchc \in \matchings[\diffclosure]$ we have that
  $\linmapof{\matchc} \subseteq \spans{\linmapof{C_0}, \linmapof{C_1} }$.
  Combining with \refeq{eq:good_set}, we get that
  \begin{equation}
    \linmapof{\matchb} \subseteq \linmapof{ \matchings[\diffclosure] } \subseteq 
    \spans{\linmapof{C_0}, \linmapof{C_1} }
  \end{equation}
  and hence the lemma follows.

  In the remainder, we show how to construct the set $\matchings[\diffclosure]$.
  Observe that all vertices $v \in \diffclosure$ are \light vertices of $C$. 
  Using property~\ref{item:clause-r}
  from \reflem{lem:partition} we get that for all $v_r \in \diffclosure \cap \Vr$
  there are at most
  \begin{equation}\label{eq:satC-r}
    	\Setsize{\neigh[C]{v_r} \cap \Vh } 
    	\le 1/2 
    	\Abs{ \neigh[C]{v_r} } + 4 \diff  \Abs{ \neigh[G]{v_r} }
    	\le 1/2 \left(\thr{v_r} - \adv{v_r} + 
    	8 \diff  \Abs{ \neigh[G]{v_r} } \right)
  \end{equation}
  mappings of $v_r$ to a vertex in $\neigh[G]{v_r} \cap \Vh$ 
  that satisfy the clause $C$.
  Similarly, using property~\ref{item:clause-l} from \reflem{lem:partition}
  and the fact that $\diffclosure \cap \Vbad{C} = \emptyset$
  we see that
  for all $v_\ell \in \diffclosure \cap \Vl$
  there are at most
  \begin{equation}\label{eq:satC-l}
        \Setsize{\neigh[C]{v_\ell} \cap \Vh } 
    	\le 1/2 
    	\Abs{ \neigh[C]{v_\ell} } + 4 \diff  \Abs{ \neigh[G]{v_\ell} }
    	\le 1/2 \left(\thr{v_\ell} - \adv{v_\ell} + 
    	8 \diff \gdegmax \right)
  \end{equation}
  mappings of $v_\ell$ to a vertex in $\neigh[G]{v_\ell} \cap \Vh$ 
  that satisfy the clause $C$.
  
  For a set of vertices
  $W \subseteq \Vp \cup \Vh$, let
  $
    \linspace[W] = 
    \bigtensor_{w \in W \cap \Vp}
    \linspace[w] $ 
  and 
  for a set $U \subseteq V(G)$
  let $\linmapl{U}$ be the projection of $\linmap$ to the space
  $\linspace[U]$ or in other words
  \begin{align}
    \linmaplof{U}{\matche} = 
    \bigtensor_{v \in V(\matche) \cap \Vp \cap U}
    \linmapof[v]{\matche_v} \tensor \bigtensor_{v \in (\Vp \cap U)
    \setminus V(\matche)} \linspace[v] \eqperiod
  \end{align}
  We extend the notation to sets of matchings as previously 
  for $\linmap$.
  In order to establish \refeq{eq:good_set}, we have to argue that 
  $\linmaplof{\diffclosure \setminus
    V(\matchb)}{\matchings[\diffclosure]}$ 
  spans the space 
  $\linspace[\diffclosure \setminus \vmatch{\matchb}]$.
  At this point, we deviate from the $\GFPHP$ proof.
  Note that we only have expansion for the vertices $\Vl$ into $\Vh$
  but $\diffclosure$ may also contain vertices from $\Vr$. Thus we
  cannot apply the argument from \refsec{sec:GFPHP} to all vertices. 

  Instead, we split the argument into 2 seperate parts.
  First, by an argument similar to the lower bound proof of the
  $\GFPHP$ formulas, we show that vertices in $\diffclosure \cap \Vl$ 
  can be matched in many ways. This will in particular imply that 
  $\linmaplof{(\diffclosure \cap \Vl) \setminus V(\matchb)}
  {\matchings[\diffclosure]}$ spans all of
  $\linspace[(\diffclosure \cap \Vl) \setminus \vmatch{\matchb}]$. 
  After that we consider the vertices in $\diffclosure \cap \Vr$. 
  As these
  vertices have very high degree, there are always enough
  neighbours they can be matched to and therefore
  $\linmaplof{(\diffclosure \cap \Vr) \setminus \vmatch{\matchb}}
  {\matchings[\diffclosure]}$ spans all of
  $\linspace[(\diffclosure \cap \Vr) \setminus \vmatch{\matchb}]$. 
  Note that this second argument is essentially the span argument from
  \cite{Razborov03ResolutionLowerBoundsWFPHP}.
  
  Consider the vertex set $\diffclosure \cap \Vl$. Note that 
  $\diffclosure \cap \Vl$ is completely
  outside the $\closure{C}$. Since, by assumption, the cardinality of $\closure{C}$ 
  is upper bounded by $r /
  4$ and $\abs{\diffclosure \cap \Vl} \le \abs{S_{01}} \le r / 2$, by 
  Lemma~\ref{lem:closure_expansion}
  we get that 
  \begin{equation}
  \begin{aligned}
    \abs{ 
    \uniqueNeigh[ {\Gprime \setminus 
    (\closure{C} \cup \neigh[\Gprime]{ {\closure{C}} })} ]
    {\diffclosure \cap \Vl} } 
    \ge 
    1/2(1 - \closureconstant \diff)
    \GdegMax \setsize{\diffclosure \cap \Vl}
    \eqperiod
  \end{aligned}
  \end{equation}
  By an averaging 
  argument, there is a $v \in
  \diffclosure \cap \Vl$ that has at least 
  $(1 - \closureconstant \diff)\GdegMax/2$ unique
  neighbours in 
  $\uniqueNeigh[ {\Gprime \setminus (\closure{C} \cup 
  \neigh[\Gprime]{\closure{C}})} ]
  {\diffclosure \cap \Vl}$.
  By iterating this argument on $(\diffclosure \cap \Vl) \setminus \set{ v }$
  we get a partition 
  $V_{v_1} \disjointunion V_{v_2} \ldots 
  \disjointunion V_{v_{\setsize{\diffclosure \cap \Vl}}}$
  of the neighbourhood $\diffclosure \cap \Vl$. The key properties
  of this partition are that every vertex 
  $v_\ell \in \diffclosure \cap \Vl$ can independently 
  be matched 
  to any vertex in $V_{v_\ell}$ and each set is of size at least
  $\setsize{V_{v_\ell}} \ge (1 - \closureconstant \diff)\GdegMax/2$.
  Using~\refeq{eq:satC-l}, we have that each vertex 
  $v_\ell \in \diffclosure\cap\Vl$ can be matched to at least
  \begin{equation}
  \begin{aligned}
  	1/2 (1 - \closureconstant \diff)\GdegMax - 
  	1/2 (\thr{v_\ell} - \adv{v_\ell} + 8 \diff \GdegMax)
  	&= 1/2 \left( \GdegMax - \thr{v_\ell} + \adv{v_\ell} - 
  	\closureconplus \diff \GdegMax \right)\\
  	&\ge 1/2 \left(\deg[G]{v_\ell} - \thr{v_\ell} + \adv{v_\ell}/2 \right)
  \end{aligned}
  \end{equation}
many vertices in $V_{v_\ell}$ without satisfying $C$. Denote these vertices
  by $V'_{v_\ell}$.
  As in section \refsec{sec:GFPHP}, we would like to conclude that
  every vertex has many choices of vertices it can independently be mapped to and
  therefore there are enough matchings to span the
  space $\linspace[\diffclosure \cap \Vl]$.
  Unfortunately this argument does
  not work since vertices in $V'_{v_\ell}$ can be matched in
  $\matchb$ and are hence not available to be matched to
  $v_\ell$, so there might be too few matchings of $v_\ell$ to span the whole space $\linspace[v_\ell]$.

  We could attempt to overcome this problem by removing all edges in
  $\matchb$ with a vertex in one of the sets $V'_{v_\ell}$.
  This allows us to independently match all the vertices in
  $\diffclosure \cap \Vl$ to sufficiently many
  neighbours.
  Regrettably, this edge removal strategy turns out to be too aggressive: it can
  occur that
  a vertex from $S_{01} \cap \Vr$,
  previously matched by $\matchb$,
  now has no neighbour available to be matched to.
  Fortunately, this only happens to vertices that were matched in
  $\matchb$. The solution that suggests itself is to remove
  edges from $\matchb$ in a ``lazy'' manner: only remove an edge $\set{u,v}$ from
  $\matchb$ when one of the vertices should be matched to some
  $v_\ell \in \Vl$. This ensures that no vertex in $\Vr$ that was previously
  matched by $\matchb$ is suddenly unmatched.
  This is the main idea of 
  Algorithm~\ref{alg:extendmatchings} which takes care of the necessary
  edge removals.

  Let
  $\matchings[\diffclosure \cap \Vl] = $ 
  \textsc{ExtendMatching}$(\diffclosure \cap \Vl, \matchb, 
  V'_{v_1}, \ldots, V'_{v_{\setsize{\diffclosure \cap \Vl}}})$. 
  Note that the
  algorithm terminates on this input as the sets
  $V'_{v_1}, V'_{v_2}, \ldots, V'_{v_{\setsize{\diffclosure \cap \Vl}}}$ are disjoint.
  Let us establish some claims regarding 
  $\matchings[\diffclosure \cap \Vl]$.

  \SetKwComment{Comment}{$\triangleright$\ }{}
  \begin{algorithm}[t]
  \caption{Extend Matching}
  \label{alg:extendmatchings}
    \SetKwFunction{extmatch}{ExtendedMatching}
  \SetKwProg{myproc}{procedure}{}{}
\myproc{\extmatch{$\setalg, \matchc, V_{v_1}, V_{v_2}, \ldots, 
    V_{v_{\setsize{\setalg}}}$} \Comment*[f]{extend $\matchc$ to domain $\setalg$}}{%
    \If{$\setalg \setminus V(\matchc) \neq \emptyset$ \Comment*[r]{still need to extend $\matchc$}}{
     $\matchings \gets \emptyset$\;
     $v_\ell \gets_{\text{any}} \setalg \setminus V(\matchc)$\;
       \For{$w \in V_{v_\ell}$ \Comment*[r]{$v_\ell$ can be matched to $w$}}{
           $\matchd \gets \matchc$\;
         \If{$\exists w'$ such that $\set{w, w'} \in \matchc$}{
              $\matchd \gets \matchd \setminus \set{w, w'}$ \Comment*[r]{remove $w$ from the matching}
          }
          $\matchd \gets \matchd \cup \set{v_\ell, w}$ \Comment*[r]{match $v_\ell$ to $w$}
          $\matchings = \matchings \cup\,$\extmatch{$\setalg,\matchd, V_{v_1}, V_{v_2}, \ldots,
          V_{v_{\setsize{\setalg}}}$}\;
    }
    \textbf{return} $\matchings$\;
  }
  \Else{
    \textbf{return} $\matchc$\;
    }
    }
\end{algorithm}

The first claim states that the algorithm cannot remove edges from
$\matchb$ with 
a vertex in $S \cap S_{01} \cap \Vl$.
This is important as we want
to get matchings that are defined on all of $S_{01} \cap \Vl$.
As the algorithm only tries to match vertices in 
$\diffclosure \cap \Vl = (S \setminus S_{01}) \cap \Vl$,
we must ensure that the edges in $\matchb$ with an endpoint in
$S \cap S_{01} \cap \Vl$ are not erased.
Note that all edges that are removed by the algorithm have an endpoint in the
neighbourhood of $\diffclosure \cap \Vl$. Hence it suffices to show
that the vertices from $S \cap S_{01} \cap \Vl$ are not matched to
a vertex in the neighbourhood of $\diffclosure \cap \Vl$.

\begin{claim}\label{clm:neighbours}
  The matching $\matchb$ contains no edge $\set{w, w'}$ such that
  $$w \in \neigh[
  {\Gprime \setminus (\closure{C} \cup \neigh[\Gprime]{\closure{C}})}]
  {\diffclosure \cap \Vl}$$
  and
  $w' \in S \cap S_{01} \cap \Vl$.
\end{claim}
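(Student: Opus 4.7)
The plan is a straightforward diagram chase through the definitions of $\closure{C}$, $S$, $\Gprime$, and the restricted matching $\matchb$, aiming to derive a contradiction from the assumption that such an edge $\set{w, w'}$ exists in $\matchb$.

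First, I would unpack what the hypotheses on $w$ and $w'$ force. Recall that $\Gprime = G \setminus (\Vr \cap \Vp)$ has bipartition $\Vl \disjointunion (\Vr \cap \Vh)$. Since $w$ is a neighbour in $\Gprime$ of a vertex in $\diffclosure \cap \Vl$, we have $w \in \Vr \cap \Vh$. As $G$ is bipartite, the edge $\set{w, w'} \in \matchb$ then forces $w' \in \Vl$. The crucial hypothesis is that $w \notin \neigh[\Gprime]{\closure{C}}$. Now $\closure{C} \subseteq \Vl$ by definition, so $\neigh[\Gprime]{\closure{C}} = \neigh[G]{\closure{C}} \cap (\Vr \cap \Vh)$; combined with $w \in \Vr \cap \Vh$, this yields $w \notin \neigh[G]{\closure{C}}$.

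Next I would observe that this rules out $w' \in \closure{C}$: if it were, then $w \in \neigh[G]{w'} \subseteq \neigh[G]{\closure{C}}$, contradicting what we just deduced. Now the goal is to show that $S \cap S_{01} \cap \Vl$ is entirely contained in $\closure{C}$, which is immediate from the definitions:
\begin{equation*}
    S \cap \Vl
    = (\closure{C} \cup \Vthick{C}) \cap \Vl
    = \closure{C} \cup (\Vthick{C} \cap \Vl)
    \subseteq \closure{C} \cup \VbadThick{C}
    = \closure{C} \eqcomma
\end{equation*}
where the last equality uses $\VbadThick{C} \subseteq \closure[r, (1-20\diff)\gdegmax/2]{\VbadThick{C}} = \closure{C}$ from \refdef{def:closure}. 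Hence $S \cap S_{01} \cap \Vl \subseteq \closure{C}$, so $w' \notin \closure{C}$ gives $w' \notin S \cap S_{01} \cap \Vl$, contradicting the hypothesis.

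There is no real obstacle here; the proof is essentially a chase through definitions. The only point that requires a moment of care is observing that the containment $\Vthick{C} \cap \Vl \subseteq \closure{C}$ is built into the very definition of $\closure{C}$ as the closure of $\VbadThick{C} = (\Vthick{C} \cap \Vl) \cup \Vbad{C}$, which is what makes the combinatorial part of the earlier expansion argument feed into this claim cleanly.
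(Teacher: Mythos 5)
Your proof is correct and is essentially the paper's own argument: the whole content is the containment $S\cap S_{01}\cap \Vl\subseteq \closure{C}$ (which holds because $\Vthick{C}\cap\Vl\subseteq\VbadThick{C}\subseteq\closure{C}$) together with the observation that a vertex of the pruned graph $\Gprime\setminus(\closure{C}\cup\neigh[\Gprime]{\closure{C}})$ cannot be a $\Gprime$-neighbour of $\closure{C}$. You merely run the contradiction in the contrapositive direction ($w\notin\neigh{\closure{C}}$ forces $w'\notin\closure{C}$), whereas the paper deduces $w\in\neigh[\Gprime]{\closure{C}}$ directly from $w'\in\closure{C}$; the two are the same diagram chase.
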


\begin{proof}
  Suppose there is an edge $\set{w, w'} \in \matcha$ for $w, w'$ as
  in the lemma statement. As 
  $S \cap S_{01} \cap \Vl\subseteq \closure{C}$, we see that 
  $w \in \neigh[\Gprime]{\closure{C}}$. 
  But this is a contradiction since $w$ is not in the graph
  $\Gprime \setminus (\closure{C} \cup \neigh[\Gprime]{\closure{C}})$. 
\end{proof}

Next, we consider edges in $\matchb$ with a vertex in the set $\Vp \cap \Vr$.
Observe that if the algorithm removed such an edge, then the linear
space associated with the new matching would differ from the original
space in a non-trivial way. Fortunately, this cannot happen. 

\begin{claim}\label{clm:dom_matchings}
  All matchings $\matchc \in \matchings[\diffclosure \cap \Vl]$ 
  cover the set
  $S_{01} \cap \Vl$
  and an edge $e \in \Vl \times (\Vp \cap \Vr)$ is contained in
  $\matchc$ if and only if it is contained in $\matchb$. Furthermore,
  if a vertex $v \in \Vr$ is matched in $\matchb$, then it is matched
  in every $\matchc \in \matchings[\diffclosure \cap \Vl]$.
\end{claim}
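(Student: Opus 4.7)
The plan is to prove the three assertions by carefully tracing what the algorithm can and cannot modify relative to~$\matchb$. The key observation that unlocks everything is that the algorithm only ever touches edges incident to the sets~$V_{v_\ell}$, and by construction each $V_{v_\ell} \subseteq \uniqueNeigh[\Gprime \setminus (\closure{C} \cup \neigh[\Gprime]{\closure{C}})]{\diffclosure \cap \Vl}$, which is a subset of $\Vr \cap \Vh$ (since $\Gprime$ has right vertex set $\Vr \setminus \Vp = \Vr \cap \Vh$).

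First I would verify that $S_{01} \cap \Vl$ is covered. Write $S_{01} \cap \Vl = (S \cap S_{01} \cap \Vl) \cup (\diffclosure \cap \Vl)$. Vertices in $\diffclosure \cap \Vl$ are matched by construction, since the algorithm only returns a matching once every vertex in its first argument is covered. For vertices in $S \cap S_{01} \cap \Vl$, recall that $\matcha \in Z(C)$ covers $S = \closure{C} \cup \Vthick{C}$, and $\matchb$ preserves exactly those edges of $\matcha$ with at least one endpoint in $S \cap S_{01}$, so these vertices start out matched in $\matchb$. By \refclaim{clm:neighbours}, no edge of $\matchb$ incident to such a vertex has its other endpoint in any~$V_{v_\ell}$, so the algorithm never removes such an edge.

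For the second assertion, I would observe that any edge the algorithm adds goes from $\diffclosure \cap \Vl$ into some~$V_{v_\ell} \subseteq \Vr \cap \Vh$, and any edge it deletes is one that was previously incident to some~$w \in V_{v_\ell} \subseteq \Vr \cap \Vh$. Since an edge $e \in \Vl \times (\Vp \cap \Vr)$ has its $\Vr$\nobreakdash-endpoint in~$\Vp$, disjoint from~$\Vh$, no such edge can ever be added or removed. Hence the set of edges in $\matchc \cap (\Vl \times (\Vp \cap \Vr))$ coincides with that of~$\matchb$.

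Finally, for the third assertion, fix $v \in \Vr$ matched in~$\matchb$. If $v \in \Vr \cap \Vp$, then $v$ is not a vertex of~$\Gprime$ and is thus never touched by the algorithm, so its incident edge in~$\matchb$ persists. If $v \in \Vr \cap \Vh$, its incident edge can only be deleted at an iteration where $v \in V_{v_\ell}$ for the current~$v_\ell$, but in that same iteration the algorithm immediately reattaches~$v$ via the new edge $\{v_\ell, v\}$; so $v$ remains covered afterwards. The main (minor) subtlety is just keeping track of this ``delete then re-attach'' step and making sure that subsequent iterations cannot then undo it, which follows because the sets $V_{v_1}, \ldots, V_{v_{\setsize{\diffclosure \cap \Vl}}}$ are pairwise disjoint, so~$v$ belongs to at most one of them.
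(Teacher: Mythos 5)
Your proof is correct and follows essentially the same route as the paper's: \refclaim{clm:neighbours} protects the edges of $\matchb$ covering $S \cap S_{01} \cap \Vl$, the observation that every edge the algorithm adds or deletes has an endpoint in $\Vh \cap \Vr$ (disjoint from $\Vp \cap \Vr$) handles the edges in $\Vl \times (\Vp \cap \Vr)$, and the lazy delete-and-reattach step together with the disjointness of the candidate sets gives the final assertion. The only quibble is notational — the sets passed to the algorithm are the pruned sets $V'_{v_\ell}$ rather than $V_{v_\ell}$ — but since $V'_{v_\ell} \subseteq V_{v_\ell} \subseteq \Vh \cap \Vr$ this does not affect the argument.
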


\begin{proof}
  By Claim~\ref{clm:neighbours}, 
  Algorithm~\ref{alg:extendmatchings} never removes edges from $\matchb$
  that are incident to
  a vertex in $S_{01} \cap S \cap \Vl$. As 
  $\matchb$ covers all of $S_{01} \cap S \cap \Vl$,
  it follows that every
  $\matchc \in \matchings[\diffclosure \cap \Vl]$ also covers the set
  $S \cap S_{01} \cap \Vl$. 
  Furthermore, the algorithm ensures that every 
  $\matchc \in \matchings[\diffclosure \cap \Vl]$
  covers the set $\diffclosure \cap \Vl = (S_{01} \setminus S) \cap
  \Vl$.
  Combining these statements we see that every matching
  $\matchc \in \matchings[\diffclosure \cap \Vl]$ covers
  $S_{01} \cap \Vl$.
  
  We observe that all edges in $\matchb$ that
  may be deleted by the algorithm must have an endpoint in one of the
  sets $V'_{v_\ell}$ and all these sets are contained in
  $\Vh \cap \Vr$. As the graph is bipartite (with bipartition
  $\Vl \disjointunion \Vr$) and the set $\matchings$ does not contain
  matchings with edges from $\Vp \times \Vp$, we see that vertices from
  $\Vp \cap \Vr$ can only be matched to vertices in
  $\Vh \cap\Vl$.
  Therefore the algorithm cannot change edges in $\matchb$ with
  an endpoint in $\Vp \cap \Vr$.
  This implies that if an edge $e \in \Vl \times (\Vp \cap \Vr)$ is in $\matchb$, then it is also in $\matchc$.
  For the other direction, observe that since the algorithm can only add
  edges to $\matchc$ with an endpoint in $\Vh \cap \Vr$, and since the graph
is bipartite, the algorithm does not add an edge from $ \Vl \times (\Vp \cap \Vr)$.

  Finally, the fact that all matched vertices $v \in \Vr$ in $\matchb$ are also matched
  in every $\matchc \in \matchings[\diffclosure \cap \Vl]$ 
  follows from the ``lazy'' removal of edges from $\matchb$.
\end{proof}

We can now show that our set of matchings spans the appropriate space
when projected to $\Vl$.
Note that for a matching $\matche$ it holds that
$\linmapof{\matche} = \linmaplof{U}{\matche} \tensor \linmaplof{\Vp
  \setminus U}{\matche}$ for any set  $U$, but the same does not hold
for sets of matchings: span does not commute with tensor.

\begin{claim}\label{clm:span_matchings}
  $
    \linmaplof{\Vl}{\matchb}
    \subseteq
    \linmaplof{\Vl}{\matchings[\diffclosure \cap \Vl]}
  $
\end{claim}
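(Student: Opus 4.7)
The plan is to identify the precise structural difference between $\matchb$ and an arbitrary $\matchc \in \matchings[\diffclosure \cap \Vl]$, and then carry out a tensor-decomposition/span argument, exploiting the fact that the sets $V'_{v_\ell}$ are pairwise disjoint, so that the algorithm's choices factor independently across vertices $v_\ell \in \diffclosure \cap \Vl$.

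First, I would describe $\matchings[\diffclosure \cap \Vl]$ explicitly. By inspection of Algorithm~\ref{alg:extendmatchings} and the disjointness of the sets $V'_{v_\ell}$, the set $\matchings[\diffclosure \cap \Vl]$ equals the collection of matchings obtained from $\matchb$ by picking, for each $v_\ell \in \diffclosure \cap \Vl$, a vertex $w_{v_\ell} \in V'_{v_\ell}$ and adding the edge $\set{v_\ell, w_{v_\ell}}$, first removing from $\matchb$ the edge through $w_{v_\ell}$ if one exists. Let $A = V(\matchb) \cap \Vp \cap \Vl$ and, for each $v_\ell$, let $A_{v_\ell} = \setdescr{u \in A}{\matchb_u \in V'_{v_\ell}}$. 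Claim~\ref{clm:neighbours} implies $A_{v_\ell} \cap S \cap S_{01} = \emptyset$, and the sets $A_{v_\ell}$ are pairwise disjoint since the $V'_{v_\ell}$ are.

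Second, I would write out the two spaces. Using \eqref{eq:onto_lambda_def}, $\linmaplof{\Vl}{\matchb}$ is the tensor $\bigtensor_{v \in A}\linmapof[v]{\matchb_v}\tensor \bigtensor_{v \in \Vp\cap \Vl\setminus A}\linspace[v]$, while $\linmaplof{\Vl}{\matchc}$ differs only at vertices appearing in $\diffclosure \cap \Vl \cap \Vp$ (which gain a vector $\linmapof[v_\ell]{w_{v_\ell}}$) and at vertices $u \in A_{v_\ell}\cap \Vp$ (which keep $\linmapof[u]{\matchb_u}$ if $w_{v_\ell}\neq \matchb_u$ and revert to $\linspace[u]$ if $w_{v_\ell}=\matchb_u$). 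All other factors are identical in $\matchb$ and $\matchc$, thanks to Claim~\ref{clm:dom_matchings}. So the claim reduces to a ``local'' statement for each $v_\ell$: letting $B_\ell = (A_{v_\ell}\cap \Vp) \cup (\set{v_\ell}\cap\Vp)$, the span of the local contributions over $w \in V'_{v_\ell}$ contains the $B_\ell$-part of $\linmaplof{\Vl}{\matchb}$, which is $\bigtensor_{u\in A_{v_\ell}\cap \Vp}\linmapof[u]{\matchb_u}$ (possibly tensored with $\linspace[v_\ell]$ if $v_\ell\in\Vp$).

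Third, I would establish this local span statement. The defining property of $\linmap[v_\ell]$, combined with $|V'_{v_\ell}| \geq \dim \linspace[v_\ell] = \ontodimLi$ (by construction of $V'_{v_\ell}$), gives that $\set{\linmapof[v_\ell]{w} : w \in V'_{v_\ell}}$ spans $\linspace[v_\ell]$ when $v_\ell \in \Vp$. The crucial observation is that since at most $|A_{v_\ell}|$ choices of $w$ are ``bad'' (one per $u \in A_{v_\ell}$), and for each such bad $w$ the contribution at the corresponding $u$ only grows from $\linmapof[u]{\matchb_u}$ to the full $\linspace[u]$, the vector $\bigtensor_{u\in A_{v_\ell}\cap \Vp}\linmapof[u]{\matchb_u}$ is always contained in $\bigtensor_{u \in A_{v_\ell} \cap \Vp} M_{w,u}$, where $M_{w,u}$ denotes $u$'s contribution for choice $w$. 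Hence every sum $\sum_{w \in V'_{v_\ell}} c_w \linmapof[v_\ell]{w} \tensor \bigtensor_{u \in A_{v_\ell}\cap \Vp} M_{w,u}$ contains $\bigl(\sum_w c_w \linmapof[v_\ell]{w}\bigr) \tensor \bigtensor_u \linmapof[u]{\matchb_u}$, and letting $\sum c_w \linmapof[v_\ell]{w}$ range over $\linspace[v_\ell]$ produces the desired space. Finally, I would tensor these local identities over all $v_\ell \in \diffclosure \cap \Vl$, using independence of the choices, to conclude.

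The main obstacle is this last span manipulation: even though each $u \in A_{v_\ell}$ individually always contributes a space containing $\linmapof[u]{\matchb_u}$, naively intersecting or summing the contributions across different choices of $w$ is not enough—one has to exploit the correlation between the $v_\ell$-factor and the $u$-factor inside a single term, so that a well-chosen linear combination of $\matchc$'s has its ``extra'' $\linspace[u]$-directions killed off, leaving precisely the $\matchb$-vector. The disjointness of the $A_{v_\ell}$'s is what makes this decouple across $v_\ell$ and lets the local argument globalize.
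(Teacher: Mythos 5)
Your core linear-algebra observation---each output matching's subspace contains a rank-one vector of the form $\linmapof[v_\ell]{w}\tensor\bigtensor_u\linmapof[u]{\matchb_u}$, and spanning over $w\in V'_{v_\ell}$ recovers the $\matchb$-factor because $\setsize{V'_{v_\ell}}\ge\dim\linspace[v_\ell]$---is the same mechanism the paper uses. But your combinatorial description of $\matchings[\diffclosure\cap\Vl]$ as a product of independent choices $w_{v_\ell}\in V'_{v_\ell}$ over all $v_\ell\in\diffclosure\cap\Vl$ is not what Algorithm~\ref{alg:extendmatchings} outputs, and the discrepancy is exactly where the difficulty of this claim lies. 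The algorithm assigns a partner from $V'_{v_\ell}$ only to vertices that are currently \emph{unmatched}. A vertex $u\in V(\matchb)\cap\diffclosure\cap\Vl$ keeps its edge $\set{u,\matchb_u}$ unless some other $v_\ell$ steals $\matchb_u\in V'_{v_\ell}$; and in that case $u$ does not ``revert to $\linspace[u]$'' but is \emph{re-matched}, so in any single output matching its factor is a single vector $\linmapof[u]{w'}$ with $w'\in V'_u$, which need not contain $\linmapof[u]{\matchb_u}$. Such configurations are not excluded: $V_{v_\ell}$ consists of unique neighbours of the \emph{residual} set after earlier vertices have been removed, so a hole in $V'_{v_\ell}$ may also be adjacent to, and matched by $\matchb$ to, an earlier $v_{\ell'}\in\diffclosure\cap\Vl$. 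This breaks both halves of your argument: the pointwise containment of the wanted rank-one piece fails for such an output matching, and the blocks $B_\ell$ are no longer disjoint (the stolen $u$ is itself some $v_{\ell'}$, so $B_{\ell'}\cap B_\ell\ni u$), so the concluding ``tensor the local identities over all $v_\ell$ using independence'' step does not go through.

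The argument can be rescued, but only by conditioning: fix the assignment of all other relevant vertices and observe that, within such a conditional family of output matchings, either some matching retains $\matchb_u$, or the edge at $u$ was removed and $u$ then ranges over all of $V'_u$, whose image spans $\linspace[u]\ni\linmapof[u]{\matchb_u}$. This is precisely what the paper's proof does via the equivalence classes $\set{\matchc}_V$ and the induction establishing \refeq{eq:d_def} through the dichotomy \refeq{eq:linspace_u_spanned}, after first disposing of the unconditionally re-matched vertices $(\diffclosure\cap\Vl)\setminus V(\matchb)$ as in \refeq{eq:d_not_def} and of $(V(\matchb)\cap\Vl)\setminus\diffclosure$ as in \refeq{eq:not_d}. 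Your proposal contains no substitute for this induction, so as written there is a genuine gap.
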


\begin{proof}
  Let us write
  \begin{align}
    \linmaplof{\Vl}{\matchb} 
    &=
    \linmaplof{V(\matchb) \cap \Vl}{\matchb} \tensor \linspace[\Vl
      \setminus V(\matchb)]\\
    &=
    \linmaplof{V(\matchb) \cap \diffclosure \cap \Vl}{\matchb} \tensor
      \linmaplof{(V(\matchb) \cap \Vl) \setminus
      \diffclosure}{\matchb} \tensor\nonumber\\
    &\qquad
      \linspace[(\diffclosure \cap \Vl) \setminus V(\matchb)] \tensor
      \linspace[\Vl \setminus (\diffclosure \cup V(\matchb))] 
      \eqperiod
  \end{align}
  Note that no matching $\matchc \in \matchings[\diffclosure \cap
  \Vl]$ covers any of the vertices in $\Vl \setminus (\diffclosure \cup V(\matchb))$.
  This holds as the algorithm can only add edges from the set
  $(\diffclosure \cap \Vl) \times (\Vh \cap \Vr)$.
  Hence we can write
  \begin{align}
    \linmaplof{\Vl}{\matchings[\diffclosure \cap \Vl]}
    &=
    \linmaplof{\Vl \cap (\diffclosure \cup V(\matchb))}
      {\matchings[\diffclosure \cap \Vl]} \tensor
      \linspace[\Vl \setminus (\diffclosure \cup V(\matchb))]
      \eqperiod
  \end{align}
  Thus we can ignore the space 
  $\linspace[\Vl \setminus (\diffclosure \cup V(\matchb))]$
  for the remainder of this argument.
  From the algorithm it should be evident that
  \begin{align}\label{eq:d_not_def}
    \linmaplof{(\diffclosure \cap \Vl) \setminus V(\matchb)}
      {\matchings[\diffclosure \cap \Vl]}
    &=
      \linspace[(\diffclosure \cap \Vl) \setminus V(\matchb)]
  \end{align}
  as every vertex in 
  $v \in (\diffclosure \cap \Vl) \setminus V(\matchb)$
  is independently matched to every vertex in $V'_{v}$ of size 
  $\setsize{V'_{v}} \ge \ontodimLi$. As the dimension of
  $\dim(\linspace[v]) = \ontodimLi$, we conclude that
  $\linspace[(\Vl \cap \diffclosure) \setminus V(\matchb)]$ is spanned.
  
  To continue the argument, we need the following equivalence relation
  on matchings. Two matchings
  $\matchc, \matchd \in \matchings[\diffclosure \cap \Vl]$ are
  equivalent on a vertex set $V$ if they match the vertices in $V$ in
  the same way, that is, for $v \in V$ we have that
  $\matchc_v = \matchd_v$. We denote the equivalence class with
  respect to the vertex set $V$ over
  $\matchings[\diffclosure \cap \Vl]$ of a matching
  $\matchc\in \matchings[\diffclosure \cap \Vl]$ by $\set{\matchc}_V$.

  We want to show that for every
  $\matchc \in \matchings[\diffclosure \cap \Vl]$ it holds that
  \begin{align}\label{eq:d_def}
    \linmaplof{V(\matchb) \cap \diffclosure \cap \Vl}{\matchb}
    &\subseteq
      \spans{\linmaplof{V(\matchb) \cap \diffclosure \cap \Vl}
      {\matchd} \mid \matchd \in
      \set{\matchc}_{(\diffclosure \cap \Vl) \setminus V(\matchb)}} \eqperiod
  \end{align}
  Note that in combination with \refeq{eq:d_not_def} we get that
  \begin{align} \label{eq:d}
    \linmaplof{\diffclosure \cap \Vl}{\matchb}
    &=
      \linmaplof{(\diffclosure \cap \Vl) \setminus V(\matchb)}
      {\matchb} \tensor
      \linmaplof{V(\matchb) \cap \diffclosure \cap \Vl}{\matchb}\\
    &=
      \linspace[(\diffclosure \cap \Vl) \setminus V(\matchb)] \tensor
      \linmaplof{V(\matchb) \cap \diffclosure \cap \Vl}{\matchb}\\
    &\subseteq
      \linmaplof{\diffclosure \cap \Vl}{
      \matchings[\diffclosure \cap \Vl]}
      \eqperiod
  \end{align}

  We prove \refeq{eq:d_def} by induction on subsets of
  $V(\matchb) \cap \diffclosure \cap \Vl$. The statement clearly holds
  for the empty set. 
  Fix $U \subseteq V(\matchb) \cap \diffclosure \cap \Vl$ and a vertex
  $u \in U$. By induction, we may assume that
  \begin{align}\label{eq:ih}
    \linmaplof{U \setminus \set{u}}{\matchb}
    &\subseteq
      \spans{
      \linmaplof{U
      \setminus \set{u}}
      {\matchd} \mid \matchd \in 
      \set{\matchc}_{(\diffclosure \cap \Vl) \setminus V(\matchb)}} 
      \eqperiod
  \end{align}
  We want to show that the statement also holds for the set $U$.
  Note that 
    $\linmaplof{U}{\matchb} 
    = 
    \linmaplof{U \setminus \set{u}}{\matchb} \tensor 
    \linmapof[u]{\matchb_u}$. Further,
  \begin{align}
    &\spans{\linmaplof{U}{\matchd}
      \mid \matchd \in \set{\matchc}_{(\diffclosure \cap \Vl)
      \setminus V(\matchb)}
      } =\\
    &\spans{\linmaplof{U \setminus \set{u}}{\matchd} ~\tensor
      \nonumber\\
    &\qquad
    \spans{\linmapof[u]{\matche} \mid \matche \in
      \set{\matchd}_{((\diffclosure \cap \Vl) \setminus
    V(\matchb)) \cup (U \setminus \set{u})}}
      \mid \matchd \in \set{\matchc}_{(\diffclosure \cap \Vl)
      \setminus V(\matchb)}
      } \eqperiod
  \end{align}
  Suppose that for every $\matchd \in \set{\matchc}_{(\Vl \cap \diffclosure)
      \setminus V(\matchb)}$ it holds that
  \begin{align}\label{eq:linspace_u_spanned}
    \linmapof[u]{\matchb_u} \subseteq \spans{\linmapof[u]{\matche_u} 
    \mid \matche \in \set{\matchd}
    _{((\diffclosure \cap \Vl) \setminus V(\matchb))\cup (U \setminus
    \set{u})}}
    \eqperiod
  \end{align}
  Then, continuing from above, we see that
  \begin{align}
    \spans{\linmaplof{U}{\matchd}
      \mid &\matchd \in \set{\matchc}_{(\diffclosure \cap \Vl)
      \setminus V(\matchb)}
      } \\
    &\supseteq
      \spans{\linmaplof{U \setminus \set{u}}{\matchd}
      \mid \matchd \in \set{\matchc}_{(\diffclosure \cap \Vl)
      \setminus V(\matchb)}
      }
      \tensor \linmapof[u]{\matchb_u}\\
    &\supseteq
      \linmaplof{U \setminus \set{u}}{\matchb} \tensor 
      \linmapof[u]{\matchb_u}\\
    &= \linmaplof{U}{\matchb} \eqcomma
  \end{align}
  where the second inclusion holds by the induction hypothesis
  \refeq{eq:ih}. 
  Thus, to show the statement for $U$ we just need to show 
  \refeq{eq:linspace_u_spanned}.
  To this end, fix a matching 
  ${\matchd \in \set{\matchc}_{(\diffclosure \cap \Vl) \setminus V(\matchb)}}$.
  Note that if there is a matching
  $\matche \in \set{\matchd}_{((\diffclosure \cap \Vl) \setminus
    V(\matchb)) \cup (U \setminus \set{u})}$ 
  such that $\matche_u = \matchb_u$, then we are done.
  Otherwise, Algorithm~\ref{alg:extendmatchings} removed the edge
  that mached the vertex $u$ in $\matchb$. Hence the vertex $u$ is
  matched by the procedure to at least $\setsize{V'_{u}} \ge \ontodimLi$
  different vertices. As the dimension of $\linspace[u] = \ontodimLi$,
  we see that all of the space is spanned. We conclude that
  \refeq{eq:linspace_u_spanned} holds.

  What remains is to argue that for every $\matchc \in
  \matchings[\diffclosure \cap \Vl]$ it holds that
  \begin{align}\label{eq:not_d}
    \linmaplof{(V(\matchb) \cap \Vl) \setminus \diffclosure}{\matchb}
    &\subseteq
      \spans{\linmaplof{(V(\matchb) \cap \Vl) \setminus \diffclosure}
      {\matchd} \mid \matchd \in
      \set{\matchc}_{\diffclosure \cap \Vl}} \eqperiod
  \end{align}
  The argument goes along the same lines as for the vertices in 
  $V(\matchb) \cap \diffclosure \cap \Vl$ and we thus omit it.

  We can then combine \refeq{eq:d} and \refeq{eq:not_d} to conclude
  the claim.
\end{proof}

  Observe that the matchings in $\matchings[\diffclosure \cap \Vl]$ are
  not necessarily extensions of $\matchb$. This is not a problem, however,
  since the matchings only
  differ in edges that contain vertices which either do not show up in the linear space or
  for which the whole linear space associated to the vertex is spanned. Furthermore, vertices from $\diffclosure \cap \Vh \cap \Vl$ are matched to
  many vertices
  even though a single vertex would have been sufficient.

  It remains only to show that every matching
  $\matchc \in \matchings[\diffclosure \cap \Vl]$ can be extended in
  many ways to the set $\diffclosure \cap \Vr$.
  Fix a matching 
  $\matchc \in \matchings[\diffclosure \cap \Vl]$ and recall
  that these are defined on $S_{01} \cap \Vl$.
  Note that by Lemma~\ref{lem:partition}, property~\ref{item:graph},
  each $v \in \diffclosure \cap \Vr$ has at least
  \begin{equation}
  \begin{aligned}
  	\Setsize{\neigh[G]{v} \cap \Vh} \ge 
  	1/2 \Setsize{ \neigh[G]{v} } - 4 \diff \Setsize{ \neigh[G]{v} } 
  	\label{eq:largeneighbourhood}
  \end{aligned}
  \end{equation}
  many neighbours in $\Vh$.
  Using~\refeq{eq:satC-r} we can now bound the number of matchings that do not 
  satisfy $C$.

  Note that the matching $\matchc$ contains at most 
  $\setsize{S_{01}} \leq r/2$ many edges. 
  Since $G$ is bipartite, this implies that for any $v \in \Vr$ at most 
  $r / 2$ neighbours are already matched.
  Observe that some vertex
  $v \in \diffclosure \cap \Vh \cap \Vr$ 
  may have been matched by Algorithm~\ref{alg:extendmatchings}. As
  these vertices are not associated with a linear space, we only need
  to match these vertices with a single vertex and hence we can just leave them
  matched as in $\matchc$.
  Further, by Claim~\ref{clm:dom_matchings},
  we see that the vertices in $\diffclosure \cap \Vp \cap \Vr$ were
  not matched by Algorithm~\ref{alg:extendmatchings}. All these will
  be matched in many ways as needed:
  If $v \in \diffclosure \cap \Vr$ is not matched by $\matchc$, then 
  by~\refeq{eq:largeneighbourhood} and~\refeq{eq:satC-r} it can 
  be matched to at least
  \begin{align}
  \nonumber
  	1/2 \big( \Setsize{ \neigh[G]{v} } - 
  	&8 \diff \Setsize{ \neigh[G]{v} }  - 
  	\thr{v} + \adv{v} - 8 \diff  \Abs{ \neigh[G]{v} } - r \big)\\
    \nonumber
  	&= 1/2 \left( \deg[G]{v} - \thr{v} + \adv{v} - 
  	16\diff\deg[G]{v} - r\right) \\
  	&\ge 1/2 \left( \deg[G]{v} - \thr{v} + \adv{v} - 
  	17\diff\deg[G]{v} \right)\label{eq:rsmall} \\
  	\nonumber
  	&\ge 1/2 \left( \deg[G]{v} - \thr{v} + \adv{v}/2\right)
  \end{align}
  many vertices without satisfying the clause $C$. 
  Note that in~\refeq{eq:rsmall} we used the assumption 
  that $\deg[G]{v} \ge r \diff$ for $v \in \Vr$. 
  As we have that
  $\dim(\linspace[v]) = \ontodimLi$, we conclude that
  the extensions of $\psi$ can span the linear space
  $\linspace[(\diffclosure \cap \Vr) \setminus \vmatch{\matchb}]$.
  Hence, by extending each $\matchc \in \matchings[\diffclosure \cap \Vl]$,
  we get a set of matchings $\matchings[\diffclosure]$, which do not satisfy the
  clause $C$, are defined on $S_{01}$ and 
  $\linmapof{\matchb} \subseteq \linmapof{\matchings[\diffclosure]}$.
  This establishes the lemma.

\section{Concluding Remarks}
\label{sec:conclusion}

In this work, we extend the \pseudowidth method developed by 
Razborov \cite{Razborov04ResolutionLowerBoundsPM,Razborov03ResolutionLowerBoundsWFPHP} 
for proving lower bounds on severely overconstrained CNF formulas in
resolution.
In particular, we establish that pigeonhole principle formulas and
perfect matching formulas over highly unbalanced bipartite graphs 
remain exponentially hard for resolution even when these graphs are sparse.
This resolves an open problem in~\cite{Razborov04ResolutionLowerBoundsPM}.

The main technical difference in our work compared to
\cite{Razborov04ResolutionLowerBoundsPM,Razborov03ResolutionLowerBoundsWFPHP} 
goes right to the heart of the proof, where one wants to argue that
resolution in small \pseudowidth cannot make progress towards a
derivation of contradiction. Here Razborov uses the global symmetry
properties of the formula, whereas we resort to a local argument based on
graph expansion. This argument needs to be carefully combined with a
graph closure operation as 
in~\cite{ABRW04Pseudorandom,AR03LowerBounds} to ensure that the
residual graph always remains expanding as matched pigeons and their
neighbouring holes are removed. It is this change of perspective that
allows us to prove lower bounds for sparse bipartite graphs
with the size~$m$ of the left-hand side (\ie the number of pigeons)
varying all the way from linear to exponential in the size~$n$ of the
right-hand size (\ie the number of pigeonholes),
thus covering the full range between
\cite{BW01ShortProofs} on the one hand and
\cite{Raz04Resolution,Razborov04ResolutionLowerBoundsPM,Razborov03ResolutionLowerBoundsWFPHP}
on the other.

One shortcoming of our approach is that the sparse expander graphs are
required to have very good expansion---for graphs of left
degree~$\gdegmax$, the size of the set of unique neighbours of any not
too large left vertex set has to scale like
$(1 - \littleoh{1})\gdegmax$.  We would like to prove that graph PHP
formulas are hard also for graphs with constant expansion
$(1 - \varepsilon)\gdegmax$ for some $\varepsilon > 0$, but there
appear to be fundamental technical barriers to extending our lower
bound proof to this setting---in particular, since we can only obtain
a very small gap in the filter lemma.

Another intriguing problem left over
from~\cite{Razborov04ResolutionLowerBoundsPM} 
is to determine the true resolution complexity of weak PHP formulas
over complete bipartite graphs
$K_{m,n}$ as $m \to \infty$.
The best known upper  bound from~\cite{BP97Weak}
is
$\exp \bigr( \Bigoh{\sqrt{n \log n}} \bigr)$,
whereas the lower bound 
in~\cite{Razborov04ResolutionLowerBoundsPM,Razborov03ResolutionLowerBoundsWFPHP}
is
$\exp \bigl( \Bigomega{\sqrt[3]{n}} \bigr)$.
It does not seem unreasonable to hypothesize that 
$\exp \bigl( \Bigomega{\sqrt[2]{n}} \bigr)$
should be the correct lower bound (ignoring lower-order terms), but
establishing such a lower bound again appears to require substantial
new ideas.

We believe that one of the main contributions of our work is that it
again demonstrates the power of Razborov's \pseudowidth method, and we
are currently optimistic that it could be useful for solving other
open problems for resolution and other proof systems. Unfortunately
\pseudowidth is a quite difficult method to apply; we leave it as an
open problem to obtain simplifications of the current approach.

For resolution, an interesting question mentioned
in~\cite{Razborov04ResolutionLowerBoundsPM} is whether \pseudowidth
can be useful to prove lower bounds for formulas that encode the
Nisan--Wigderson generator~\cite{ABRW04Pseudorandom,Razborov15PseudorandomGeneratorsHard}.
Since the clauses in such formulas encode local constraints, we hope
that techniques from our paper could 
be helpful.
Another long-standing open problem is to prove lower bounds on proofs
in resolution that
\mbox{$k$-clique} free sparse graph
do not contain
\mbox{$k$-cliques}, where the expected length lower bound would be
$n^{\bigomega{k}}$.
Here we only know weakly exponential lower bounds for quite dense
random  graphs~\cite{BIS07IndependentSets,Pang21},
although an asymptotically optimal
$n^{\bigomega{k}}$ lower bound 
has been
established in the sparse
regime for the restricted subsystem of regular 
resolution~\cite{ABdRLNR18Clique}.

Finally, we want to highlight that for the stronger proof system
\introduceterm{polynomial calculus}
\cite{ABRW02SpaceComplexity,CEI96Groebner}
no lower bounds on proof size are known for PHP formulas with
$m \geq n^2$ pigeons. It would be very interesting if some kind of
``pseudo-degree'' method could be developed that would finally lead to
progress on this problem.

\section*{Acknowledgements}

First and foremost, we are most grateful to Alexander Razborov for
many discussions about \pseudowidth, graph closure, and other
mysteries of the universe. We also thank Paul Beame and Johan Håstad
for useful discussions, and Jonah Brown-Cohen for helpful references
on expander graphs. We gratefully acknowledge the feedback from
participants of the Dagstuhl workshop 19121 \emph{Computational
  Complexity of Discrete Problems} in March 2019 and would like to
thank the referees for their thorough reviews.

\printbibliography

\end{document}